\newcommand{\cmark}{\ding{51}}
\newcommand{\xmark}{\ding{55}}
\newcommand{\nord}[1]{{:}#1{:}}
\newcommand{\supp}{\mathrm{supp}}
\newcommand{\be}{\begin{equation}}
	\newcommand{\ee}{\end{equation}}
\newcommand{\bea}{\begin{eqnarray}}
	\newcommand{\eea}{\end{eqnarray}}
\newtheorem{definition}{Definition}[section]
\newtheorem{proposition}[definition]{Proposition}
\newtheorem{theorem}[definition]{Theorem}
\newtheorem{theoremtemplate}[definition]{Theorem Template}
\newtheorem{conjecture}[definition]{Conjecture}
\newenvironment{proof*}{\smallskip\par\noindent\emph{Proof. }
	\ignorespaces}{\hfill$\Box$\smallskip\par\ignorespaces}
\newenvironment{proofsketch*}{\smallskip\par\noindent
	\emph{Sketch of proof. }\ignorespaces}
{\hfill$\oslash$\smallskip\par\ignorespaces}
\begin{document}

\title{Energy conditions in general relativity and quantum field theory}

\author{Eleni-Alexandra Kontou\thanks{\tt elenikontou@cosmos.phy.tufts.edu, ekontou@holycross.edu} }
\affil{Department of Mathematics, University of York, Heslington, York YO10 5DD, United Kingdom}
\affil{Department of Physics, College of the Holy Cross, Worcester, Massachusetts 01610, USA}
\author{Ko Sanders \thanks{\tt jacobus.sanders@dcu.ie}}
\affil{Dublin City University, School of Mathematical Sciences and Centre for Astrophysics and Relativity, Glasnevin, Dublin 9, Ireland}
\date{\small\today}
\maketitle

\begin{abstract}
	This review summarizes the current status of the energy conditions in general relativity and quantum field theory. We provide a historical review and a summary of technical results and applications, complemented with a few new derivations and discussions. We pay special attention to the role of the equations of motion and to the relation between classical and quantum theories.
	
	Pointwise energy conditions were first introduced as physically reasonable restrictions on matter in the context of general relativity. They aim to express e.g.~the positivity of mass or the attractiveness of gravity. Perhaps more importantly, they have been used as assumptions in mathematical relativity to prove singularity theorems and the non-existence of wormholes and similar exotic phenomena. However, the delicate balance between conceptual simplicity, general validity and strong results has faced serious challenges, because all pointwise energy conditions are systematically violated by quantum fields and also by some rather simple classical fields. In response to these challenges, weaker statements were introduced, such as quantum energy inequalities and averaged energy conditions. These have a larger range of validity and may still suffice to prove at least some of the earlier results. One of these conditions, the achronal averaged null energy condition, has recently received increased attention. It is expected to be a universal property of the dynamics of all gravitating physical matter, even in the context of semiclassical or quantum gravity.
\end{abstract}

\pagebreak

\tableofcontents

\pagebreak

\section{Introduction}

In general relativity, according to the famous quote by Wheeler, ``Space tells matter how to move. Matter tells space how to curve" (\cite{MTW} Sec.~1.1). This mutual influence is encoded in the combination of Einstein's equation and the dynamical equations of the matter that is present. If one focuses solely on Einstein's equation, however, without imposing any restrictions on the matter, any Lorentzian metric field on any manifold can be regarded as a solution. This may lead to surprising phenomena, such as wormholes, superluminal travel and closed timelike curves or other causality violations. The fact that these phenomena have never been observed then requires an explanation. The most common explanation is that such spacetimes typically require the presence of matter fields with exotic properties, such as negative energy densities.

In the context of these exotic properties an important role is played by \emph{energy conditions}. These are certain pointwise restrictions on the stress-energy-momentum tensor of the matter (or stress tensor for short), whose purpose is three-fold. Firstly, because Einstein's equation involves no other properties of matter than its stress tensor, energy conditions allow us to analyse the behaviour of gravitating systems without the need to specify the detailed behaviour of the matter. This method of bypassing a complicated, detailed analysis was the key step that allowed Penrose and Hawking to prove their singularity theorems
\cite{Penrose_prl:1965,Hawking:1966sx}. Secondly, energy conditions aim to capture common features of many different kinds of matter, thereby encoding an idea of ``normal matter'' that should be valid quite generally. Thirdly, the energy conditions aim for a conceptually simple characterization. The positivity of the energy density, e.g., may be related to the stability of the system, at least in the naive sense that systems in classical mechanics are stable when the energy is bounded from below.

The balance between these three purposes has been a persistent cause of tension. One would like to have an energy condition that is weak enough to be valid for as many types of matter as possible, ideally including all observed kinds of matter and possibly also other ``physically reasonable'' matter. At the same time, the condition should be strong enough to have physically interesting consequences in the context of general relativity, e.g.~singularity theorems, the Area Theorem and Black Hole Topology Theorem, the Chronology Protection Conjecture or others. Balancing these two purposes is difficult enough by itself, but it gets even more complicated by the desire to maintain conceptual simplicity without getting lost in technical, mathematical conditions.

As a result of this tension, a variety of energy conditions exist, each with their own notable strengths and weaknesses regarding their range of validity, important consequences and interpretation. The lack of a single preferred energy condition is an important point of criticism \cite{Barcelo:2002bv,Curiel:2017}, especially when the desire to prove strong results leads physicists to invoke energy conditions that appear ad hoc and tailored towards specific proofs. Historically, the first energy conditions were arguably introduced in such an ad hoc way \cite{Penrose_prl:1965,Hawking:1966sx}. Even stronger criticisms are raised against the supposed general validity of energy conditions \cite{Visser:1999de,Barcelo:2002bv}. The four pointwise energy conditions that are most widely used all admit counter-examples in the form of rather innocent looking classical scalar fields with non-minimal scalar curvature coupling. Although some counter-examples in the literature can be dismissed as unphysical, e.g.~because the field configurations involved do not satisfy Einstein's equation, the problems do persist also for quite reasonable looking situations. Moreover, all pointwise energy conditions are necessarily violated by essentially any quantum field theory \cite{Epstein:1965zza}. Finally, a criticism that has been raised at the conceptual level is the lack of a convincing derivation of energy conditions from deeper principles \cite{Curiel:2017}. E.g., energy conditions seem by no means a necessary condition for a classical field theory to have a well-posed initial value formulation, even though they may enter as a useful method in the proof of such a property \cite{Ringstrom}. Attempted derivations from the Raychaudhuri equation are not convincing and we refer to Sec.~2.3 of \cite{Curiel:2017} for a detailed criticism. A more plausible idea is to derive the energy conditions from global stability properties, but a recent attempt to do so requires the addition of an ``improvement term'' which seems unphysical and not covariant \cite{PhysRevLett.118.151601}.\footnote{For a massive minimally coupled scalar field in two-dimensional Minkowski space the procedure of \cite{PhysRevLett.118.151601} yields a non-zero improvement term, although none is needed (cf.~Sec.~\ref{sub:examples} below). For non-minimally coupled scalar fields the improvement also removes the effect of the non-minimal coupling from the energy density. In the massive case there seems to be no covariant ``improved stress tensor'' which corresponds to these improved energy densities. In the massless case there is, but for non-minimal coupling it no longer represents the physical stress, energy and momentum of the field and the corresponding tensor in curved spacetimes is no longer conserved.}

\medskip

Although quantum fields must violate all pointwise energy conditions, they do often satisfy \emph{quantum energy inequalities} \footnote{The term ``quantum inequalities" is also used in the literature, but it can apply to other quantities than energy densities. In this review we use the more accurate term ``quantum energy inequalities"}. These are lower bounds on a weighted average of components of the expectation value of the renormalized stress tensor \cite{Ford:1978qya,Fewster:2012yh}. The average is usually taken along a timelike curve, which suffices to obtain a finite, but possibly negative, lower bound. The validity of a QEI shows that violations of some of the classical energy conditions must be restricted in duration and magnitude. For this reason they are widely regarded as an appropriate rebuttal to those critics who invoke the quantum violations to call the usefulness of the classical energy conditions into question.

We will consider QEIs in the light of the same three purposes as for the pointwise energy conditions. The bounds that QEIs impose on the (expected, renormalized) stress tensor are weaker than the pointwise energy conditions. Although the derivation of e.g.~a singularity theorem from a QEI has yet to be achieved, there is good evidence that they will suffice to prove interesting geometric consequences, see e.g.~\cite{Fewster:2010gm}. However, QEIs are not formulated as a single general condition, but rather as lower bounds that depend on the theory under consideration. This may be one reason why the term ``quantum energy conditions'', instead of QEIs, is not in common use in the literature. Another reason for this discrepancy in terminology appears to be that the early literature on QEIs focused on their role as results that are to be proved, unlike the pointwise energy conditions, which first appeared as assumptions used to prove singularity theorems. In static spacetimes the validity of a QEI is closely related to the stability and thermodynamics of the system in question \cite{FewsterVerch2003} and we note that the presence of a lower bound for the energy density seems a more natural stability condition than insisting on the lower bound zero, which occurs in the pointwise energy conditions. This provides a conceptually clear motivation to consider QEIs. However, we note that the lower bound need not converge to zero in the long time limit, with the Casimir effect providing a particularly interesting counter-example \cite{Casimir:1948dh,PhysRevD.51.4277,Fewster:2006uf}. The relation between QEIs and time-energy uncertainty relations, which is often alluded to in the literature, is potentially misleading and we stress that the derivation of QEIs neither assumes nor proves any time-energy uncertainty relation.

QEIs are difficult to formulate rigorously for interacting quantum fields in general curved spacetimes, due to the difficulties in defining a renormalized stress tensor. The study of free quantum fields, however, is not problematic. In general, when a classical free field satisfies a certain energy condition, one expects that the corresponding quantized field satisfies a corresponding QEI. This relationship seems to be borne out in the literature on the topic, using case by case investigations. To complement this work we will provide below an elementary argument in the opposite direction (see Sec.~\ref{subs:claslimits} for details): when a quantum scalar field in a general curved spacetime satisfies a QEI for a suitably large class of states (e.g.~Hadamard states), then the corresponding classical field must satisfy the corresponding classical energy condition. A similar argument for fields in Minkowski space can be found in \cite{Fewster:2007ec,Fewster:2018pey}. Without aiming for the highest level of generality, these results suggest that classical energy conditions can be derived from QEIs. This also provides one possible explanation why the lower bound in classical energy conditions is zero and not negative (for an earlier, entirely classical explanation see \cite{Tipler:1978zz}).

\medskip

\emph{Averaged energy conditions} take an intermediate position between the pointwise energy conditions and QEIs. Like the latter, they average components of the stress tensor along suitable causal curves, but they do insist on a lower bound zero, just like the former. In this way they try to combine the best of both sides, leading to clearly formulated conditions which are weaker than pointwise conditions, but which may still be strong enough to prove interesting consequences. The condition which comes closest to being generally valid is the \emph{achronal averaged null energy condition} (AANEC), see Sec.~\ref{subs:AANEC} for details. It requires that the average along a complete, achronal null geodesic of the stress tensor, projected onto the geodesic, is non-negative. This condition is weaker than all pointwise energy conditions and there are good reasons to believe that it is also satisfied by quantum fields under reasonable circumstances. Indeed, all known counter-examples in Minkowski space, and most counter-examples in curved spacetimes, seem to have properties that cast doubt on their physical feasibility, such as: (i) one or more of the equations of motion, including Einstein's equation, is violated, (ii) there is an effective Newton's constant which has the ``wrong'' sign, or (iii) the violation has a transversal size of the order of the Planck length, where the validity of the theory is questionable.

The interpretation of the AANEC is that the total energy flux through an achronal null geodesic should be non-negative
\cite{Wald:1991xn}. The wide range of validity makes this condition a very interesting subject to study. In two-dimensional Minkowski space there is a general proof of the AANEC for all quantum fields with a mass gap \cite{Verch:1999nt} and the recent literature in higher dimensional Minkowski space has suggested relationships to several other topics of current interest, such as black hole entropy \cite{Wall:2009wi} and a quantum null energy condition \cite{Ceyhan:2018zfg}. If the AANEC is indeed ``universally'' valid (in some appropriate sense), also in curved spacetimes in the context of semiclassical gravity, then one would hope that there is a deeper reason for this, which should be provided by an underlying quantum theory of gravity. In this sense the AANEC can potentially open a window onto the quantum properties of gravity.

In this regard it may be noteworthy that the pointwise and averaged energy conditions are independent of the time-orientation of spacetime. The same is true for QEIs, even though the time-orientation enters there also through the sign in the commutation relations (expressed e.g.~in the direction of the singularities in two-point distributions) and not only through the direction of the causal curve over which we average. This independence of the time-orientation suggests that these conditions and inequalities are related to microscopic physics, unlike, e.g., thermodynamic entropy. In this context the recent derivation of the AANEC from the generalized second law is paradoxical at least \cite{Wall:2009wi}.\footnote{We thank Erik Curiel for pointing this paradox out to us.}

\medskip

The potential to shed light on quantum gravity is a good motivation that will drive future research on the energy conditions and their quantum counterparts. Particular questions of interest include the general validity, or otherwise, of the AANEC in higher dimensional Minkowski space and in curved spacetimes, the physical implications of the AANEC and other weak energy conditions in the context of singularity theorems and black hole thermodynamics, and the derivation of classical energy conditions from their quantum counterparts for (perturbatively) interacting quantum fields.

\medskip

We will begin our review below with a discussion of the pointwise energy conditions in Section \ref{sec:pointwise}. In addition to a discussion of examples and counter-examples, this will also include in Section \ref{subs:onshell} a discussion of the role that Einstein's equation and other equations of motion play in the validity of such conditions --- a topic that is sometimes disregarded. 
In Section \ref{sec:QEI} we explain why all pointwise energy conditions are violated by quantum fields and we introduce and categorize QEIs. After providing some examples we comment on their interpretation in
Section \ref{subs:uncertainty}. The relation between classical energy conditions and QEIs is the topic of Section \ref{sec:classvsquant}. Besides a discussion of quantization and classical limits, this section discusses averaged energy conditions and derivations of in particular the AANEC in quantum field theory. 
Various consequences of the energy conditions or QEIs are discussed in Section \ref{sec:applications}, although several of these applications would warrant an extensive review in their own right. Finally, in Section \ref{sec:outlook} we 
draw some conclusions and give an outlook on the future of the energy conditions in quantum field theory and gravity.

For further reading on the energy conditions we recommend the review of Curiel \cite{Curiel:2017} and the paper of Barcel\'o and Visser \cite{Barcelo:2002bv}, which, however, takes a less optimistic view than we do. Fewster has written a nice review of QEIs \cite{Fewster:2012yh}. The singularity theorems were extensively reviewed by Senovilla \cite{Senovilla:2014} and Witten wrote a nice introduction to the key concepts of these theorems \cite{Witten:2019qhl}. The connection between the energy conditions and wormholes, time-machines and other exotic phenomena is explained in the collection edited by Lobo
\cite{lobo2017wormholes}, in the book by Visser \cite{visser1996lorentzian} and in the short review paper by Hiscock \cite{hiscock2002wormholes}. Books by Thorne \cite{thorne1995black} and by Everett and Roman \cite{everett2012time} cover similar topics in an accessible way for a general audience. As standard references on general relativity we suggest \cite{wald1984general,MTW,HawkingEllis:1973,weinberg1972gravitation,carroll2019spacetime} and some less standard topics can be found in \cite{ONeill,penrose1984spinors,tolman1987relativity}.

\section{Pointwise energy conditions}
\label{sec:pointwise}

Throughout this review we will use Planck units, so in particular $c=G=\hbar=1$.
We will consider classical Lagrangian field theories on an orientable $n$-dimensional manifold $\mathcal{M}$, where one of the fields is the Lorentzian metric $g_{ab}$. We call the pair $M=(\mathcal{M},g_{ab})$ a spacetime, where we use the $+++$ sign convention of \cite{MTW} and the abstract index notation conventions of \cite{wald1984general,penrose1984spinors}. All Cauchy surfaces are assumed smooth and spacelike unless stated otherwise. Fields other than $g_{ab}$ will be denoted generically by $\Phi$ and the theory is determined by an action of the form 
\bea
S[g,\Phi]&=&S_{EH}[g]-S_M[g,\Phi]\,,\nonumber\\
S_{EH}[g]&=&\frac{1}{16\pi}\int_{\mathcal{M}}\ \mathrm{d}vol_g \, R\,,\label{eqn:action}\\
S_M[g,\Phi]&=&\int_{\mathcal{M}}\  \mathrm{d}vol_g\, \mathcal{L}_{\Phi}[g,\Phi] \,.\nonumber
\eea
Here $S_{EH}$ is the Einstein-Hilbert action for general relativity, where $R$ is the Ricci curvature scalar and $\mathrm{d}vol_g$ is the metric volume form. For simplicity we will not include a cosmological constant or modifications to general relativity. The action $S_M$ for the matter fields involves a Lagrangian density $\mathcal{L}_{\Phi}$ and we will often focus on specific examples, such as scalar fields, the vector potential of electromagnetism or Dirac fields.

Varying w.r.t.~the metric $g^{ab}$ we find Einstein's equation of motion,
\be
\label{eqn:EE}
G_{ab}=8\pi T_{ab}\,,
\ee
where $G_{ab} :=R_{ab}-\frac12 g_{ab} R$ is the Einstein tensor, $R_{ab}$ the Ricci tensor, and
\be
\label{eqn:stresstensor}
T_{ab}:=\frac{2}{\sqrt{|\det g|}}\frac{\delta}{\delta g^{ab}}S_M[g,\Phi]
\ee
is the stress-energy-momentum tensor, or \emph{stress tensor} for short. Varying w.r.t.~the fields $\Phi$ yields additional equations of motion that couple the fields $\Phi$ to the metric $g_{ab}$.

Phenomenological descriptions in cosmology often directly prescribe a form for the stress tensor, rather than a Lagrangian density. One simple and useful example is that of a perfect fluid, whose stress tensor is of the form
\be
\label{eqn:Tperfectfluid}
T^{ab}=(\rho+P) v^a v^b +P g^{ab} \,,
\ee
where $\rho$ is the energy density, $P$ the pressure and $v^a$ is the fluid's four-velocity vector field, satisfying $v^av_a=-1$. In this case no separate equation of motion is specified for the fluid, but Einstein's equation and the second Bianchi identity do entail that the stress tensor is divergence free, $\nabla^aT_{ab}=\frac{1}{8\pi}\nabla^aG_{ab}=0$. (A more general Ansatz for the stress tensor is the Type I form in the Segre classification, cf.~\cite{HawkingEllis:1973}.
We refer to \cite{MaedaMartinez2018} for a discussion of energy conditions using the Segre classification in arbitrary dimensions.)

One of the simplest examples of a Lagrangian field theory, which we will use repeatedly to illustrate various energy conditions, is the real linear scalar field $\phi$ with Lagrangian density 
\be
\label{eqn:lagrangian}
\mathcal{L}_{\mathrm{lsc}}[\phi]=\frac{1}{2} [(\nabla \phi)^2+(m^2+\xi R)\phi^2 ] \,,
\ee
where $m\ge0$ is the field mass and $\xi\in\mathbb{R}$ the scalar curvature coupling constant. The field equation is the Klein-Gordon equation
\be \label{eqn:field}
(-\Box_g+m^2+\xi R)\phi=0 \,.
\ee
The stress tensor is
\be
\label{eqn:tmunu}
T_{ab}=(\nabla_a \phi)(\nabla_b \phi)-\frac{1}{2} g_{ab} ((\nabla \phi)^2+m^2 \phi^2)+\xi(g_{ab} \Box_g-\nabla_a \nabla_b+G_{ab}) \phi^2
\ee
and its trace $T=T\indices{^a_a}$ is given by
\be
\label{eqn:trace}
T = \left(1-\frac{n}{2}\right) (\nabla\phi)^2 - \frac{n}{2}m^2\phi^2 + \xi\left((n-1)\Box_g +  \left(1-\frac{n}{2}\right)R \right)\phi^2 \,.
\ee
Observe that for flat spacetimes, $R=0$, the stress tensor does not reduce to that of minimal coupling, $\xi=0$, unlike the
Lagrangian density and the field equation. Scalar fields are convenient and well-studied models, but their physical applicability is limited to the Higgs field, effective descriptions of bound states of more complicated fields or components of vector-valued fields.

Our second example is the Proca field $A_a$, the classical uncharged spin-$1$ field of mass $m > 0$. Its Lagrangian density is
\be
\mathcal{L}_{\mathrm{Pr}}=\frac{1}{4}F^{ab}F_{ab}+\frac{1}{2}m^2 A^a A_a \,,
\ee
where the Proca two-form is defined as
\be
\label{eqn:fieldstrengthproca}
F_{ab}=2\nabla_{[a}A_{b]} = \nabla_aA_b-\nabla_bA_a\,.
\ee
The field equation is
\be
\label{eqn:fieldproca}
-\nabla^a F_{ab}+m^2 A_b=0 \,.
\ee
The Proca equation (\ref{eqn:fieldproca}) is equivalent to the wave equation
\be
\label{eqn:fieldproca2}
-\Box_gA_a+R\indices{_{a}^{b}}A_b+m^2A_a=0
\ee
with $-\Box_g=\nabla^a\nabla_a$, together with the Lorenz condition
\be
\label{eqn:Lorenzcondition}
\nabla_a A^a=0 \,.
\ee

Closely related is the electromagnetic field, described by equivalence classes of potentials $A_a$ obeying Maxwell's equations
\be
\nabla^a F_{ab}=0 \,
\ee
where we note that $\nabla_{[a}F_{bc]}=0$ follows trivially from (\ref{eqn:fieldstrengthproca}). The equivalence relation is given by the gauge transformations $A_a\to A_a+\nabla_a\chi$, where $\chi$ is any real scalar function. Gauge equivalent potentials correspond to the same physical configuration and one may exploit the gauge freedom to choose a representative $A_a$ satisfying the Lorenz gauge condition (\ref{eqn:Lorenzcondition}). In that case the Maxwell equations are identical to (\ref{eqn:fieldproca2}) with $m=0$.

The stress tensor for both the Maxwell and Proca fields is
\be
\label{eqn:tensorproca}
T_{ab}=F_{ac}F\indices{_{b}^{c}}-\frac{1}{4} g_{ab} F^{cd}F_{cd}+m^2 \left(A_a A_b-\frac{1}{2} g_{ab} A^c A_c\right) \,,
\ee
where the Maxwell case corresponds to the $m=0$ case. 

For further examples of Lagrangian field theories and their stress tensors we refer to the literature \cite{MTW,wald1984general}.

\medskip

To understand the geometric interpretation of energy conditions it is important to introduce Raychaudhuri's equation. This equation describes the evolution of a congruence of timelike or null geodesics and it is derived from the trace of the geodesic deviation equation (for a complete derivation see \cite{wald1984general} Ch.~9.2). Its introduction in 1955 \cite{Raychaudhuri:1953yv} was a leap forward to model-independent singularity theorems (see Sec.~\ref{sub:singularity} for more details). We will use the more general form of the Raychaudhuri equation introduced in \cite{Ehlers:1993}. For a timelike congruence with velocity field $t^a$, parametrized by proper time $\tau$, the expansion $\theta=\nabla_a t^a$ satisfies
\be
\label{eqn:raytime}
\frac{\mathrm{d} \theta}{\mathrm{d}\tau} =-R_{ab} t^a t^b-\sigma_{ab} \sigma^{ab} +\omega_{ab} \omega^{ab}-\frac{\theta^2}{n-1}  \,,
\ee
where $n$ is the spacetime dimension, $\sigma_{ab}$ is the shear tensor and $\omega$ the vorticity or rotation (or twist) tensor. The shear and vorticity tensors are related to the deformation and rotation of a volume element along the curves of the geodesic congruence (cf.~\cite{wald1984general} Ch. 9).

For a null congruence with tangent field $k^a$, parametrized by affine parameter $\lambda$, the expansion $\theta=\nabla_a k^a$ satisfies 
\be
\label{eqn:raynull}
\frac{\mathrm{d}\theta}{\mathrm{d}\lambda} =-R_{ab} k^a k^b-\sigma_{ab} \sigma^{ab} +\omega_{ab} \omega^{ab}-\frac{\theta^2}{n-2}  \,.
\ee
Unless otherwise noted, we will consider \textit{irrotational} congruences, meaning that the rotation tensor vanishes, $\omega_{ab}=0$.

\subsection{Overview of pointwise energy conditions}
\label{sub:overview}

The first energy conditions that appeared in the literature are of a \textit{pointwise} nature: they restrict some contraction of the stress tensor at every spacetime point. The four main conditions are the weak energy condition (WEC), the strong energy condition (SEC), the dominant energy condition (DEC) and the null energy condition (NEC). Less well-known, but of historical interest, is the trace energy condition (TEC). In this subsection we will review these pointwise energy conditions. Another overview of the classical energy conditions with much detail is \cite{Curiel:2017}. For theories that obey Einstein's equation we may rewrite these energy conditions by eliminating the stress tensor in favour of the Ricci curvature tensor, leading to a geometric form of these conditions, as opposed to the original physical form.

Perhaps the most intuitive of the energy conditions is the WEC. In its physical form it requires that for every future-pointing timelike vector $t^a$
\be
\label{eqn:WEC}
T_{ab} t^a t^b \geq 0 \,.
\ee
The WEC expresses the assumption that the energy density measured by any observer on a timelike curve has to be non-negative. For a perfect fluid the WEC implies $\rho \geq 0$, cf.~(\ref{eqn:Tperfectfluid}). Additionally, the pressure cannot be so negative that it dominates the energy density, or $\rho+P \geq 0$. A visualization of the WEC and the other main energy conditions in the case of a perfect fluid is given in Figure \ref{fig:visual}.

The geometric form of the WEC is
\be
G_{ab} t^a t^b \geq 0 \,.
\ee
It is not straightforward to describe the geometric meaning of the components of the Einstein tensor $G_{ab}$ (see \cite{Curiel:2017} Footnote 11 for a detailed discussion), but when $n\ge3$ the WEC can nevertheless be given a geometric interpretation using an idea due to Feynman, cf.~\cite{FeynmanGravitation} Sec.~11.2. For a small spacelike geodesic ball of area $A>0$ we consider the ratio of its spatial volume with the spatial volume of a corresponding ball in Minkowski space with the same area. The WEC holds iff the ratio becomes greater than or equal to $1$ when the ball shrinks to a point. For detailed derivations of this result, using two different formulations of the limit, we refer to \cite{GIBBONS2007317,PhysRevLett.116.201101}.

\begin{figure}[h!]
	\centering
	\includegraphics[scale=0.4]{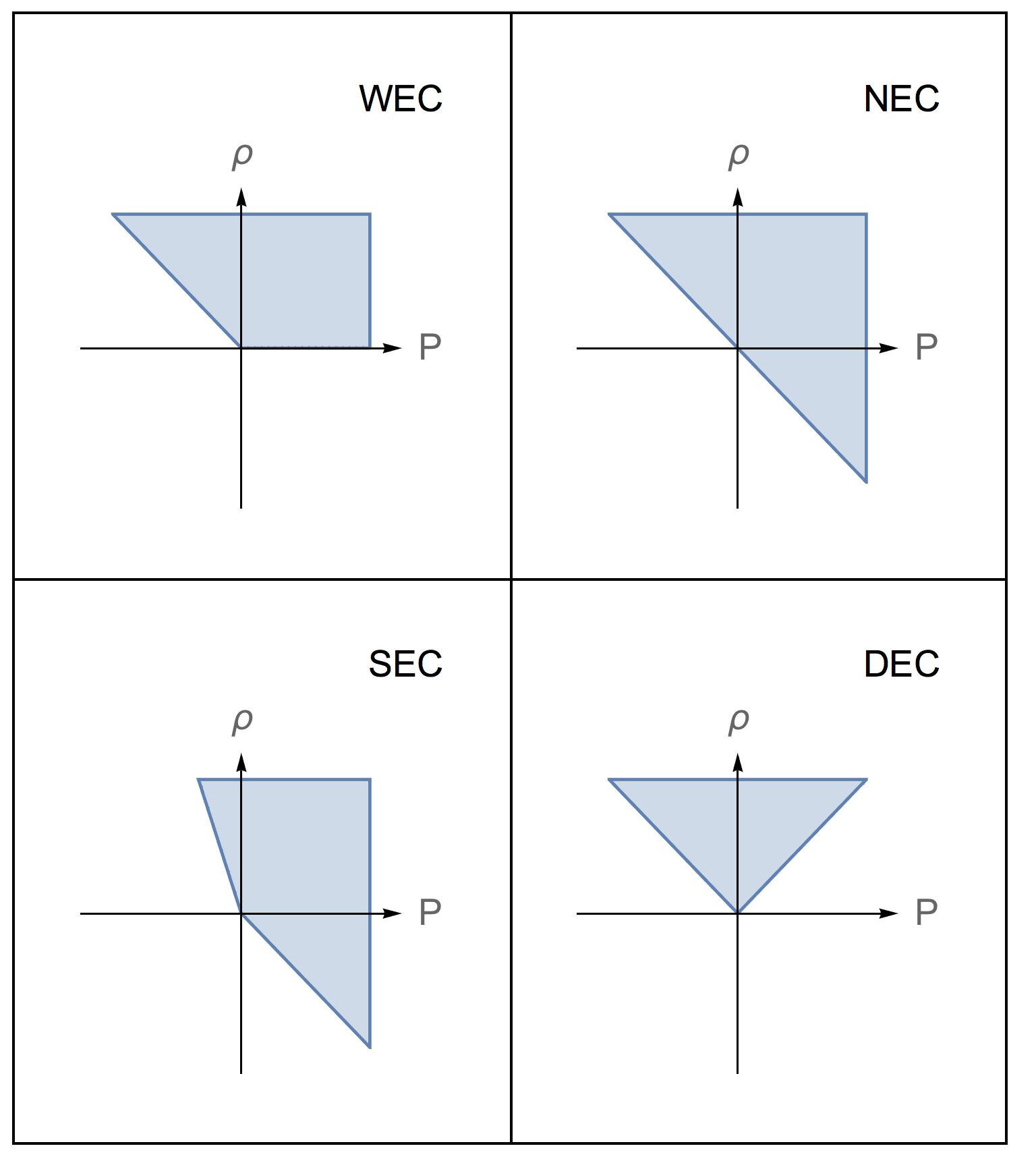}
	\caption{The main energy conditions for perfect fluids represented as regions of allowed energy density and pressure.
	}
	\label{fig:visual}
\end{figure}

The SEC imposes a bound on a more complicated expression,
\be
\label{eqn:EED}
\left(T_{ab} - \frac{T}{n-2}g_{ab}\right) t^a t^b \geq 0\,,
\ee
where we assume $n>2$. For $n=4$, Whittaker \cite{Whittaker:1935} introduced the quantity $T\indices{_0^0}-\frac12T$ as a relativistic analogue of the Newtonian gravitational potential, which appeared in his relativistic formulation of Gauss' law for gravity. Pirani \cite{Pirani:2009} calls this  quantity the \textit{effective density of gravitational mass} while in recent work \cite{Brown:2018hym} it was called the \textit{effective energy density} (EED), using units of energy rather than mass. This suggests that the SEC requires the positivity of an effective energy density, as measured by an observer, but there is no compelling physical argument of why it should be obeyed. The SEC is generally violated more easily than the WEC, as we will see in Subsection~\ref{sub:examples}. However, even though the name suggests it, the SEC does not imply the WEC. The implications between the four main energy conditions are represented in Figure \ref{fig:implications}.

For a perfect fluid we have
\be
T^{ab} - \frac{T}{n-2}g^{ab} = (\rho+P) v^a v^b +\frac{\rho-P}{n-2} g^{ab}
\ee
and the SEC becomes $\rho+P \geq 0$ and $(n-3)\rho +(n-1) P \geq 0$. The geometric form of the SEC is called the \textit{timelike convergence condition},
\be
\label{eqn:timelikeconvergencecondition}
R_{ab} t^a t^b \geq 0 \,.
\ee
It implies that a non-rotating timelike geodesic congruence locally converges. This condition is commonly used and it is one of the main conditions of the Hawking singularity theorem discussed in Section \ref{sub:singularity}.

\begin{figure}[h!]
	\centering
	\includegraphics[scale=1]{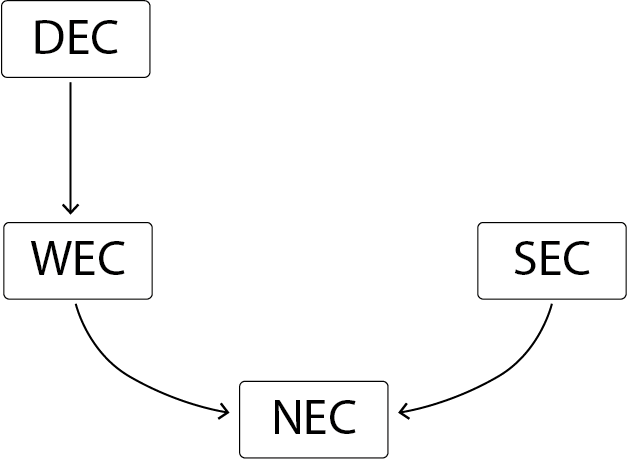}
	\caption{Implications of energy conditions. In the figure an arrow from $a$ to $b$ means $a$ implies $b$.}
	\label{fig:implications}
\end{figure}

The physical form of the DEC can be written as
\be
\label{eqn:DEC}
T_{ab} t^a \xi^b \geq 0 \,,
\ee
for any two co-oriented timelike vectors $t^a$ and $\xi^b$. This is equivalent to the WEC (\ref{eqn:WEC}), with the additional requirement that $-T\indices{^a_b} t^b$ is a future-pointing causal vector field. The latter condition is often written using the quadratic equation $t^c T_{ca}T\indices{^a_b} t^b\le0$. The appeal of (\ref{eqn:DEC}) is that it shows that the set of stress tensors satisfying DEC is closed under sums. (Note that \cite{MartinMorunoVisser:2017} use the quadratic form of the DEC as a motivation for non-linear energy conditions. Although such conditions may have their merits, the claim in Sec.~4 loc.cit.~that sums of stress tensors satisfying the DEC need not satisfy the DEC is incorrect.)

The DEC requires that the flux of energy-momentum measured by an observer is causal and in the direction of the observer's proper time. This is often interpreted as prohibiting superluminal propagation of energy, but Earman \cite{Earman2014} suggests that a well-posed initial value formulation is a better formulation of that idea (see also \cite{Ford:1978qya,Geroch2011}). He notes that the quantity $-T\indices{^a_b} t^b$ does not track the propagation of the underlying matter model and that the DEC can fail for systems which have a perfectly well-defined initial value formulation with a maximum speed equal to that of light (and gravity).

Equivalently, the DEC can be expressed as the requirement that in any orthonormal frame the energy density component of the stress energy tensor dominates all the other components
\be
\label{eqn:dominant}
T_{00} \geq |T_{\mu \nu} | \,.
\ee
For a perfect fluid this becomes $\rho \geq |P|$. The geometric form of the DEC
\be
G_{ab} t^a \xi^b \geq 0 \,,
\ee
does not have a clear interpretation for similar reasons as for the WEC.

The NEC is a variation of the WEC, with the timelike vector replaced by a null vector $k^a$
\be
T_{ab} k^a k^b \geq 0 \,.
\ee
For a perfect fluid the condition becomes $\rho+P \geq 0$. The interpretation of the physical form of the NEC is not straightforward: for observers on null geodesics, the sum of energy density and pressure cannot be negative, but we rarely think of physical observers moving on null geodesics. Its geometric form 
\be
\label{eqn:nullconvergencecondition}
R_{ab} k^a k^b \geq 0 \,,
\ee
is used more often and is sometimes called the \textit{null convergence condition}. It can be considered as a limiting case of the timelike convergence condition (which is related to the SEC) and it implies that a non-rotating null geodesic congruence locally converges, so that gravity is attractive for particles following null geodesics. It has been used in several classical relativity theorems, the most famous being the Penrose singularity theorem and the Hawking area theorem (see Sec.~\ref{sec:applications} for more details).

The four main energy conditions are summarized in Table~\ref{tab:pointwise}. There are several other less known pointwise energy conditions expressing various constraints on the stress energy tensor. These are usually variations of the conditions already mentioned and include the strengthened dominant energy condition (discussed in \cite{Curiel:2017}), the null dominant energy condition (mentioned in \cite{carroll2019spacetime}), the subdominant trace energy condition \cite{Bekenstein:2013ztp} and non-linear conditions, such as the flux energy condition \cite{MartinMorunoVisser:2017}. For later convenience let us only briefly comment on the TEC, which requires that the trace $T$ of the stress tensor has to be non-positive (for our metric conventions). For a perfect fluid this means that $\rho-3P \geq 0$. The TEC was popular until the 1960's when it was discovered by Zel'dovich \cite{zeldovich1961translation} that it was not as general as thought. For further discussion see Subsection~\ref{sub:examples}.

\begin{table}[h!]
	\begin{center}
		\begin{tabular}{|c|c|c|c|} \hline
			\textbf{Condition} &  Physical form & Geometric form & Perfect fluid \\ \hline
			\makecell{WEC} & $T_{ab} t^a t^b \geq 0$ &	$G_{ab} t^a t^b \geq 0$ & \makecell{$\rho \geq 0$ and\\ $\rho+P \geq 0$} \\ \hline
			\makecell{SEC} & $(T_{ab} -\frac{T}{n-2}g_{ab})t^a t^b \geq 0$ &	$R_{ab} t^a t^b \geq 0$ & \makecell{$\rho + P \geq 0$ and\\ $(n-3)\rho+(n-1)P \geq 0$} \\ \hline
			\makecell{DEC} & \makecell{$T_{ab} t^a \xi^b \geq 0$}  &	\makecell{$G_{ab} t^a \xi^b \geq 0$}  & $\rho \geq |P|$ \\ \hline
			\makecell{NEC} & $T_{ab} k^a k^b \geq 0$ &	$R_{ab} k^a k^b \geq 0$ & $\rho + P \geq 0$ \\ \hline
		\end{tabular}
		\caption{The main pointwise energy conditions summarized. Here $t^a$ and $\xi^b$ are co-oriented timelike vectors and
			$k^a$ is a null vector.
		}
		\label{tab:pointwise}
	\end{center}
\end{table}

The NEC is the weakest of the four main pointwise energy conditions. One may ask if it is possible to weaken any of these energy conditions by allowing a lower bound that is negative rather than zero. Tipler \cite{Tipler:1978zz} first showed that this is not possible for the WEC. The following generalized version of Tipler's argument can be applied to all four main energy conditions, taking either $S_{ab}=T_{ab}$ or $S_{ab}=T_{ab}-\frac{T}{n-2}g_{ab}$ at a point.

\begin{proposition}\label{Prop_Tipler3}
	Let $S_{ab}$ be any rank $2$ tensor and $\Gamma$ a set of pairs of vectors that is invariant under positive rescaling. If $S_{ab} \eta^a\xi^b$ is bounded from below as $(\eta^a,\xi^b)$ ranges over $\Gamma$, then the greatest lower bound is zero.
\end{proposition}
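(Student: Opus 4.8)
The plan is to use nothing beyond the scaling invariance of $\Gamma$ together with the bilinearity of the contraction $S_{ab}\eta^a\xi^b$ in $(\eta^a,\xi^b)$. Set $c:=\inf\{S_{ab}\eta^a\xi^b:(\eta^a,\xi^b)\in\Gamma\}$; by the boundedness hypothesis this is a finite real number, and I will take $\Gamma\neq\emptyset$ (otherwise $c=+\infty$ by convention and the statement is vacuous). I read ``invariant under positive rescaling'' as: $(\eta^a,\xi^b)\in\Gamma$ implies $(\alpha\eta^a,\alpha\xi^b)\in\Gamma$ for every $\alpha>0$; this covers both the case $S_{ab}=T_{ab}$ with $\Gamma$ the diagonal set $\{(t^a,t^a):t^a\text{ future timelike}\}$ and the DEC-type $\Gamma$ of co-oriented timelike pairs.

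First I would prove $c\ge 0$. Suppose some $(\eta^a,\xi^b)\in\Gamma$ had $v:=S_{ab}\eta^a\xi^b<0$. Then $(\alpha\eta^a,\alpha\xi^b)\in\Gamma$ for all $\alpha>0$, and by bilinearity $S_{ab}(\alpha\eta^a)(\alpha\xi^b)=\alpha^2 v\to-\infty$ as $\alpha\to\infty$, contradicting the lower bound. Hence $S_{ab}\eta^a\xi^b\ge 0$ on all of $\Gamma$, so $c\ge 0$. Next I would prove $c\le 0$: fixing any $(\eta^a,\xi^b)\in\Gamma$, its value $v$ satisfies $v\ge 0$ by the previous step, and the rescaled pairs give $S_{ab}(\alpha\eta^a)(\alpha\xi^b)=\alpha^2 v\to 0$ as $\alpha\to 0^+$, so $c\le 0$. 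Combining, $c=0$, which is the assertion. Taking $S_{ab}=T_{ab}$ or $S_{ab}=T_{ab}-\frac{T}{n-2}g_{ab}$ at a point then recovers the statement for all four main energy conditions, with Tipler's original WEC argument as the special case of the diagonal $\Gamma$.

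I do not expect any genuine obstacle here: the entire content is that a quantity homogeneous of degree two on a cone is bounded below if and only if it is nonnegative, in which case its infimum is approached at the vertex of the cone. The only points deserving a remark in the write-up are the convention for empty $\Gamma$ (which should be excluded, or simply noted to be irrelevant for the energy conditions of interest, whose $\Gamma$ are nonempty) and the observation that the argument is unchanged if $\Gamma$ is instead invariant under independent positive rescaling of its two slots: one then has a factor $\alpha\beta$ in place of $\alpha^2$ and lets $\alpha\beta\to\infty$, respectively $\alpha\beta\to 0^+$.
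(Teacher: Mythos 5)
Your proof is correct and takes essentially the same route as the paper: use the positive-rescaling invariance of $\Gamma$ together with the quadratic scaling $S_{ab}(\alpha\eta^a)(\alpha\xi^b)=\alpha^2\,S_{ab}\eta^a\xi^b$, sending $\alpha\to\infty$ to rule out any negative value on $\Gamma$. You additionally make explicit the $\alpha\to 0^+$ step showing the infimum is exactly zero (plus the empty-$\Gamma$ convention), which the paper leaves implicit.
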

\begin{proof}
	For $(\eta^a,\xi^b)\in\Gamma$ let $V^a=r \eta^a$ and $W^b=r \xi^b$ with $r>0$. Then $(V^a,W^b)\in\Gamma$ by scale invariance and $S_{ab} V^a W^b=r^2 S_{ab} \eta^a \xi^b$. If this is bounded below as $r\to\infty$, then $S_{ab}\eta^a \xi^b \geq 0$.
\end{proof}

For the WEC, SEC and DEC one can restrict attention to normalized time-like vectors, which removes the scale invariance used in the proof of Proposition \ref{Prop_Tipler3}. In this case the existence of a lower bound does not imply that the lower bound is non-negative. However, the weaker energy conditions obtained in this way do still imply the NEC, as can be seen from the following generalized version of another result of Tipler \cite{Tipler:1978zz}. (In loc.cit.~Tipler makes the additional assumption that $T_{ab}$ is of Segre Type I, but our proof shows that this assumption is unnecessary.)
\begin{proposition}\label{Prop_Tipler4}
	Let $S_{ab}$ be any rank $2$ tensor in Minkowski space. If $S_{ab} t^at^b$ is bounded from below as $t^a$ ranges over all timelike vectors with $t^at_a=-1$, then $S_{ab}k^ak^b\ge0$ for all null vectors $k^a$.
\end{proposition}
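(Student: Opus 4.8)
The plan is to reduce the statement to a single null direction and then approach that direction along a one-parameter family of \emph{unit} timelike vectors, exploiting that $S_{ab}t^at^b$ blows up exponentially along such a family unless the leading, null, coefficient is non-negative. This is the natural analogue of the scaling trick in Proposition \ref{Prop_Tipler3}: normalizing the timelike vectors kills the dilation invariance, but a null vector still lies ``at infinity'' on the unit hyperboloid, reached along a boost orbit, and the exponential growth of the boost parameter revives a positivity conclusion --- now only for null contractions.

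First I would dispose of the trivial case $k^a=0$. Otherwise, fix Minkowski coordinates with an orthonormal frame $\{e_0^a,\dots,e_{n-1}^a\}$, $e_0^a$ future timelike, adapted to $k^a$ so that $k^a=e_0^a+e_1^a$; this is possible after a positive rescaling of $k^a$, which does not affect the sign of $S_{ab}k^ak^b$. Introduce the conjugate null vector $\bar k^a=e_0^a-e_1^a$ and the boosted family $t_\lambda^a=\cosh\lambda\, e_0^a+\sinh\lambda\, e_1^a$ for $\lambda\in\mathbb{R}$. One checks at once that $t_\lambda^a (t_\lambda)_a=-\cosh^2\lambda+\sinh^2\lambda=-1$, so each $t_\lambda^a$ is an admissible vector in the hypothesis, and that $t_\lambda^a=\tfrac12 e^{\lambda}k^a+\tfrac12 e^{-\lambda}\bar k^a$.

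Next I would expand, using only that the contraction with $t_\lambda^a t_\lambda^b$ sees the symmetric part of $S_{ab}$,
\[
S_{ab}t_\lambda^a t_\lambda^b=\tfrac14 e^{2\lambda}\,S_{ab}k^ak^b+\tfrac14\,(S_{ab}+S_{ba})k^a\bar k^b+\tfrac14 e^{-2\lambda}\,S_{ab}\bar k^a\bar k^b .
\]
The middle term is independent of $\lambda$ and the last term tends to $0$ as $\lambda\to+\infty$; hence, if $S_{ab}k^ak^b<0$, the right-hand side tends to $-\infty$. Since by hypothesis $S_{ab}t^at^b$ is bounded from below over \emph{all} unit timelike $t^a$, it is in particular bounded from below along the subfamily $\{t_\lambda^a\}_{\lambda\ge 0}$, which forces $S_{ab}k^ak^b\ge 0$. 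As $k^a$ was an arbitrary null vector, this proves the claim.

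I do not expect a genuine obstacle; the only points requiring a little care are organizational: checking the normalization of the boost family, noting that no symmetry assumption on $S_{ab}$ is used (the symmetric part appears automatically), and observing that the lower bound over the full unit hyperboloid restricts to a lower bound over the chosen orbit. It is perhaps worth flagging in the write-up that, unlike Proposition \ref{Prop_Tipler3}, one genuinely cannot upgrade the conclusion to $S_{ab}t^at^b\ge 0$ for timelike $t^a$: the constant middle term $\tfrac14(S_{ab}+S_{ba})k^a\bar k^b$ can be negative and can dominate at finite $\lambda$, which is exactly why the weakened, normalized energy conditions still imply the NEC but not the WEC, SEC or DEC.
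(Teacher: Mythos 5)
Your proof is correct and is essentially the paper's argument: your boosted family $t_\lambda^a=\tfrac12 e^{\lambda}k^a+\tfrac12 e^{-\lambda}\bar k^a$ is the same one-parameter family of unit timelike vectors the paper writes as $t^a=\tfrac1r k^a+r n^a$ with $k^an_a=-\tfrac12$, and your blow-up/contradiction step is just a rephrasing of the paper's multiplication by $r^2$ followed by $r\to0^+$. No gaps; the closing remark about why the conclusion cannot be upgraded beyond the NEC matches the paper's surrounding discussion.
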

\begin{proof*}
	Let $B$ be a lower bound for $S_{ab}t^at^b$ as $t^a$ ranges over all normalized timelike vectors. Given any null vector $k^a$ we can find another null vector $n^a$ such that $k^an_a=-\frac12$. For every $r>0$ the vector $t^a=\frac{1}{r}k^a+rn^a$ is a normalized timelike vector, so we have
	\be
	B\le \frac{1}{r^2}S_{ab}k^ak^b + (S_{ab}+S_{ba})k^an^b + r^2S_{ab}n^an^b\,.
	\ee
	Multiplying by $r^2$ and taking $r\to0^+$ yields $0\le S_{ab}k^ak^b$ as desired.
\end{proof*}

\subsection{On-shell and off-shell pointwise energy conditions}
\label{subs:onshell}

The various energy conditions intend to describe simple properties of the stress tensor $T_{ab}$ that are shared by many physically reasonable systems in physically reasonable states. Although the term ``physically reasonable'' is flexible, one might think that it ought to include at least that the equations of motion of the system hold, including Einstein's equation. Indeed, without Einstein's equation we cannot turn the physical form of an energy condition into its geometric form, which is typically the first step in applications. It may perhaps come as a surprise that some systems satisfy energy conditions even when the equations of motion are violated. In this subsection we will try to shed some light on this by presenting some general considerations that relate the main energy conditions and the dynamics of the system.

We will call a field configuration $(g_{ab},\Phi)$ \emph{on-shell} when Einstein's equation~(\ref{eqn:EE}) and the equations of motion for the fields $\Phi$ are satisfied. Such configurations are classically possible states of the system. We will only consider on-shell configurations for $n\ge3$, because for $n=2$ we always have $G_{ab}=0$, so if Einstein's equation is satisfied, then all energy conditions hold. By a \emph{test field} configuration we will mean a configuration which satisfies the equations of motion for $\Phi$, but not Einstein's equation. Here we think of the metric $g_{ab}$ as being caused by other fields, $\Phi'$, which are not part of the system under consideration. The fields $\Phi$ are influenced by the metric, but not by the fields $\Phi'$ directly. Furthermore, the fields $\Phi$ do not influence the metric, so they are considered as small perturbations. Finally, an \emph{off-shell} configuration is one where all equations of motion, including Einstein's equation, may be violated.

For simplicity we will consider a scalar field $\phi$ with a Lagrangian of the form
\be
\label{eqn:nlsc}
\mathcal{L}_{\mathrm{sc}}[g,\phi]=\frac12[(\nabla\phi)^2+V[g](\phi)]
\ee
where the functional $V$ may depend on $g$ and its derivatives and on $\phi$, but not on derivatives of $\phi$. An example is $V[g](\phi)=(m^2+\xi R)\phi^2$ for the non-minimally coupled linear scalar field, cf.~(\ref{eqn:lagrangian}). Note that the equation of motion for $\phi$ is a wave equation,
\be
\label{eqn:nlw}
-\Box_g\phi+V'(\phi)=0\,,
\ee
which is in general non-linear.

The wave equation~(\ref{eqn:nlw}) is a symmetric hyperbolic system, and when the potential $V(\phi)$ is well-behaved, e.g.~when it does not depend on derivatives of the metric, then (\ref{eqn:nlw}) coupled to Einstein's equation is a symmetric hyperbolic system of equations. It is well-known that such systems have a well-posed initial value formulation \cite{ChoquetBruhat,Ringstrom,HawkingEllis:1973,wald1984general}, which means that for any initial data (satisfying the problem's constraint equations), there exists a unique maximal globally hyperbolic solution (up to gauge equivalence) which depends continuously on the data. The proof of this fact makes use of energy estimates, which in turn make use of energy conditions. However, the energy conditions used in the proof do not have to be imposed on, or required of, the stress tensor $T_{ab}$ of the field $\phi$. Instead, it suffices to consider them for some auxiliary tensor $S_{ab}$, whose form may correspond e.g.~to the stress tensor of a massless minimally coupled scalar field. In fact, the well-posedness of a system by itself does not seem to imply any of the four main energy conditions of Section \ref{sub:overview}. A simple example violating the WEC is the linear scalar field with an imaginary mass, whose potential function is $V(\phi)=-m^2\phi^2$, $m>0$. (See \cite{Earman2014} and \cite{Curiel:2017} Sec.~5 for further discussion. Note that if $T_{ab}=0$ on a region $\mathcal{O}\subset\Sigma$ of a Cauchy surface $\Sigma$, then $T_{ab}=0$ on $D(\mathcal{O})$, contrary to the claim in \cite{Earman2014} Sec.~5.) 

Given the well-posed initial data formulation, we address in the next two propositions the question whether the validity of an energy condition on-shell also implies its validity off-shell. We will deal with the implication between test field and off-shell validity first, because it is technically easier and we have a slightly more general result.

\begin{proposition}\label{Prop_ECoffshell1}
	Consider a scalar field with the Lagrangian density (\ref{eqn:nlsc}) coupled to general relativity. Let EC denote one of the four main energy conditions (WEC, SEC, DEC or NEC) and assume that EC is satisfied for all smooth test field configurations $(g_{ab},\phi)$, i.e.~ones for which (\ref{eqn:nlw}) holds. Then EC is also satisfied for all smooth off-shell configurations $(g_{ab},\phi)$.
\end{proposition}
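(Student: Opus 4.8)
The plan is to reduce the statement to a purely pointwise one, and then to show that at the level of the stress tensor at a single point every off-shell configuration is realised by a test-field configuration, so that the energy condition simply transfers. Since WEC, SEC, DEC and NEC are imposed at each point separately, it suffices to prove: for any smooth off-shell $(g_{ab},\phi)$ and any point $p$, the relevant positivity $S_{ab}(p)\eta^a\xi^b\ge0$ holds, where $S_{ab}=T_{ab}$ or $S_{ab}=T_{ab}-\frac{T}{n-2}g_{ab}$ and $(\eta^a,\xi^b)$ ranges over the appropriate vectors or pairs of vectors at $p$ — a set determined entirely by $g_{ab}(p)$ and the time orientation. First I would record the structural fact that, for the Lagrangian (\ref{eqn:nlsc}) with $V$ depending on the metric through at most second derivatives, $T_{ab}$ at $p$ is an explicit universal function of $g_{ab}(p)$, the curvature at $p$, $\phi(p)$, $(\nabla\phi)(p)$ and $(\nabla_a\nabla_b\phi)(p)$; and that the trace $\Box_g\phi$ of the Hessian enters only through a term proportional to $\phi\,(\Box_g\phi)\,g_{ab}$ — for the linear scalar field this is visible in (\ref{eqn:tmunu}) once one rewrites $g_{ab}\Box_g\phi^2-\nabla_a\nabla_b\phi^2$ in terms of the trace-free Hessian. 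The equation of motion (\ref{eqn:nlw}) at $p$ is a single scalar equation, fixing $\Box_g\phi(p)$ in terms of $\phi(p)$ and the curvature and imposing nothing further on the $2$-jet.

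The core step is a realisability lemma: given off-shell $(g_{ab},\phi)$ and $p$, I would construct a test-field configuration $(\tilde g_{ab},\tilde\phi)$ on a small (globally hyperbolic) coordinate ball, with a point $\tilde p$, such that $\tilde g_{ab}(\tilde p)=g_{ab}(p)$ and $\tilde T_{ab}(\tilde p)=T_{ab}(p)$. Two freedoms are exploited. First, the metric of a test field is unconstrained, so I may prescribe the $2$-jet of $\tilde g$ at $\tilde p$ — in particular any Einstein tensor $\tilde G_{ab}(\tilde p)$ — while keeping $\tilde g_{ab}(\tilde p)=g_{ab}(p)$. Second, a test field is obtained by prescribing Cauchy data for (\ref{eqn:nlw}) on a spacelike slice through $\tilde p$ and solving locally, so $\tilde\phi(\tilde p)$, $(\nabla\tilde\phi)(\tilde p)$ and the trace-free part of $(\nabla_a\nabla_b\tilde\phi)(\tilde p)$ may be set equal to those of $\phi$ at $p$, the equation only pinning down the trace $\Box_{\tilde g}\tilde\phi(\tilde p)$. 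With these matches, $\tilde T_{ab}(\tilde p)$ and $T_{ab}(p)$ differ by just two terms: one proportional to $\phi(p)\big(\Box_{\tilde g}\tilde\phi(\tilde p)-\Box_g\phi(p)\big)g_{ab}(p)$ from the Hessian-trace mismatch, and one proportional to $\phi(p)^2\big(\tilde G_{ab}(\tilde p)-G_{ab}(p)\big)$. Since $\Box_{\tilde g}\tilde\phi(\tilde p)$ is determined by (\ref{eqn:nlw}) in terms of the chosen curvature, both discrepancies are governed by $\tilde G_{ab}(\tilde p)$, and a single pure-trace modification $\tilde G_{ab}(\tilde p)=G_{ab}(p)+\nu\,g_{ab}(p)$, with $\nu$ solving one scalar linear equation, makes the total discrepancy vanish. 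Then EC at $\tilde p$ for $(\tilde g_{ab},\tilde\phi)$, which holds by hypothesis, forces EC at $p$ for $(g_{ab},\phi)$, and $p$ was arbitrary.

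The point requiring care — and the genuine obstacle — is that this linear equation for $\nu$ can degenerate. Its coefficient is a polynomial in $\xi$ (for the linear scalar field) that vanishes precisely at the conformal value $\xi=\tfrac{n-2}{4(n-1)}$; the equation is also vacuous when $\phi(p)=0$ or $\xi=0$, but these are harmless because then $T_{ab}(p)$ involves neither the curvature nor the Hessian, so one simply matches $\phi$'s one-jet with a local solution of (\ref{eqn:nlw}). In the conformal case an off-shell $T_{ab}(p)$ of non-zero trace cannot be realised at all (the test-field stress tensor then has trace $-m^2\phi^2$, zero in the massless case); but there the hypothesis is vacuous, since the conformally coupled scalar violates each of WEC, SEC, DEC and NEC already for on-shell — hence test-field — configurations (cf.\ Subsection~\ref{sub:examples}). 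The same dichotomy covers the general Lagrangian (\ref{eqn:nlsc}): if $V$ genuinely involves derivatives of the metric then $T_{ab}$ involves the curvature and one argues as above (or the hypothesis is again vacuous), while if $V=V(\phi)$ then $T_{ab}$ depends only on $g_{ab}(p)$, $\phi(p)$ and $(\nabla\phi)(p)$ and the conclusion is immediate. The one routine ingredient I would treat only briefly is the local well-posedness of the (possibly nonlinear) wave equation (\ref{eqn:nlw}) needed to produce the test field with the prescribed Cauchy data.
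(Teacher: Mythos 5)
Your argument is correct in its essentials, but it takes a genuinely different and considerably more elaborate route than the paper. The paper's proof is a short jet-matching argument that keeps the metric fixed: choose a spacelike hypersurface $\Sigma$ through the point $x$, let $\tilde\phi$ be the local solution of (\ref{eqn:nlw}) on the \emph{same} background $g_{ab}$ with the same Cauchy data as $\phi$ on $\Sigma$, and observe that $T_{ab}(x)$ is unchanged because it is fixed by the initial data at $x$; the energy condition then transfers immediately. You instead also change the background: you match the $1$-jet and the trace-free Hessian of $\phi$ by a choice of Cauchy data, and absorb the one obstruction --- the trace of the Hessian being pinned by (\ref{eqn:nlw}) --- into a pure-trace modification of the Einstein tensor of the test metric, solving one scalar equation for $\nu$. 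Your bookkeeping for the linear field (\ref{eqn:tmunu}) is right, as is your identification of the degenerate locus ($\xi=0$ or $\phi(p)=0$, which are trivial, and the conformal value $\xi_c=\frac{n-2}{4(n-1)}$, where you correctly fall back on the vacuousness of the hypothesis via Subsection~\ref{sub:examples}). The trade-off is instructive. Where the paper's key sentence is literally valid --- whenever $T_{ab}$ depends only on the $1$-jet of $\phi$, e.g.\ $V=V(\phi)$ --- its argument needs no case distinctions and yields the sharper statement that the test-field EC \emph{on the given background} already implies the off-shell EC on that background; your construction, by contrast, needs the hypothesis for test fields on locally modified metrics. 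Conversely, your proof squarely confronts the point that the paper's one-line justification passes over: when $V$ depends on derivatives of the metric, $T_{ab}$ contains second derivatives of $\phi$ (visible in (\ref{eqn:tmunu})), and with data fixed on a spacelike $\Sigma$ the second normal derivative genuinely enters, e.g., $T_{ab}k^ak^b$ for null $k^a$ when $\xi\phi(x)\neq0$, so ``same data, same stress tensor'' is not automatic there; your curvature adjustment is precisely the device that handles this. The one loose end on your side is the closing dichotomy for a completely general $V[g](\phi)$: if $V$ depends nonlinearly on curvature (say through $f(R)$ or $R_{ab}$), the variation brings in higher metric jets and curvature-dependent coefficients, so your single scalar equation can degenerate elsewhere, and ``or the hypothesis is again vacuous'' is asserted rather than shown --- at that level of generality your argument, like the paper's, remains schematic. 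For the cases the paper itself emphasizes (potential-only $V$ and the non-minimally coupled linear field) your proof is complete and correct.
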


The smoothness in Proposition \ref{Prop_ECoffshell1} is not essential and can be weakened to $C^k$ for sufficiently high $k$. 

\begin{proof*}
	Let $(g_{ab},\phi)$ be any field configuration on a smooth manifold $\mathcal{M}$ and let $x\in\mathcal{M}$ be any point. Using local coordinates near $x$ we can locally choose a smooth, spacelike, achronal hypersurface $\Sigma\subset\mathcal{M}$ containing $x$. We can now specify a test field configuration $(g_{ab},\tilde{\phi})$ near $x$ by requiring that $\tilde{\phi}$ has the same initial data on $\Sigma$ as $\phi$. At the point $x$, $\phi$ and $\tilde{\phi}$ have the same stress tensor $T_{ab}$, because it only depends on the initial data at $x$. Because $(g_{ab},\tilde{\phi})$ satisfies the EC by assumption, so does $(g_{ab},\phi)$. Because $x$ and the configuration were arbitrary, the EC holds for all off-shell configurations.
\end{proof*}

For more general fields $\Phi$, e.g.~fields with a gauge symmetry, the proof of Proposition \ref{Prop_ECoffshell1} breaks down, because the initial data on $\Sigma$ need to satisfy constraint equations in order to determine a solution $\tilde{\Phi}$. However, the conclusion will still hold for any configuration $\Phi$ whose data on $\Sigma$ can be modified to satisfy the constraints without changing the stress tensor at a given point $x$. For simple situations, such as the Proca field \cite{Schambach+2018} or the vector potential of electromagnetism \cite{Sanders+2014}, one may explicitly show that this is possible for all configurations, so the conclusion of Proposition \ref{Prop_ECoffshell1} remains valid for all off-shell configurations.

We can use the same strategy to investigate whether the validity of an energy condition for on-shell configurations implies its validity for off-shell (or test field) configurations. In this case we need to take the constraint equations for the metric $g_{ab}$ into account, which complicates the analysis. We are not aware of general results in the literature, but if the potential function $V$ in (\ref{eqn:nlsc}) does not depend on derivatives of $g_{ab}$, we can establish the following result with elementary methods.

\begin{proposition}\label{Prop_ECoffshell2}
	Consider a scalar field with the Lagrangian density (\ref{eqn:nlsc}) coupled to general relativity in dimension $n\ge3$ and assume that $V$ does not depend on derivatives of $g_{ab}$. Let EC denote one of the four main energy conditions (WEC, SEC, DEC or NEC) and assume that EC is satisfied for all smooth on-shell configurations $(g_{ab},\phi)$. Then EC is also satisfied for all smooth off-shell configurations $(g_{ab},\phi)$.
\end{proposition}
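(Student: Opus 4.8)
The plan is to mimic the proof of Proposition~\ref{Prop_ECoffshell1}, the only new feature being that we must now also move the metric in order to restore Einstein's equation, which forces the metric Cauchy data to obey constraints. Given an arbitrary smooth off-shell configuration $(g_{ab},\phi)$ and a point $x\in\mathcal{M}$, the goal is to build a smooth on-shell configuration $(\tilde g_{ab},\tilde\phi)$, defined on a neighbourhood of a point $\tilde x$, together with a linear isometry of tangent spaces identifying $g_{ab}(x)$ with $\tilde g_{ab}(\tilde x)$ and $T_{ab}(x)$ with $\tilde T_{ab}(\tilde x)$. Since each of WEC, SEC, DEC and NEC is a pointwise statement about the pair $(g_{ab},T_{ab})$ that does not itself invoke the equations of motion, its assumed validity for $(\tilde g_{ab},\tilde\phi)$ at $\tilde x$ then transfers to $(g_{ab},\phi)$ at $x$; as $x$ was arbitrary, EC holds off-shell.

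The key observation is that, because $V$ does not depend on derivatives of $g_{ab}$, the stress tensor derived from~(\ref{eqn:nlsc}) at a point depends only on the $0$-jet of the metric and the $1$-jet of $\phi$ there. Indeed, varying $\tfrac12\int\mathrm{d}vol_g\,V[g](\phi)$ contributes only the purely algebraic term $\tfrac{\partial V}{\partial g^{ab}}-\tfrac12 g_{ab}V$ --- there is no integration by parts transferring derivatives onto $\phi$, in contrast with the non-minimal terms in~(\ref{eqn:tmunu}) --- so $T_{ab}(x)$ is a fixed function of $\bigl(g_{ab}(x),g^{ab}(x),\phi(x),\nabla_a\phi(x)\bigr)$ and of the functional form of $V$ alone. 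Crucially, this means $T_{ab}(x)$ is insensitive both to the extrinsic curvature of any hypersurface through $x$ and to the derivatives of the metric at $x$, i.e.\ precisely to those components of the Cauchy data that are entangled through the constraint equations.

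Concretely, I would first pass to coordinates near $x$ in which $g_{ab}(x)=\eta_{ab}$, with $x$ the origin and $\Sigma=\{x^0=0\}$ a smooth spacelike hypersurface through it (as in the proof of Proposition~\ref{Prop_ECoffshell1}), and record $\phi_0:=\phi(x)$, $p_\mu:=\partial_\mu\phi(x)$. On $\Sigma$ I would prescribe \emph{arbitrary} smooth scalar Cauchy data $(\tilde\phi|_\Sigma,\mathcal{L}_n\tilde\phi|_\Sigma)$ constrained only by $\tilde\phi(x)=\phi_0$, $\partial_i\tilde\phi(x)=p_i$ and $(\mathcal{L}_n\tilde\phi)(x)$ matching $p_0$ (e.g.\ affine data). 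Choosing lapse $1$ and shift $0$ near $x$, the value $\tilde g_{ab}(x)=\eta_{ab}$ is guaranteed as soon as $\tilde h_{ij}(x)=\delta_{ij}$; it then remains to choose metric Cauchy data $(\tilde h_{ij},\tilde K_{ij})$ solving the Hamiltonian and momentum constraints on a neighbourhood of $x$ in $\Sigma$, with source built from the scalar data just fixed (and from $\tilde h_{ij}$, in the usual way), normalised so that $\tilde h_{ij}(x)=\delta_{ij}$. Under the stated hypothesis on $V$ the coupled system~(\ref{eqn:nlw})--(\ref{eqn:EE}) is symmetric hyperbolic, so this constraint-satisfying data admits a globally hyperbolic development, which is the desired on-shell $(\tilde g_{ab},\tilde\phi)$; by construction its metric $0$-jet and scalar $1$-jet at $\tilde x$ reproduce those of $(g_{ab},\phi)$ at $x$, hence $\tilde T_{ab}(\tilde x)=T_{ab}(x)$. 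As in Proposition~\ref{Prop_ECoffshell1}, smoothness may be relaxed to $C^k$ for sufficiently large $k$.

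The one genuinely nontrivial step --- and where I would expect to spend the most effort --- is the solvability of the constraint equations with the prescribed smooth source and the extra pointwise normalisation $\tilde h_{ij}(x)=\delta_{ij}$. Here the saving grace is that, for $n\ge3$, the constraints are underdetermined ($1+(n-1)$ equations for the $(n-1)n$ components of $(\tilde h_{ij},\tilde K_{ij})$), so their solution set is large and easily accommodates a prescribed value of $\tilde h_{ij}$ at a single point: one may, for instance, apply York's conformal method --- fixing a background metric in the conformal class with $\bar h_{ij}(x)=\delta_{ij}$, solving the momentum constraint, and then solving the resulting Lichnerowicz equation for the conformal factor $\psi$ near $x$ with $\psi(x)=1$ --- which reduces matters to a standard local elliptic solvability statement. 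None of these choices perturbs $T_{ab}(x)$, for the reason noted above. For more general fields $\Phi$, e.g.\ ones carrying a gauge symmetry, the same scheme would apply provided the constrained part of the Cauchy data can likewise be adjusted without altering the stress tensor at $x$, as in the remarks following Proposition~\ref{Prop_ECoffshell1}.
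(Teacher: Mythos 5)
Your overall strategy coincides with the paper's: since $T_{ab}(x)$ depends only on $g_{ab}(x)$ and the $1$-jet of $\phi$ at $x$ (because $V$ contributes only algebraically), and the four conditions are pointwise statements about $(g_{ab},T_{ab})$, it suffices to build an on-shell configuration whose metric $0$-jet and scalar $1$-jet at a point reproduce those of the given off-shell configuration; like the paper, you do this by posing Cauchy data on a spacelike slice through $x$, solving the Hamiltonian and momentum constraints, and invoking well-posedness of the coupled symmetric-hyperbolic system. The difference is entirely in how the constraints are solved. The paper stays elementary: it takes the induced metric flat, chooses scalar data constant or linear in $x^1$ only, and writes down an explicit $K_{ij}$ (nonzero in a few components, obtained by integrating $\partial_1K_{i1}=8\pi T_{0i}$ and $-\partial_1K_{22}=8\pi T_{00}$ in the single variable $x^1$, plus an algebraic choice of $K_{11}$), so the constraints become ODEs solved by inspection and no elliptic theory is needed. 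You instead invoke York's conformal method, and this is where your argument is thinner than you present it: because the scalar $1$-jet must be held fixed (that is precisely what pins down $T_{ab}(x)$), the physical sources $\rho$ and $j^i$ in the constraints depend on the unknown physical metric (through $h^{ij}\partial_i\phi\partial_j\phi$ and through the algebraic metric dependence of $V[g]$), so the Lichnerowicz equation does not decouple in the textbook way, and the exact pointwise normalisation $\psi(x)=1$ is not delivered by any ``standard local elliptic solvability statement''; one would need, say, a sub/supersolution argument on a small ball combined with a continuity argument in the boundary datum, or an adjustment of the freely specifiable data. These loose ends are fillable, since a purely local problem near a point leaves enormous freedom, but as written the one step you yourself identify as nontrivial is asserted rather than carried out, whereas the paper's explicit flat-slice construction closes it with elementary means.
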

\begin{proof*}
	Let $(g_{ab},\phi)$ be any field configuration on a smooth manifold $\mathcal{M}$ and let $x\in\mathcal{M}$ be any point. We may choose local coordinates $x^{\mu}$ with $\mu\in\{0,\ldots,n-1\}$ near $x$ with the following properties: $x=0$, $g_{\mu\nu}(0)=\eta_{\mu\nu}$ the Minkowski metric, $\partial_{\mu}g_{\nu\rho}(0)=0$ and $\partial_{\mu}\phi(0)=0$ if $\mu\ge2$.
	
	The surface $\Sigma=\{x^0=0\}$ is a smooth spacelike hypersurface containing $x=0$ and with local coordinates $x^i$ with $i\in\{1,\ldots,n-1\}$. On $\Sigma$ we will specify initial data for an on-shell configuration $(\tilde{g}_{\mu\nu},\tilde{\phi})$ as follows.
	We first set $\tilde{g}_{ij}=\delta_{ij}$ (the Euclidean metric), $\tilde{\phi}=\phi(0)+x^1\partial_1\phi(0)$ and 
	$\partial_0\tilde{\phi}=\partial_0\phi(0)$. Note that these data are either constant or linear functions on $\Sigma$ which only depend on $x^1$. These data suffice to determine $T_{\mu\nu}$ for $(\tilde{g}_{\mu\nu},\tilde{\phi})$ on $\Sigma$, which also depends only on the coordinate $x^1$. It remains to specify a symmetric tensor $K_{ij}$ on $\Sigma$ satisfying the constraint equations for general relativity. Because the metric on $\Sigma$ is flat these equations read \cite{Bartnik+2004}
	\be
	\partial_jK\indices{_i^j}-\partial_iK\indices{_j^j}=8\pi T_{0i}\,,\\
	\left(K\indices{_i^i}\right)^2-K^{ij}K_{ij}=16\pi T_{00}\,.
	\ee
	To find such a $K_{ij}$ we choose the components as functions of $x^1$ only and we set $K_{ij}=0$, except when $i=1$ or $j=1$ or $i=j=2$. When $i\ge 2$ we choose $K_{1i}=K_{i1}$ to satisfy $\partial_1K_{i1}=8\pi T_{0i}$. We choose $K_{22}$ to satisfy $-\partial_1K_{22}=8\pi T_{00}$ and $K_{22}(0)\not=0$. Finally we choose $K_{11}=8\pi(K_{22})^{-1}(T_{00}+2\sum_{i\ge2}K_{1i}^2)$, which is well-defined near $x=0$. It is easy to see that these choices for $K_{ij}$ satisfy the constraint equations, so the initial data $(\tilde{g}_{ij},K_{ij},\tilde{\phi},\partial_0\tilde{\phi})$ on $\Sigma$ determine an on-shell configuration
	$(\tilde{g}_{ab},\tilde{\phi})$ near $x$.
	
	At the point $x$ the configurations $(g_{ab},\phi)$ and $(\tilde{g}_{ab},\tilde{\phi})$ have the same stress tensor, because $T_{ab}$ only depends on $(g_{ab},\phi,\partial_{\mu}\phi)$ at $x$. Because $(\tilde{g}_{ab},\tilde{\phi})$ satisfies the EC by assumption, so does $(g_{ab},\phi)$. Because $x$ and the configuration were arbitrary, the EC holds for all off-shell configurations.
\end{proof*}

Note that Proposition \ref{Prop_ECoffshell2} applies to the minimally coupled scalar field, but not to general non-minimally coupled scalar fields. We will not investigate here to what extent the proof extends to Lagrangian densities $\mathcal{L}_{\Phi}[g,\phi]$ which depend on derivatives of $g_{ab}$, or to fields $\Phi$ whose initial data must satisfy constraints.

Let us raise instead another natural questions relating energy conditions and dynamics. When a theory has a well-posed initial value formulation, then the validity of an energy condition for a configuration $(g_{ab},\Phi)$ is entirely determined by its initial data on a Cauchy surface $\Sigma$ and one may wonder what initial information is required to verify the energy condition throughout spacetime. In particular, one may wonder whether energy conditions are conserved under the dynamics, i.e.~whether their validity on a Cauchy surface $\Sigma$ implies its validity on the entire spacetime.

For on-shell configurations the situation seems quite complicated and we are not aware of any existing results. For test field configurations, however, the four main energy conditions are in general not conserved, as the following simple example shows. In Minkowski space the function $\phi(x)=\sin(mt)$ is a spatially homogeneous solution to the Klein-Gordon equation~(\ref{eqn:field}) with mass $m>0$ for any coupling constant $\xi$, since $R=0$. The stress tensor (\ref{eqn:tmunu}) takes the form
\be
T_{ab}=m^2(\cos(mt)^2-2\xi\cos(2mt))\nabla_at\nabla_bt +\frac12m^2(1-4\xi)\cos(2mt)g_{ab}\,.
\ee
At $t=\frac{\pi}{4m}$ we have $T_{ab}=\frac12m^2\nabla_at\nabla_bt$, so the DEC and SEC hold for any value of $\xi$. However, for a null vector $k^a$ with $k^a\nabla_at=1$ we have $T_{ab} k^ak^b=m^2(1-2\xi)$ at $t=0$ and $T_{ab} k^ak^b=2\xi m^2$ at $t=\frac{\pi}{2m}$, so the NEC fails when $\xi>\frac12$ or $\xi<0$ respectively.

\subsection{Examples and counter-examples of pointwise energy conditions}
\label{sub:examples}

Here we discuss matter models that obey or violate the main pointwise energy conditions introduced in Section \ref{sub:overview}. A summary of examples and counter-examples is provided in Table \ref{tab:examples} below.

Let us first examine the minimally coupled scalar field with mass $m\ge0$. The stress tensor (\ref{eqn:tmunu}) simplifies to
\be
T_{ab} = \nabla_a\phi\nabla_b\phi -\frac12g_{ab}((\nabla \phi)^2+m^2 \phi^2)\,.
\ee
Due to Proposition \ref{Prop_ECoffshell2} it suffices in this case to study the energy conditions off-shell. For all co-oriented timelike vectors $t^a, \xi^b$ at any point we note that $T_{ab}t^a\xi^b\ge0$, because $t^a\xi_a<0$ and the tensor $t^a\xi^b+\xi^at^b-(t^c\xi_c)g^{ab}$ is positive definite. It follows that the DEC holds and hence so do the WEC and NEC. However, from
\be
T_{ab} -\frac{T}{n-2}g_{ab} = \nabla_a\phi\nabla_b\phi +\frac{1}{n-2}g_{ab}m^2 \phi^2
\ee
it is easily seen that the SEC holds off-shell when $m=0$, but it can be violated when $m>0$.

More generally, consider a scalar field with Lagrangian density of the form (\ref{eqn:nlsc}) with a potential $V(\phi)$ which is independent of $g_{ab}$, so Proposition \ref{Prop_ECoffshell2} applies. The stress tensor takes the form
\be
T_{ab}=\nabla_a\phi\nabla_b\phi - \frac12g_{ab}((\nabla\phi)^2+V(\phi))\,.
\ee
When $V\ge0$ the DEC holds, but the SEC can be violated when $V\not\equiv 0$. When $V\le0$, on the other hand, the SEC holds, but the WEC can be violated when $V\not\equiv0$.

Let us next examine the Proca and Maxwell fields. The stress tensor is given in (\ref{eqn:tensorproca}) and if we express the components of $F_{ab}$ and $A_a$ in an orthonormal basis $e_0^a,\ldots,e_{n-1}^a$ with $t^a=e_0^a$ timelike, then we find for the energy density
\be
T_{ab}t^a t^b=\frac{1}{4}\sum_{\mu=0}^{n-1} \sum_{\nu=0}^{n-1} (F_{\mu\nu})^2+\frac{m^2}{2} \sum_{\mu=0}^{n-1} (A_{\mu})^2 \,,
\ee
where we used the antisymmetry of $F_{ab}$. The energy density of the Proca and Maxwell fields also takes a sum of squares form, just like the minimally coupled scalar field. Thus the WEC holds and hence also the NEC. Using similar arguments as for the minimally coupled scalar field one may show that the DEC also holds, even off-shell.
To examine the SEC we first calculate the trace of $T_{ab}$
\be
T=\frac{4-n}{4} F_{ab}F^{ab}+\frac{2-n}{2} m^2 A_a A^a \,.
\ee
It is interesting to note that the electromagnetic stress tensor ($m=0$) is traceless for $n=4$. Using the components in the same basis as before,
\be
\left(T_{ab}-\frac{T}{n-2}g_{ab}\right)t^a t^b = \frac{1}{2(n-2)}\sum_{i=1}^{n-1} \sum_{j=1}^{n-1} (F_{ij})^2
+\frac{n-3}{n-2}\sum_{i=1}^{n-1} (F_{0i})^2+ m^2 (A_0)^2 \,, 
\ee
so the SEC is satisfied for $n>2$. 

\begin{table}[h!]
	\begin{center}
		\begin{tabular}{|c|c|c|c|c|} \hline
			\textbf{Theory} &  NEC & WEC & DEC & SEC \\ \hline
			massless Klein-Gordon ($\xi=0$, $V(\phi)=0$) & \cmark & \cmark & \cmark & \cmark \\ 
			modified Klein-Gordon ($\xi=0$, $V(\phi)\ge0$) & \cmark & \cmark & \cmark & \xmark \\
			modified Klein-Gordon ($\xi=0$, $V(\phi)\le0$) & \cmark & \xmark & \xmark & \cmark \\
			Klein-Gordon ($\xi \neq 0$, $m\ge0$) & \xmark & \xmark& \xmark & \xmark\\
			Maxwell/Proca ($m \geq 0$)  & \cmark & \cmark & \cmark & \cmark \\
			\hline
		\end{tabular}
	\end{center}
	
	\caption{Validity of pointwise energy conditions for classical fields. The potential function
		$V(\phi)\ge0$ is not identically zero. \cmark\ indicates
		that the condition holds for all off-shell configurations. \xmark\ indicates that the
		conditions can fail even for on-shell configurations.}
	\label{tab:examples}
\end{table}

For non-minimally coupled scalar fields, however, all the main energy conditions can be violated, see e.g.~\cite{Barcelo:2000zf,Barcelo:1999hq} and Section IIA of
\cite{Flanagan:1996gw}. Let us first consider off-shell configurations in an arbitrary metric $g_{ab}$. We can pick an arbitrary null vector $k^a$ and extend it to a geodesic $\gamma$. Contracting with the stress tensor (\ref{eqn:tmunu}) we find
\bea
\label{eqn:tmnnec}
T_{ab} k^a k^b&=&(k^a\nabla_a\phi)^2 +\xi G_{ab}k^ak^b\phi^2-\xi (k^a\nabla_a)^2\phi^2\nonumber\\
&=&(1-2\xi)(k^a\nabla_a\phi)^2+\xi G_{ab}k^ak^b\phi^2-2\xi \phi(k^a\nabla_a)^2\phi\,.
\eea
For any point $x$ on $\gamma$ we can choose a configuration with $\phi(x)=c_1$, $k^a\nabla_a\phi(x)=0$ and
$(k^a\nabla_a)^2\phi(0)=c_2$. Since $\xi\not=0$ we can choose for each $c_1\not=0$ a $c_2$ such that the expression becomes negative. This means that the NEC is violated and hence so are the other main energy conditions.

The argument above can be extended to test field configurations if we use the well-posedness of the characteristic (or null) initial value problem for the Klein-Gordon equation \cite{Lupo2018}. We can choose a partially null Cauchy surface $\Sigma$ that contains a part of the null geodesic $\gamma$ near $x$ and we can prescribe initial values on $\Sigma$ that will violate the NEC at $x$.

The conclusion remains true for on-shell configurations, although the initial value problem is rather subtle. Because the stress tensor (\ref{eqn:tmunu}) contains a multiple of the Einstein tensor $G_{ab}$, we can write the equations of motion (\ref{eqn:field},\ref{eqn:EE}) as
\bea
\label{eqn:nonminimal1}
\frac{1}{8\pi}(1-8\pi\xi\phi^2)G_{ab}&=&
(\nabla_a \phi)(\nabla_b \phi)-\frac{1}{2} g_{ab} ((\nabla \phi)^2+m^2 \phi^2)+\xi(g_{ab} \Box_g-\nabla_a \nabla_b) \phi^2 \,, \nonumber\\
(-\Box_g+m^2+\xi R)\phi&=&0 \,.
\eea
When $\xi>0$, the coefficient in front of $G_{ab}$ vanishes at the points where $8\pi\xi\phi^2=1$. At these points the order of Einstein's equation reduces and we should expect the initial value problem to be ill-posed. Values for $\phi$ with $8\pi\xi\phi^2>1$ are called ``trans-Planckian'' and they are extremely large. \cite{Visser:1999de} argues that trans-Planckian values for $\phi$ should not be considered problematic, unless the components of the stress tensor $T_{ab}$ (in a suitable orthonormal basis) also reach such unphysically large values. Other authors, however, prefer to discard trans-Planckian values of $\phi$ as unphysical, either because of their magnitude, or for the following reason. When $8\pi\xi\phi^2\not=1$ we can write Einstein's equation in terms of an effective stress tensor as
\bea
\label{eqn:nonminimal2}
G_{ab}&=&8\pi\frac{1}{1-8\pi\xi\phi^2}T^{\mathrm{eff}}_{ab}\nonumber\\
T^{\mathrm{eff}}_{ab}&=&(\nabla_a \phi)(\nabla_b \phi)-\frac{1}{2} g_{ab} ((\nabla \phi)^2+m^2 \phi^2)
+\xi(g_{ab} \Box_g-\nabla_a \nabla_b) \phi^2\,.
\eea
Note that our definition of $T^{\mathrm{eff}}_{ab}$ differs from that of \cite{Barcelo:2000zf} in that we separate off the factor $(1-8\pi\xi\phi^2)^{-1}$, which we may think of as an effective Newton's constant. For trans-Planckian values the effective Newton's constant changes sign, which has been cited as another reason to consider trans-Planckian values as unphysical
\cite{Brown:2018hym}.

When $8\pi\xi\phi^2<1$ we can obtain solutions to the initial value problem as follows \cite{BEKENSTEIN1974535,Barcelo:2000zf,Barcelo:1999hq,Huebner_1995}. We can identify the system with a minimally coupled system $(\tilde{g}_{ab},\tilde{\phi})$, where 
\bea
\label{eqn:couplingtrafo}
\tilde{g}_{ab}&:=&|1-8\pi\xi\phi^2|^{\frac{-2}{n-2}}g_{ab}\nonumber\\
\tilde{\phi}&:=&F(\phi)\,,
\eea
and where $\xi_c:=\frac{n-2}{4(n-1)}$ is the conformal coupling constant and the function $F(x)$ satisfies
$\partial_xF(x)=|1-8\pi\xi\phi^2|^{-1}\sqrt{1-8\pi\xi\left(1-\frac{\xi}{\xi_c}\right)x^2}$. The equations of motion turn out to be
\be
\label{eqn:minimal}
\frac{1}{8\pi}\tilde{G}_{ab}=
(\tilde{\nabla}_a \tilde{\phi})(\tilde{\nabla}_b \tilde{\phi})-\frac{1}{2} \tilde{g}_{ab} ((\tilde{\nabla} \tilde{\phi})^2+M(\tilde{\phi})
)-\Box_{\tilde{g}}\tilde{\phi}+\frac12M'(\tilde{\phi})=0 \,,
\ee
where $M(\tilde{\phi})=m^2|1-8\pi\xi\phi^2|^{\frac{-n}{n-2}}\phi^2$ with $\phi=F^{-1}(\tilde{\phi})$. Although the potential $M(\tilde{\phi})$ is non-linear, the minimally coupled system (\ref{eqn:minimal}) has a well-posed initial value problem \cite{Ringstrom}. This allows us to transform initial data for (\ref{eqn:nonminimal2}) with $8\pi\xi(\phi|_{\Sigma})^2<1$ into initial data for (\ref{eqn:minimal}), then find the maximal globally hyperbolic development $(\tilde{g}_{ab},\tilde{\phi})$ and, as long as the values of $\tilde{\phi}$ remain in the range of $F$, we can transform back to a solution $(g_{ab},\phi)$.\footnote{For $\xi\not\in(0,\xi_c)$, the range of $F$ is $\mathbb{R}$, so this imposes no restriction. For $\xi\in(0,\xi_c)$ the range of $F$ is bounded and it is not immediately obvious whether the values of $\tilde{\phi}$ may dynamically develop out of this range.}
The solution found in this way satisfies $8\pi\xi\phi^2<1$ everywhere, but it may not be a maximal globally hyperbolic development, because it may be possible to extend the solution $(g_{ab},\phi)$ to a larger region by allowing $\phi$ to take values with $8\pi\xi\phi^2\ge1$.\footnote{For initial data with $\pm\phi|_{\Sigma}>(8\pi\xi)^{-\frac12}$ we can often proceed in an analogous way to find a well-posed initial value formulation with solutions satisfying $\pm\phi>(8\pi\xi)^{-\frac12}$. Indeed, we can apply the same transformation (\ref{eqn:couplingtrafo}) when $\xi\not\in(0,\xi_c)$, or when $\xi\in(0,\xi_c)$ and $(8\pi\xi)^{-\frac12}<|\phi|<\left(8\pi\xi\left(1-\frac{\xi}{\xi_c}\right)\right)^{-\frac12}$.}

Notwithstanding the subtleties in the initial value problem for non-minimally coupled scalar fields, one may show that the NEC can be violated for all couplings $\xi\not=0$. In the massless case a range of explicit examples of violations, including naked singularities and wormholes, were found in \cite{Barcelo:2000zf,Barcelo:1999hq}. Although all these examples include trans-Planckian values for
$\phi$, one may also construct violations with $8\pi\xi\phi^2<1$ for all $m\ge0$ and $\xi\not=0$, e.g.~by engineering initial data as in the proof of Proposition \ref{Prop_ECoffshell2}. The only added difficulty is that one has to obtain suitable values for second derivatives of $\phi$ in order to violate the NEC. However, using the equations of motion one can eliminate all occurrences of second normal derivatives of $\phi$ and of the scalar curvature $R$ and one can show that initial data solving the constraints and violating the NEC do exist.

\medskip

Finally we will briefly discuss the history of violation of the trace energy condition (TEC), $-T\ge0$. 
This example is historically and physically interesting and is often used as a cautionary tale against ``physically reasonable'' assumptions. However direct analogies with other energy conditions should be avoided as the TEC is its own special case. 

In the late 1950's it was proven that the TEC holds for electromagnetic interactions of a Fermi gas even in high temperatures (see Ch.~6.11 of \cite{zel1996stars}). This created the expectation that the proof could be generalized to other interactions. However, there was an indication that the inequality was not as general as thought. The limit of the inequality $\rho \to 3P$ leads to a limit of the speed of sound $a_{\mathrm{sound}} \to c/\sqrt{3}$, but any relativistic theory would be expected to have as limit the speed of light.  

The proof of violation of the TEC came from Zel'dovich in 1961 \cite{zeldovich1961translation}. He considered baryonic interactions such as those in neutron stars. The interaction there is not governed by the Coulomb potential, but rather by the Yukawa potential
\be
V(r)=\frac{g e^{-\mu r}}{r} \,,
\ee
where $g$ is the charge and $\mu^{-1}$ the interaction radius. The exponential allows for uniform charge density for macroscopic systems and the energy density is 
\be
\rho=n\rho_1=nm+\frac{g^2 n^2}{2}\int \frac{e^{-\mu r}}{r} \mathrm{d}vol= nm+\frac{2\pi g^2 n^2}{\mu^2} \,,
\ee
where we are considering particles at rest with rest mass $m$, $n$ is the number density and $\rho_1$ is the energy of one particle. The pressure is
\be
P=-\frac{\partial \rho_1}{\partial n^{-1}}=\frac{2\pi g^2 n^2}{\mu^2} \,.
\ee
At the limit of large $n$ we have $\rho \to P$ which violates the TEC. Thus, although the TEC seemed like a reasonable condition, it was strongly dependent on the form of the interaction potential. The TEC is largely abandoned, but a weaker form of it, the subdominant trace energy condition has appeared in recent work \cite{Bekenstein:2013ztp}.

\medskip

Thus we see that all pointwise energy conditions can be violated. In particular the NEC, the weakest of the main energy conditions, can be violated by the non-minimally coupled scalar field, which is a relatively simple classical system. This fact raises serious questions about the range of validity of the energy conditions and even about whether they should be used at all \cite{Barcelo:2002bv,Visser:1999de}. Other examples of scalar field theories which can violate the NEC can be found by considering Lagrangians that depend on second order or higher derivatives, but whose equations of motion are still of second order (though not necessarily quasi-linear), see \cite{Rubakov_2014} for a review. Nevertheless, there are indications that violations of an energy condition at some point or region of spacetime are often compensated for, or even overcompensated for, at other points, both for quantum fields \cite{Ford:1999qv,Bekenstein:2013ztp} and for classical ones \cite{Fewster:2007ec}. In Section \ref{subs:ANEC} we will therefore consider the validity of averaged energy conditions for classical fields, taking averages along causal geodesics. First, however, we will consider quantum fields, where violations of energy conditions are generic and averaging over spacetime regions is a standard procedure.

\section{Quantum energy inequalities}
\label{sec:QEI}

In the mid 1960's, due to a simple argument by Epstein, Glaser and Jaffe \cite{Epstein:1965zza}, it became apparent that the positivity of the energy density is in general incompatible with quantum field theory (QFT). In particular they proved that non-positivity is necessary for any (non-vanishing) Wightman field in Minkowski space with vanishing vacuum expectation values. To review their result we need the Reeh-Schlieder Theorem for local observables (see \cite{haag2012local} (II 5.3) or \cite{reeh1961bemerkungen} for the original article in German).

\begin{theorem}
	\label{the:Reeh}
	Let $\mathcal{A}(\mathcal{O})$ the set of all operators of a QFT localized in a fixed region $\mathcal{O}$ in Minkowski space and $\Omega$ the vacuum state. Then the set of vectors $\mathcal{A}(\mathcal{O})\Omega$ is dense in Hilbert space $\mathcal{H}$. 
\end{theorem}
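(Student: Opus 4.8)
The plan is to establish the equivalent statement that any vector $\psi\in\mathcal{H}$ orthogonal to $\mathcal{A}(\mathcal{O})\Omega$ must vanish, working in the standard framework in which the net $\mathcal{O}'\mapsto\mathcal{A}(\mathcal{O}')$ satisfies isotony and translation covariance, the generators $P^\mu$ of the translation group $U$ obey the spectrum condition (joint spectrum in the closed forward cone $\overline{V}_+$, in particular $H\ge0$), $U$ leaves $\Omega$ invariant, and $\Omega$ is cyclic for the algebra generated by all the local algebras. Since $\mathcal{O}$ is open, isotony lets me choose a smaller region $\mathcal{O}_0$ with $\overline{\mathcal{O}_0}\subset\mathcal{O}$ together with a neighbourhood $N$ of the origin such that $\mathcal{O}_0+a\subset\mathcal{O}$ for every $a\in N$. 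For $A_1,\dots,A_k\in\mathcal{A}(\mathcal{O}_0)$ and the translation automorphism $\alpha_a(A)=U(a)AU(a)^{-1}$, every $\alpha_{a_i}(A_i)$ with $a_i\in N$ lies in $\mathcal{A}(\mathcal{O})$, hence so does their product; using $U(a)\Omega=\Omega$ this gives
\[
F(a_1,\dots,a_k):=\langle\psi,\,\alpha_{a_1}(A_1)\cdots\alpha_{a_k}(A_k)\,\Omega\rangle=0\qquad\text{for all }a_i\in N .
\]
Inserting $U(a)^{-1}U(a')=U(a'-a)$ and $U(a_k)^{-1}\Omega=\Omega$ rewrites the argument of $F$ as $U(\xi_1)A_1U(\xi_2)A_2\cdots U(\xi_k)A_k\Omega$ with $\xi_1=a_1$ and $\xi_j=a_j-a_{j-1}$, so $F$, viewed as a function of $(\xi_1,\dots,\xi_k)$, vanishes for all small real $\xi_j$.

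The key step is analytic continuation using the spectrum condition. Writing $U(a)=e^{ia_\mu P^\mu}$, the operator $e^{iz_\mu P^\mu}$ is bounded whenever $\operatorname{Im}z$ lies in $\overline{V}_+$ (with the appropriate orientation), because then $e^{-(\operatorname{Im}z)_\mu P^\mu}$ is bounded on the spectrum. Consequently $F$ is the boundary value of a function $\widehat F(\xi_1,\dots,\xi_k)$ holomorphic in the tube $\{\operatorname{Im}\xi_j\in V_+\text{ for all }j\}$ and continuous up to the real boundary. Since $\widehat F$ has continuous boundary values vanishing on a nonempty open subset of the real slice, the edge-of-the-wedge theorem (applied one variable at a time, together with the identity theorem) forces $\widehat F\equiv0$, hence $F(\xi_1,\dots,\xi_k)=0$ for all real $\xi_j$. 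Undoing the substitution, $\langle\psi,\,\alpha_{a_1}(A_1)\cdots\alpha_{a_k}(A_k)\Omega\rangle=0$ for \emph{arbitrary} translations $a_1,\dots,a_k$ and all $A_i\in\mathcal{A}(\mathcal{O}_0)$.

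To finish, I use covariance: $\alpha_a(\mathcal{A}(\mathcal{O}_0))=\mathcal{A}(\mathcal{O}_0+a)$, and as $a$ ranges over all translations the regions $\mathcal{O}_0+a$ cover Minkowski space, so via a partition of unity the $*$-algebra generated by $\bigcup_a\mathcal{A}(\mathcal{O}_0+a)$ contains all local observables, i.e.\ it coincides with the algebra generated by all the $\mathcal{A}(\mathcal{O}')$. Since $\Omega$ is cyclic for that algebra, the vectors $\alpha_{a_1}(A_1)\cdots\alpha_{a_k}(A_k)\Omega$ span a dense subspace of $\mathcal{H}$; as $\psi$ is orthogonal to all of them, $\psi=0$. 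I expect the analytic-continuation step to be the main obstacle: one must verify that $F$ genuinely extends to a holomorphic function on the tube with continuous (not merely distributional) boundary values, and then deploy the edge-of-the-wedge machinery to propagate the vanishing from a small real neighbourhood to the entire real slice. The remaining ingredients — shrinking $\mathcal{O}$ via isotony, invariance of $\Omega$, and cyclicity of $\Omega$ for the global algebra — are routine, although the last is an axiom (or a consequence of irreducibility of the vacuum sector) rather than something proved here.
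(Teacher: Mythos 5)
Your argument is correct: it is the standard Reeh--Schlieder proof (spectral analyticity in the difference variables, edge-of-the-wedge propagation of the vanishing from a small real neighbourhood to all translations, then cyclicity of $\Omega$ for the global algebra), which is exactly the proof in the reference (Haag, II.5.3) that the paper cites --- the paper itself states Theorem \ref{the:Reeh} without proof. The only point worth making explicit is that your ``partition of unity'' step amounts to the weak additivity property $\bigvee_a \mathcal{A}(\mathcal{O}_0+a)=\bigvee_{\mathcal{O}'}\mathcal{A}(\mathcal{O}')$, which in the purely algebraic setting is a standard extra hypothesis (automatic for theories generated by Wightman fields) rather than a consequence of isotony and covariance alone, alongside the cyclicity axiom you already flag.
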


Then the following theorem is immediate \cite{Epstein:1965zza}. 

\begin{theorem}
	\label{the:EpsteinGlaserJaffe}
	Let $A \in \mathcal{A}(\mathcal{O})$ be a self-adjoint local operator of a QFT such that $\langle \psi| A \psi\rangle \geq 0$ for all $\psi$ in the domain of $A$. Suppose $\langle \Omega| A\Omega \rangle=0$. Then $A \equiv 0$. 
\end{theorem}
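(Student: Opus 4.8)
The plan is to combine the positivity of $A$ with microcausality and the Reeh--Schlieder property (Theorem \ref{the:Reeh}). First I would show that $A$ annihilates the vacuum: since $A$ is self-adjoint and non-negative it has a non-negative self-adjoint square root, and the hypothesis $\langle\Omega|A\Omega\rangle=\|A^{1/2}\Omega\|^{2}=0$ forces $A^{1/2}\Omega=0$, hence $A\Omega=0$. (One can avoid the square root: the assumption says that the non-negative sesquilinear form $(\psi,\phi)\mapsto\langle\psi|A\phi\rangle$ vanishes at $(\Omega,\Omega)$, so Cauchy--Schwarz gives $\langle\psi|A\Omega\rangle=0$ for every $\psi$ in the domain of $A$, i.e.\ $A\Omega=0$.)

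Next I would spread this zero over a dense set of vectors. Pick an open region $\mathcal{O}_{1}$ spacelike separated from $\mathcal{O}$, which exists because the causal complement of a bounded region of Minkowski space has non-empty interior. Microcausality gives $[A,C]=0$ on the common dense domain of the local operators, for every $C\in\mathcal{A}(\mathcal{O}_{1})$, so that
\[
AC\Omega=CA\Omega=0\qquad\text{for all }C\in\mathcal{A}(\mathcal{O}_{1}).
\]
By Theorem \ref{the:Reeh} applied to $\mathcal{O}_{1}$, the vectors $C\Omega$ are dense in $\mathcal{H}$. Finally I would conclude: for any $\phi$ in the domain of $A$ and any such $C$, the symmetry of $A$ gives $\langle A\phi\,|\,C\Omega\rangle=\langle\phi\,|\,AC\Omega\rangle=0$, and density of the $C\Omega$ then forces $A\phi=0$; hence $A\equiv0$.

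The argument is short, and the only delicate point is the treatment of domains for the possibly unbounded operator $A$: one needs $C\Omega$ to lie in the domain of $A$ and the identity $AC\Omega=CA\Omega$ to be meaningful, which in the Wightman framework is guaranteed by working on a common dense invariant domain containing $\Omega$ on which microcausality is formulated. If instead $\mathcal{A}(\mathcal{O})$ is taken to be a von Neumann algebra of bounded operators this obstacle vanishes: Theorem \ref{the:Reeh} applied to a spacelike region $\mathcal{O}_{1}$ makes $\Omega$ separating for $\mathcal{A}(\mathcal{O}_{1})'\supseteq\mathcal{A}(\mathcal{O})$, so $A\Omega=0$ already yields $A=0$.
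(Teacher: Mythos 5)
Your proposal is correct and follows essentially the same route as the paper: first $A\Omega=0$ via the non-negative square root (the paper's first two lines), then Reeh--Schlieder combined with local commutativity to spread the zero over a dense set and conclude $A\equiv0$. You have merely unpacked the paper's final sentence (``From Theorem \ref{the:Reeh} and local commutativity, $A$ must vanish'') and added a careful remark on domains, which is a faithful elaboration rather than a different argument.
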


\begin{proof}
	Since $\langle \Omega| A\Omega \rangle=0$ and $A$ is non-negative, $\| A^{1/2} \Omega \|^2=0$. Then $A^{1/2} \Omega=0$ and so $A\Omega=0$. From Theorem~\ref{the:Reeh} and local commutativity, $A$ must vanish.
\end{proof}

Another way to state the argument is that if $A$ has a zero expectation value in vacuum, it is either identically zero or it necessarily admits negative measurement values (see \cite{Fewster:2012yh} for discussion).

The inevitability of negative energy densities in QFT could potentially have drastic consequences. In 1978 L.H.~Ford \cite{Ford:1978qya} realized that negative energy densities and, more importantly, the existence of negative energy fluxes could lead to the violation of the second law of thermodynamics. Imagine a negative energy flux, initially in a pure state, directed towards a hot body. Then the absorption of the negative energy would decrease the temperature and thus the entropy. A similar situation is possible if the negative energy flux is directed towards a black hole.

In the same seminal paper Ford showed that a macroscopic violation of the second law could be avoided if the magnitude and duration of the negative energy flux is constrained by an inequality of the form
\be
\label{eqn:firstQEI}
|F| \lesssim t_0^{-2} \,,
\ee
where $F$ is the flux and $t_0$ the time that it lasts. Under this assumption, the magnitude of the change of the energy of the absorber cannot exceed $t_0 |F|$. By the time-energy uncertainty relation $\Delta E  \gtrsim t_0^{-1}$ so no macroscopic violation can occur. For a Schwarzschild black hole he argued that a similar conclusion holds, even without invoking the uncertainty principle. Furthermore, Ford also showed that free quantum scalar fields in flat two-dimensional Minkowski space obey an inequality of the form of (\ref{eqn:firstQEI}), even though they violate classical energy conditions.

Equation (\ref{eqn:firstQEI}) is the first example of a \emph{quantum energy inequality} (QEI). Since the pioneering work of Ford in \cite{Ford:1978qya}, QEIs have been derived for a variety of quantum fields in flat and curved spacetimes and in general number of dimensions.

\medskip

For quantum fields, QEIs are formulated in terms of the stress tensor of the field. It is important to note, however, that the definition of the stress tensor in (\ref{eqn:stresstensor}) does not immediately generalize to QFTs. A common way to handle this problem is to formulate a set of assumptions that any reasonable definition of a quantum stress tensor should satisfy and to assume or prove that such a tensor exists.

In Minkowski space it is usually assumed that for any QFT the spacetime translations are generated by self-adjoint operators $P_{\mu}$ on the vacuum Hilbert space. One may then assume that a stress tensor exists, which should be given by a local quantum field $T_{\mu\nu}(x)$ which locally generates the translations in the inertial coordinates $x^{\mu}$, i.e.
\be
\label{eq:qTaxiom}
\int_{x^0=0} T_{0\mu}(x)\ \mathrm{d}^{n-1}x = P_{\mu}
\ee
in some suitable sense. Additional technical assumptions are also often required of the stress tensor, see e.g.~\cite{Epstein:1965zza,Verch:1999nt}.

There is no hope of extending (\ref{eq:qTaxiom}) to a generally covariant setting, because a generic spacetime has no symmetries and hence no operators $P_{\mu}$. However, different axioms have been formulated which aim to generalize the definition (\ref{eqn:stresstensor}) of the stress tensor to the setting of perturbatively interacting quantum fields \cite{Hollands:2004yh}. For polynomial interactions the difficulty is to treat polynomial terms in (derivatives of) the quantum field, which can be defined using regularization and renormalization techniques. In general there is no uniqueness for the stress tensor of a quantum field.

For later reference we recall here the construction of the stress tensor for a free quantum scalar field of any mass $m\ge0$ and scalar curvature coupling $\xi\in\mathbb{R}$ in any globally hyperbolic spacetime (cf.~\cite{Hollands:2004yh}). This construction extends the normal ordering prescription in Minkowski space and the extension to curved spacetimes is local and covariant. In Minkowski space one can verify that the stress tensor satisfies (\ref{eq:qTaxiom}) and we refer to \cite{Brunetti2003} for the closely related computation of the relative Cauchy evolution in general curved spacetimes.

Because the classical stress tensor (\ref{eqn:tmunu}) of a free scalar field is quadratic in the field, we can choose a partial differential operator $T_{ab}^{\mathrm{split}}$ on $M\times M$ such that
\be
T_{ab}=\iota^*\left(T_{ab}^{\mathrm{split}}\,\phi\otimes\phi\right)\,,
\ee
where $\iota^*$ is the pull-back under the diagonal map $\iota:M\to M\times M:x\mapsto (x,x)$ and $(\phi\otimes\phi)(x,x')=\phi(x)\phi(x')$. To be specific, we choose the operator $T_{ab}^{\mathrm{split}}$ near the diagonal $x=x'$ in $M\times M$ to be
\bea
\label{eqn:pointsplitop}
 T_{ab}^{\mathrm{split}}&=&(1-2\xi) (\nabla_a \otimes \nabla_b)-\frac{1}{2}g'_{ab} \left[(1-4\xi)(g')^{cd} (\nabla_c \otimes \nabla_d)-4\xi(\mathbb{1}\otimes_s \Box_g)
+\mathbb{1} \otimes_s (m^2+\xi R)\right]\nonumber\\
&&\qquad +\frac12\xi g'_{ac}(\mathbb{1}\otimes (R\indices{^c_b}-2\nabla^c\nabla_b))
+\frac12\xi g'_{cb}((R\indices{_a^c}-\nabla_a\nabla^c)\otimes\mathbb{1})\,,
\eea
where $g^{ca}g'_{ab}(x,x')$ is the parallel propagator from $T_{x'}M$ to $T_xM$ and $\otimes_s$ the symmetrized tensor product.

The tensor product $\phi\otimes\phi$ is well-defined also for quantum fields and so is $T_{ab}^{\mathrm{split}}(\phi\otimes\phi)$, but the pull-back is in general ill-defined due to the distributional nature of the quantum fields. Nevertheless, using the Hadamard series we can construct a local and covariant distribution $H_2(x,x')$, which is defined near the diagonal in $M\times M$, such that
\be
\label{def:Tfin}
T^{\mathrm{fin}}_{ab} := \iota^*\,T_{ab}^{\mathrm{split}}(\phi\otimes\phi-H_2)
\ee
does make sense. More precisely, there is a large set of states $\omega$, the so-called Hadamard states, for which the distribution
\be
T_{ab}^{\mathrm{split}}\left(\langle\phi\otimes\phi\rangle_{\omega}-H_2\right)\nonumber
\ee
is a continuous tensor field on $M\times M$ and its restriction to the diagonal $M$ is even smooth, but it is not necessarily conserved. Furthermore, we can find a local and covariant function of the metric $Q$ on $M$ such that the operator $T^{\mathrm{fin}}_{ab}-g_{ab}Q$ is conserved. In this way we find a suitable quantum stress tensor $T^{\mathrm{ren}}_{ab}$ with expectation value
\be
\label{def:Tren}
\langle T^{\mathrm{ren}}_{ab}(x) \rangle_{\omega} =\langle T^{\mathrm{fin}}_{ab}(x) \rangle_{\omega} -Q(x) g_{ab} (x) + C_{ab}(x) \,,
\ee
which satisfies all the desired assumptions. Here $C_{ab}(x)$ is the remaining renormalization freedom, which has been classified in \cite{Hollands:2004yh}. It consists of a smooth, symmetric, conserved tensor field $C_{ab}$ that does not spoil the validity of the axioms, but we will not need its detailed form.
If $\omega_0$ is the Minkowski vacuum state of the theory (or another quasi-free state), we can express the normal ordered stress tensor (w.r.t.~$\omega_0$) as $\nord{\,T_{ab}}_{\omega_0}=T^{\mathrm{ren}}_{ab}-\langle T^{\mathrm{ren}}_{ab}\rangle_{\omega_0}$.

\subsection{Overview of quantum energy inequalities}
\label{sub:QEIover}

In this section we present some general aspects of QEIs. We refer the reader to \cite{Fewster:2012yh, Fewster2017QEIs} for more details.

For a general quantum field with a quantum stress tensor, we let $\rho$ denote the energy density, the EED (\ref{eqn:EED}), or a similar quantity that we wish to bound. We will use the terms \emph{quantum weak energy inequality} (QWEI), \emph{quantum strong energy inequality} (QSEI) and \emph{quantum null energy inequality} (QNEI), according to which component(s) of the stress tensor we wish to bound. 
E.g., when we consider QWEIs for free scalar fields we can take $\rho= T^{\mathrm{ren}}_{ab}t^a t^b$, where $t^a$ is smooth timelike vector field. We emphasize that our terminology does not entail any restrictions on the causal nature of the region over which the average will be taken, so we can consider e.g.~a QNEI along a timelike curve.

The most general form of a QEI that bounds $\rho$ takes the form
\be
\label{eqn:QEIgen}
\langle \rho (f) \rangle_\omega \geq -\langle \mathfrak{Q}(f) \rangle_\omega \,,
\ee
where $f$ is a suitable non-negative test function or distribution on spacetime and the operator $\mathfrak{Q}(f)$ is allowed to be unbounded. This inequality should be satisfied for all states $\omega$ in a suitable class and for free fields we will always consider the class of Hadamard states. We may now try to classify the QEI by means of the properties of the operator $\mathfrak{Q}(f)$.

\medskip

A first concern is that the QEI may be trivially satisfied (or violated), e.g.~when $\mathfrak{Q}(f)=-\rho$ (or $\mathfrak{Q}(f)=-\rho-1$). To avoid such trivial bounds Fewster and Osterbrink \cite{Fewster:2007ec} developed a criterion to distinguish them. The main idea is that the lower bound of the QEI needs to be of lower order, in terms of energy, than the averaged energy density.

\begin{definition}
	\label{def:nontrivial}
	A QEI is called \emph{non-trivial} if there do not exist constants $c$, $c'$ such that
	\be
	\langle \rho (f) \rangle_\omega \leq c+c' | \langle \mathfrak{Q}(f) \rangle_\omega| \,,
	\ee
	for all states $\omega$ in the class, unless $f$ is identically zero.
\end{definition}

For non-minimally coupled scalar fields, \cite{Fewster:2017mtt} showed that any bound where the operator $\mathfrak{Q}(f)$ depends only on the fields $\nord{\Phi^2}$ and $\mathbb{1}$, while the energy density involves terms with derivatives of $\Phi$, is non-trivial.

\medskip

When the operator $\mathfrak{Q}(f)$ is a multiple of the identity operator we call the QEI \emph{state-independent}. Because
$\langle \mathbb{1} \rangle_\omega=1$ we see that $\langle \mathfrak{Q}(f) \rangle_\omega=Q(f)$ is indeed independent of the state. This also makes the QEI non-trivial. 

For scalar fields, the existence of a state-independent QEI implies that the corresponding classical theory satisfies a corresponding pointwise energy condition (cf.~Thm.~\ref{thm:classlimit} below). For a simple physical explanation we note that a single-particle state can only admit negative energies for some test function if the classical model violates the energy condition.
If $\Phi_1$ is the normalized state vector of this state we have
\be
\langle \Phi_1, \rho(x) \Phi_1 \rangle <0 \,, \, \textrm{for some} \, \, x \,.
\ee
We may then consider the $N$-particle state vector $\Phi_N$ which is found by tensoring together $N$ copies of $\Phi_1$. Then the average energy is
\be
\langle \Phi_N, \rho(x) \Phi_N \rangle=N\langle \Phi_1, \rho(x) \Phi_1 \rangle \,.
\ee
Because the average energy scales with $N$ we cannot have a state-independent bound. Explicit examples of this are given in \cite{Fewster:2007ec} for the energy density of the non-minimally coupled scalar field and in \cite{Fewster:2018pey} for the EED of the minimally coupled scalar field. Note that these references do obtain non-trivial QEIs, but they are state-dependent.

We should note that this line of argument does not work for fermionic fields. Since they follow Pauli's exclusion principle, we cannot tensor together multiple copies of single-particle states. However, QEIs have been established for Dirac and Rarita-Schwinger fields as we'll discuss in Section \ref{sub:examplesQEI}.

For interacting fields the situation is also different. As in the free theory, a single-particle state can only admit negative energies if the classical theory violates the energy condition. However, the average energy density of an $N$-particle state is in general no longer additive with respect to tensor products of single-particle states,
\be
\langle \Phi_N, \rho(x) \Phi_N \rangle \neq N\langle \Phi_1, \rho(x) \Phi_1 \rangle \,.
\ee
Interestingly, this non-linearity can lead to state-independent QEIs for interacting theories even though it is easy to construct single-particle states with negative energy. An example of that is the QEI for the Ising model \cite{Bostelmann:2013mxa}. 

\medskip

Another natural condition to impose on a QEI is that the operator $\mathfrak{Q}(f)$ in the lower bound depends in a local and covariant way on the spacetime metric and the test function $f$, cf.~\cite{Fewster:2006iy}. We will call such QEIs \emph{absolute} and they naturally arise for free fields when the stress tensor is regularized and renormalized using the Hadamard prescription. Note that the last two terms in (\ref{def:Tren}) yield state-independent, local and covariant terms, so our terminology is independent of the renormalization freedom. However, if we regularize the stress tensor using a reference state $\omega_0$ instead of the Hadamard parametrix, we do not automatically find an absolute QEI. Instead one arrives at a QEI of the form
\be
\langle\nord{\rho}_{\omega_0}(f)\rangle_{\omega} = \langle  \rho (f)\rangle_{\omega} - \langle  \rho (f)\rangle_{\omega_0}\geq -\langle \mathfrak{Q}_{\omega_0}(f)\rangle_\omega \,,
\ee
which is called a \emph{difference} QEI. Note that the lower bound may depend on the reference state $\omega_0$ as well as on the state $\omega$. By rearranging this difference QEI and introducing $\mathfrak{Q}(f)=\mathfrak{Q}_{\omega_0}(f)-(\nord{\rho}_{\omega_0} (f)) \mathbb{1}$ we do recover the form of (\ref{eqn:QEIgen}), but this only yields an absolute QEI when the lower bound is local and covariant. In general this may fail, because it is impossible to make a local and covariant choice of a preferred reference state in general spacetimes. An absolute QEI is only recovered in this way when
\be
\mathfrak{Q}_{\omega_0}-\mathfrak{Q}_{\omega_1}=\left(\rho_{\omega_0} (f)-\rho_{\omega_1} (f) \right) \mathbb{1} \,,
\ee
for all states $\omega_0$ and $\omega_1$ \cite{Fewster:2006iy}. The quantum inequality of the Wick square of a free scalar field, e.g., satisfies this condition.

\medskip

Finally, QEIs can be distinguished by the support of the distribution $f$. We will mostly be interested in \textit{worldline} QEIs, where
$f$ is supported on a smooth timelike curve $\gamma$. (For a discussion of QEIs on null curves see Sec.~\ref{sub:examplesQEI}.)
For worldline QEIs we can typically change the notation and write them as lower bounds on 
\be 
\langle \rho \circ \gamma \rangle_\omega (f^2)=\int_{-\infty}^{\infty} \mathrm{d}\tau f^2 (\tau) \langle \rho\rangle_\omega   (\gamma(\tau)) \,,
\ee 
where $\tau$ is an affine parameter (often the proper time) along $\gamma$ and $f \in C_0^{\infty}(\mathbb{R}, \mathbb{R})$ is a real-valued test function. (Interestingly, it does not seem possible to replace $f^2$ by a general positive test function $h\ge0$, because $f=\sqrt{h}$ may not be regular enough.) We use the term \textit{worldvolume} QEI when the average is taken over a spacetime region, or over a region in a smooth timelike hypersurface $\Sigma$ in $M$ of dimension $\ge2$.

\subsection{Examples of quantum energy inequalities}
\label{sub:examplesQEI}

One of the simplest examples of a QEI is for the minimally coupled scalar field of mass $m\ge0$ in Minkowski space. Along the worldline $\gamma$ of an inertial observer in four dimensions the normal ordered energy density $\nord{\rho} = T^{\mathrm{ren}}_{00}-\langle T^{\mathrm{ren}}_{00}\rangle_{\omega_0}$ (where $\omega_0$ is the Minkowski vacuum) satisfies the QWEI
\be
\label{eqn:QEIminflat}
\langle \nord{\rho} \circ \gamma \rangle_\omega (f^2) \geq -\frac{1}{16\pi^2} \int_{-\infty}^\infty \mathrm{d}t\, |f''(t)|^2
\ee
for all Hadamard states $\omega$. This result was first derived by Ford and Roman~\cite{Ford:1994bj,PhysRevD.55.2082} for Lorentzian functions $f(t)=\frac{t_0}{\pi(t^2+t_0^2)}$, where $t_0>0$ is a characteristic time scale. After it was shown \cite{Ford:1997fa} that one should generally use sampling functions with compact support, Fewster and Eveson~\cite{Fewster:1998pu} extended the scope of the inequality of (\ref{eqn:QEIminflat}) to include all such functions. 

We can understand the physical behaviour of this bound by rescaling the sampling function,
\be
\label{eqn:fscale}
f_{t_0}(t)=\frac{f(t/t_0)}{\sqrt{t_0}} \,.
\ee
Then (\ref{eqn:QEIminflat}) becomes
\be
\label{eqn:QEIminflatscale}
\frac{1}{t_0} \int_{-\infty}^{\infty} \mathrm{d}t\, \langle \nord{\rho} (\gamma(t)) \rangle_{\omega}\  |f(t/t_0)|^2 \,  \geq -\frac{C}{t_0^4} \,,
\ee
where the constant $C$ depends on the function $f$ (e.g.~for a Lorentzian function, $C=3/(32\pi^2)$). For $t_0 \to 0$ the lower bound approaches negative infinity, consistent with the fact that the expectation value of the energy density at a point is unbounded below. For $t_0 \to \infty$ the bound goes to zero, connecting with the averaged weak energy condition (see
Sec.~\ref{subs:ANEC}).

Using the same methods one can prove QEIs which average over general timelike curves, not just geodesics. The lower bound then also involves the acceleration of the curve $\gamma$. As an example, the QWEI (\ref{eqn:QEIminflat}) when averaged over a uniformly accelerating timelike curve becomes \cite{Fewster:2006kt}
\be
\langle \nord{\rho} \circ \gamma \rangle_\omega (f^2) \geq -\frac{1}{16\pi^2} \int_{-\infty}^\infty \mathrm{d}t \left(|f''(t)|^2+2\alpha_0^2|f'(t)|^2+\frac{11 \alpha_0^4}{30} |f(t)|^2 \right)
\ee
where $\alpha_0$ is the acceleration. For $\alpha_0 t \gg 1$ it is the last term that dominates the bound. This example shows that observers on non-inertial trajectories can experience negative energies for long times. Note, however, that the energy required to keep the constant acceleration grows exponentially with proper time, faster than the negative energy experienced by the observer \cite{Fewster2017QEIs}. Ford and Roman \cite{Ford:2013kga} also considered the case of accelerated observers and studied the worldline of a particle undergoing sinusoidal motion in space. They showed that in this case it is possible for the averaged energy density to become arbitrarily negative at a linear rate. As they note, though, this example does not change the constraints set by inertial trajectory QEIs on macroscopic effects with negative energy.  

\medskip

Historically, the first QEIs were difference QWEIs in Minkowski space, followed by difference QWEIs in static spacetimes \cite{Fewster:1998xn}. The first QWEI for the minimally coupled scalar field on general spacetimes was derived by Fewster \cite{Fewster:1999gj}. It is a state-independent difference QWEI, which makes use of an arbitrary Hadamard reference state $\omega_0$ to regularize the energy density. If we write 
$\rho^{\mathrm{split}}(\tau,\tau') = \left(\xi^a\xi^bT_{ab}^{\mathrm{split}}\phi\otimes\phi)\right)(\gamma(\tau),\gamma(\tau'))$ for the (unregularized) point-split energy density, with both points on the curve $\gamma$, then the QWEI reads
\be
\label{eqn:QEIgendiff}
\langle \nord{\rho}_{\omega_0} \circ \gamma \rangle_\omega(f^2) \geq -\int_0^\infty \frac{\mathrm{d}\alpha}{\pi} \int_{-\infty}^\infty
\mathrm{d}\tau\ \mathrm{d}\tau'\ \langle \rho^{\mathrm{split}}(\tau,\tau')\rangle_{\omega_0} \overline{f_\alpha(\tau)} f_\alpha(\tau') \,,
\ee
where $f_\alpha(\tau)=f(\tau) e^{i\alpha \tau}$. The method of derivation of this QWEI has since been used in several other cases, so it is worth summarizing its main points. First we introduce the normal ordered point-split energy density
\be
\langle \nord{\rho}_{\omega_0}^{\mathrm{split}} \rangle_\omega\ (\tau,\tau') =\langle \rho^{\mathrm{split}}(\tau,\tau')
\rangle_\omega -\langle \rho^{\mathrm{split}}(\tau,\tau')\rangle_{\omega_0} \,,
\ee
and note that the left hand side of (\ref{eqn:QEIgendiff}) is that quantity smeared and restricted to the diagonal $\tau=\tau'$. We have
\bea
\langle \nord{\rho}_{\omega_0} \circ \gamma \rangle_\omega(f^2)&=&\int_{-\infty}^{\infty} \mathrm{d}\tau\, \mathrm{d}\tau'\, f(\tau) f(\tau') \delta(\tau-\tau')
\langle \nord{\rho}_{\omega_0}^{\mathrm{split}} \rangle_\omega\ (\tau,\tau') \nonumber\\
&=& \int_0^\infty \frac{\mathrm{d}\alpha}{\pi} \int_{-\infty}^{\infty} \mathrm{d}\tau\, \mathrm{d}\tau'\, f(\tau) f(\tau') e^{i\alpha(\tau-\tau')}  
\langle \nord{\rho}_{\omega_0}^{\mathrm{split}} \rangle_\omega\ (\tau,\tau') \nonumber\\
&=&\int_0^\infty \frac{\mathrm{d}\alpha}{\pi} \int_{-\infty}^{\infty} \mathrm{d}\tau\, \mathrm{d}\tau'\, 
\langle \rho^{\mathrm{split}}(\tau,\tau')\rangle_\omega \overline{f_\alpha(\tau)} f_\alpha(\tau') \nonumber\\
&&-\int_0^\infty \frac{\mathrm{d}\alpha}{\pi} \int_{-\infty}^{\infty} \mathrm{d}\tau\, \mathrm{d}\tau'\, 
\langle \rho^{\mathrm{split}}(\tau,\tau')\rangle_{\omega_0} \overline{f_\alpha(\tau)} f_\alpha(\tau') \,,
\eea
where we used the properties of delta functions and the fact that
$\langle \nord{\rho}_{\omega_0}^{\mathrm{split}} \rangle_\omega$ is symmetric. Now (\ref{eqn:QEIgendiff}) is obtained by noting that $\langle \rho^{\mathrm{split}}(\tau,\tau')\rangle_\omega$ is of positive type. Additionally, the right-hand side of (\ref{eqn:QEIgendiff}) is finite, because the integrand decays rapidly as $\alpha \to \infty$. For the more technical parts of the poof see Theorem 4.1 of \cite{Fewster:1999gj}, and for a slight generalization of the argument see Lemma 4 of \cite{Fewster:2018pey}. 

The first general absolute QEIs were derived by Fewster and Smith~\cite{Fewster:2007rh}. They include worldline and worldvolume bounds on the expectation value of any operator that is a renormalized square which is symmetric and of positive type. This includes the energy density (QWEI) as well as the null energy density (QNEI) of the minimally coupled scalar field. Their argument was an adaptation of the proof of (\ref{eqn:QEIgendiff}), replacing $\omega_0(x,x')$ by $\tilde{H}(x,x')=(1/2) [H(x,x')+H(x',x)+iE(x,x')]$, where $\frac12iE(x,x')$ is the antisymmetric part of the two-point function. The bound depends only on the local geometry and in the worldline case it takes the form
\be
\langle \nord{\rho} \circ \gamma \rangle_\omega(f^2) \geq -\int_0^\infty \frac{\mathrm{d}\alpha}{\pi} \left[(f\otimes f) (\gamma\otimes\gamma)^* (Q \otimes Q\ \tilde{H}_k ) \right]^{\wedge}(-\alpha,\alpha) \,,
\ee
where $(\gamma\otimes\gamma)^*$ is the distributional pull-back, $^{\wedge}$ indicates the Fourier transform of the term in square brackets and $Q$ an operator of positive type. For the massless minimally coupled Klein-Gordon field, Kontou and Olum used this inequality to derive the first QEIs with explicitly curvature dependent bounds, both for QWEI \cite{Kontou:2014tha} and QNEI \cite{Kontou:2015yha}. These results build on previous work on Minkowski space with background potentials \cite{Graham:2002yr, Kontou:2014eka}.

\medskip

Just like for the pointwise energy conditions, the situation is considerably more complicated for non-minimally coupled fields. To date no absolute QEIs have been derived for the non-minimally coupled scalar field. The root cause is that the operator $T_{ab}^{\mathrm{split}}$ (see (\ref{eqn:pointsplitop})) is not of positive type or symmetric. Fewster and Osterbrink~\cite{Fewster:2007ec} were the first to derive a difference QWEI for the non-minimally coupled scalar field which is non-trivial (in the sense of Def.~\ref{def:nontrivial}), but they also proved that it is impossible to have state-independent bounds in this case. Recently Fewster and Kontou derived a state-dependent, but non-trivial difference QSEI for the massive non-minimally coupled scalar field, both for worldlines and worldvolumes \cite{Fewster:2018pey}.

In terms of the Maxwell and Proca fields, where the classical theories obey the WEC (see Sec.~\ref{sec:pointwise}), difference worldline QEIs have been derived on general globally hyperbolic spacetimes. Fewster and Pfenning~\cite{Fewster:2003ey, Pfenning:2001wx} proved state-independent QWEI bounds using techniques similar to \cite{Fewster:1999gj} and explicit examples for static and Rindler spacetimes were worked out. QEIs for pre-metric electrodynamics have also been derived \cite{Fewster:2017mtt}. 

The situation is quite different for the Dirac field. The Dirac equation for a fermion field $\psi$ is
\be
(i\gamma^\mu \partial_\mu -m) \psi=0 \,,
\ee
where $m$ is the mass of the field. The $\gamma$ matrices have the defining property $\{ \gamma^\mu , \gamma^\nu \}=2 \eta^{\mu \nu}$. The Lagrangian density is
\be
\int \mathrm{d}^4 x \sqrt{-g} \left\{ \frac{i}{2} \left( \bar{\psi} \gamma^\mu \nabla_\mu \psi-(\nabla_\mu \bar{\psi})\nabla^\mu \psi\right)+m \bar{\psi} \psi \right\} \,,
\ee
where $\bar{\psi}=\psi^\dagger \gamma^0$.  The Belinfante-Rosenfeld \cite{belinfante1940current, rosenfeld1940tenseur} stress tensor is found by symmetrizing 
\be
T_{\mu \nu}=\frac{i}{2} \left(\bar{\psi} \gamma_\mu (\nabla_\nu \psi)-(\nabla_\mu \bar{\psi}) \gamma_\nu \psi \right) \,.
\ee
Then the energy density is
\be
T_{00}=\frac{i}{2} [\psi^\dagger \dot{\psi}-\dot{\psi}^\dagger \psi] \,.
\ee
This quantity is unbounded both from above and below and thus it is often stated in the literature that the classical Dirac field violates the WEC. The reason is that viewing the Dirac equation as a single-particle equation can lead to arbitrary negative energies in the ``Dirac sea". Dirac's original view was that antiparticles are ``holes" in this negative energy sea. 

However, from the perspective of QFT the Dirac field has a positive Hamiltonian in Minkowski space as a result of normal ordering using fermionic creation and annihilation operators. Vollick \cite{Vollick:1998sk} was the first to show that explicitly constructed negative energy density states of the Dirac field obey a QWEI in two dimensions. Fewster and Verch \cite{Fewster:2001js} proved that Dirac states that obey the Hadamard condition have bounded normal ordered energy density. Later Dawson, Fewster and Mistry established explicit bounds for difference QWEIs in both Minkowski space \cite{Fewster:2003zn} and globally hyperbolic spacetimes of arbitrary curvature \cite{Dawson:2006py}. Smith \cite{Smith:2007pp} derived the first absolute QWEI for the Dirac field. Rarita-Schwinger fields of spin $\frac32$ are also known to satisfy a QWEI in Minkowski space \cite{PhysRevD.69.064008}.

\medskip

As demonstrated by the previous examples, QEIs have been derived for free fields in quite general settings. However, their status in interacting models is less clear. The earliest interacting examples are conformal field theories (CFTs) in two-dimensional Minkowski space, for which a state-independent QEI was established by Fewster and Hollands under rather general assumptions \cite{FewsterHollands2005}. Moreover, they also showed that the lower bound in the QEI is sharp. The arguments that they presented rely heavily on the analysis of Flanagan almost a decade earlier, who proved a sharp, state-independent QEI for a massless scalar field in two-dimensional Minkowski space \cite{Flanagan:PhysRevD.56.4922}. The CFTs covered by the result of
\cite{FewsterHollands2005} include the unitary, positive energy minimal models and rational CFTs. E.g., for a chiral CFT, the result gives
\be
\int_{-\infty}^{\infty} \mathrm{d}t\, G(t) T_{00}(\gamma(t))\ \ge -\frac{c}{12\pi}\int_{-\infty}^{\infty}  \mathrm{d}t\, \left(\frac{\mathrm{d}}{\mathrm{d}t}\sqrt{G(t)}\right)^2\,,
\ee
where $\gamma$ is the wordline of an inertial observer, $G$ is any positive Schwartz function and $c$ is the central charge of the theory. Note that the same optimal lower bound also applies to averages along spacelike geodesics, although in general dimensions there is usually no QEI for energy densities which are only averaged in a spacelike direction. It is worth noting that Flanagan extended his analysis of the massless free field to curved two-dimensionsal spacetimes using conformal transformations \cite{Flanagan:PhysRevD.66.104007}, but we are not aware of a similar generalization of the results of \cite{FewsterHollands2005} to
curved spacetimes.

More recently, Bostelmann, Cadamuro, and Fewster established a state-independent QEI of the massive Ising model in two-dimensional Minkowski space \cite{Bostelmann:2013mxa}. The bound that they found is interesting, because single-particle states can have negative energies in this model, but due to the interactions the total energy density does not scale linearly with the number of particles (see also Sec.~\ref{sub:QEIover}). While the existence of a state-independent QEI in this model is important, the methods used to derive it may not generalize to other interacting models.

It is worth mentioning that Bostelmann and Cadamuro also established QEIs for another class of integrable QFTs, including the sinh-Gordon model, but only for the one-particle states of such models \cite{Bostelmann:2015lxa}. Another relevant result derives the existence of quantum inequalities for operators arising from an operator product expansion~\cite{Bostelmann:2008hq}. This result applies to theories that satisfy a microscopic phase space condition, which is a condition that restricts the number of degrees of freedom of the QFT in bounded regions of phase space (see loc.cit.~for definitions). Whether the operators in the quantum inequalities include components of the stress tensor for interacting theories is presently unclear, mainly due to the difficulties in defining such a tensor at this level of generality. However, one may expect that this result applies to the energy density of perturbative bosonic QFTs.

\begin{table}[h!]
	\begin{center}
		\begin{tabular}{|c|c|c|c|} \hline
			\textbf{Theory} &  QWEI & QNEI & QSEI \\ \hline
			massless Klein-Gordon ($\xi=0$, $m=0$) & \cmark  & \cmark & \cmark \\ 
			massive Klein-Gordon ($\xi=0$, $m>0$) & \cmark & \cmark & \xmark \\
			Klein-Gordon ($\xi \neq 0$, $m\ge0$) & \xmark & \xmark  & \xmark  \\
			Maxwell/Proca ($m \geq 0$)   & \cmark & ? & ? \\
			Dirac & \cmark & ? & ?  \\
			CFT & \cmark & \cmark & * \\
			Ising & \cmark & ? & ? \\
			\hline
		\end{tabular}
	\end{center}	
	
	\caption{Validity of QEIs. \cmark\ indicates that there is a state-independent QEI for all Hadamard states. \xmark\ indicates that it is impossible to have a state-independent QEI, but state-dependent ones have been derived, while ? indicates that no such QEI has been derived yet but it is perhaps possible. * The CFTs are unitary, positive energy CFTs with a stress tensor in two-dimensional Minkowski space. Due to the dimension, the QSEI cannot be formulated.\label{tab:examples2}}
\end{table}

Table \ref{tab:examples2} summarizes the results on QEIs described. To conclude this section we will now comment on QEIs where the averaging is done over null geodesics. Ford and Roman~\cite{Ford:1994bj} established the first result for a null averaged QEI for the massless Klein-Gordon field in two-dimensional Minkowski space, using a Lorentzian sampling function. Their bound has the form
\be
\frac{\lambda_0}{\pi} \int_{-\infty}^\infty \mathrm{d}\lambda \frac{\langle \nord{T_{ab}}_{\omega_0}k^a k^b\rangle_\omega (\gamma(\lambda))}{\lambda^2+\lambda_0^2} \geq -\frac{1}{16 \pi \lambda_0^2} \,,
\ee
for all $\lambda_0>0$, an affinely parametrized null geodesic $\gamma$, $k^a=(d\gamma/d\lambda)$ where $\lambda$ is the affine parameter. It is easy to see that this inequality is invariant under affine reparametrization.

However, in four spacetime dimensions the situation is rather different. Fewster and Roman~\cite{Fewster:2002ne} showed by an explicit counter-example that QEIs along null geodesics do not exist in four-dimensional Minkowski space for the massless minimally coupled scalar field. In their analysis they used quantum states that are superpositions of the vacuum and multimode two-particle states. In these states the excited modes are those whose three-momenta lie in a cone centered around a chosen null vector. In the limit of a sequence of those states the three-momenta become arbitrarily large and the null energy density averaged over a null geodesic can become arbitrarily negative. As they note, their result could be understood as a consequence of the fact that, in any algebraic QFT in Minkowski space of dimension $d > 2$ obeying a set of conditions there are no (non-trivial) observables localized on any finite null line segment (for more details regarding this argument see endnotes 45 and 46 in \cite{Fewster:2002ne}). Perhaps this could provide a general argument against the existence of null worldline QEIs in more general QFTs, but no such rigorous result exists to this day. 

Recently, Freivogel and Krommydas~\cite{Freivogel:2018gxj} have argued that the counter-example of \cite{Fewster:2002ne} is problematic, because it creates a state from the vacuum using excitations with arbitrary momenta. If the momenta are restricted to be below the UV cutoff of the theory, then there is a finite lower bound, namely
\be
\int_{-\infty}^\infty \mathrm{d}\lambda f(\lambda) \langle \nord{T_{ab}}_{\omega_0}k^a k^b\rangle_\omega (\gamma(\lambda)) \geq - \frac{B}{G_N} \int_{-\infty}^\infty \mathrm{d}\lambda \frac{f'(\lambda)^2}{f(\lambda)} \,,
\ee
where $B$ is constant of order $1$ and $G_N\sim 1/N$ is the effective Newton's constant with $N$ being the number of fields in a given QFT. \cite{Freivogel:2018gxj} have conjectured that this bound should always hold for any smooth and non-negative smearing function $f$ as long as the cutoff rule is satisfied. The bound has been verified for an induced gravity on a brane in AdS/CFT by Leichenauer and Levine~\cite{Leichenauer:2018tnq}.

\subsection{Quantum energy inequalities and uncertainty relations}
\label{subs:uncertainty}

The literature on QEIs has emphasized since early on that their derivations do not assume or use any quantum mechanical time-energy uncertainty relation \cite{Ford:1997fa,Fewster:1998xn}. Such an uncertainty relation has only been invoked to argue that QEIs prevent macroscopic violations of the second law of thermodynamics \cite{Ford:1978qya}. Nevertheless, the restrictions that QWEIs impose on the duration and magnitude of negative energy densities have sometimes been described as being of uncertainty principle-type \cite{Ford:1997fa}, or reminiscent of the uncertainty principle in quantum mechanics \cite{Fewster:1998xn}. In this section we want to comment in more detail on the meaning and interpretation of QWEIs.

Unfortunately, the time parameter in quantum mechanics is not an observable, so the formulation of a time-energy uncertainty relation is more problematic than the uncertainty relation for e.g.~position and momentum. Perhaps the closest analog was
derived in \cite{Mandelstam1991} for isolated systems with a (constant) Hamiltonian operator $H$. If $a$ is any other self-adjoint operator and $\sigma_a:=\sqrt{\omega(a^2)-\omega(a)^2}$ its standard deviation in a (pure or mixed) state $\omega$, then
\be
\label{eqn:UP}
\sigma_H\sigma_a\ge \frac12\left|\omega([a,H])\right|=\frac12\left|\partial_t\omega(a)\right|\,.
\ee
A QWEI differs in several fundamental ways from this time-energy uncertainty relation. Firstly, a QWEI estimates expectation values rather than standard deviations. In particular, a QWEI is sensitive to the sign of $T_{ab}$, whereas (\ref{eqn:UP}) is independent of the sign of $H$. Secondly, the right-hand side of (\ref{eqn:UP}) is positive rather than negative, so the uncertainty principle requires deviations away from classical behaviour (namely non-sharp values), whereas QWEIs restrict deviations away from classical behaviour (WEC violations). We see that the direct analogy between QWEIs and the time-energy uncertainty relation does not run very deep.

\medskip

Perhaps a better analogy arises when comparing QEIs with suitable bounds on non-classical phenomena in quantum mechanics. Consider e.g.~a one-dimensional particle of mass $m>0$ in a finite square potential well with potential $-V<0$ in the interval $[-L,L]$ and with vanishing potential on $|x|>L>0$. If the particle is classical and has a total energy $E\in \left(-V,0\right)$, then it must be inside the potential well and remain there forever. For a quantum particle, however, the wave function $\psi(x)$ also extends outside the well, although the amplitude $|\psi(x)|$ falls off exponentially with the distance from the well. If the particle is in a stationary state of energy $E\in(-V,0)$, then the probability of finding the particle a distance $r>0$ outside the well is
\be
\label{eqn:Pviolation}
P(|x|\ge L+r)=\int_{|x|\ge L+r} \mathrm{d}x\, |\psi(x)|^2 = C(V,E,L,m)e^{-\sqrt{8m|E|}r}\,,
\ee
where $C(V,E,L,m)$ is a positive constant (cf.~\cite{Messiah} Vol.~I, Ch.~3, Sec.~6). Like a QEI, this equality is a bound which restricts non-classical behaviour.

Whereas QWEIs involve a trade-off between the magnitude and \emph{duration} of WEC violations, (\ref{eqn:Pviolation}) entails a trade-off between the magnitude $r$ and the \emph{probability} of the violation of classical behaviour. That this analogy also has its limitations follows from investigations of the probability distribution for measurement outcomes of a suitably averaged energy density $T^{\mathrm{ren}}_{00}(f)$ \cite{FewsterFordRoman2010,FewsterFordRoman2012,FewsterHollands2019}. E.g., for conformal theories in $n=2$ there is a large class of states and weighting functions for which the probability distribution turns out to be a shifted gamma distribution \cite{FewsterHollands2019}, supported on an interval $[-b,\infty)$, where the number $-b$ is the infimum of the spectrum of $T^{\mathrm{ren}}_{00}(f)$. This value is also the sharpest possible value in a QWEI \cite{FewsterFordRoman2010} and it may be small in absolute value, depending on the weighting function $f$ amongst other things. However, this does not mean that the probability of measuring a negative result is also small. On the contrary, when $\omega(T^{\mathrm{ren}}_{00}(f))=0$, then large positive measurement outcomes, which may occur with low probability, must be balanced by small negative outcomes with higher probabilities. For a massless free field in $n=2$ Minkowski space and $f$ a Gaussian, the probability of finding a negative value for $T^{\mathrm{ren}}_{00}(f)$ has been computed at ca. $84$\% \cite{FewsterFordRoman2010}.

The compensation of negative values by positive ones along a timelike curve is also restricted in terms of temporal separation. This has been studied in detail by Ford and Roman \cite{Ford:1999qv} who named it \textit{quantum interest}, using a financial analogy: you are allowed to ``borrow'' negative energy, but you have to ``repay'' it in the form of positive energy with some ``interest''. The longer the duration of the ``loan'' the larger the ``interest''. In other words, the positive energy pulse has to overcompensate the negative one by an amount which increases with the time separation of the pulses. Ford and Roman proved this for massless quantized scalar fields in Minkowski space and Pretorius \cite{Pretorius:1999nx} generalized their results to include massive fields. 
A different method of proof was introduced by Fewster and Teo \cite{Fewster:1999kr}, who used a reformulation of QEIs for massless scalar fields as a eigenvalue problem. (See also \cite{Abreu:2008dh,Teo:2002ne,Fewster2017QEIs}.)

The basic idea of \cite{Ford:1999qv} can be described using delta pulses with energy density
\be
\label{eqn:rhopulse}
\rho(t) = \frac{|\Delta E|}{A} [-\delta (t) + (1 + \epsilon) \delta(t-T)] \,,
\ee
where $A$ is the collecting area of the flux with energy $|\Delta E|$ and $T>0$ is the separation time. A QEI in four-dimensional Minkowski space has the general form  (see (\ref{eqn:QEIminflat}) in Sec.~\ref{sec:QEI})
\be
\int_{-\infty}^{\infty} \mathrm{d}t\, \rho(t) f_{t_0}^2(t)  \geq -\frac{C}{t_0^4} \,,
\ee
where the constant $C$ depends on the choice of the compactly supported function $f$ and $t_0>0$ is a scaling parameter as in (\ref{eqn:fscale}). Using (\ref{eqn:rhopulse}) and rearranging we get
\be
|\Delta E| T^3 \leq \frac{CA }{\beta^3\left(f(0)^2-(1+\epsilon)f(\beta)^2\right)}\,,
\ee
where $\beta := t_0/T$. For exactly compensating pulses, $\epsilon \to 0$, the right-hand side vanishes as $\beta \to \infty$. That means that either $|\Delta E| =0$ or $T=0$. In all other cases we need to have quantum interest, $\epsilon>0$. 

\medskip

The quantum mechanical inequality that provides the best analogy to a QWEI is, in our view, the stability condition that the Hamiltonian operator $H$ is bounded from below. This analogy arises in an elementary way from the analysis of \cite{Fewster:1998pu}, who derived a QWEI for free scalar fields in Minkowski space in $n\ge 2$ dimensions. Extending this analysis to $n=1$, where spacetime reduces to just time and the linear scalar field reduces to a harmonic oscillator, we find a normal ordered Hamiltonian $H=:T_{00}:$, which is no longer a density on dimensional grounds. $H$ is constant in time and $H\ge0$, which shows that the harmonic oscillator is stable. In this context it is interesting to note that the proof of Theorem \ref{the:EpsteinGlaserJaffe} does not work in a quantum mechanical setting.

A more elaborate analysis of the relation between stability and QWEIs brought to light close connections \cite{FewsterVerch2003}. For general dynamical systems in a static spacetime, the existence of a class of states satisfying a QWEI entails the existence of so-called passive states. These are thermodynamical equilibrium states in the sense that one cannot extract energy from them by a cyclic process. This connects QWEIs, as a mesoscopic stability criterion, with the macroscopic stability of thermodynamic equilibrium states. In the case of free scalar fields the connection is even tighter, because both conditions are essentially equivalent (up to some technicalities) to the Hadamard condition, which may be viewed as a microscopic stability criterion.

For an isolated quantum system we can add any constant to $H$ without affecting the dynamics, so for stable systems we may always assume $H\ge0$. This is exactly the bound that we found in the elementary comparison with the harmonic oscillator above. However, just like for classical systems, the gravitational setting introduces absolute energies rather than just energy differences and it can therefore lead to a different lower bound in a natural way. Negative energies can arise e.g.~due to non-trivial metrics, compactification of spatial dimensions \cite{Banach_1979,PhysRevD.58.024007} or due to boundary conditions as in the Casimir effect \cite{Casimir:1948dh,PhysRevD.51.4277} (see Sec.~\ref{subs:AANEC}). Such negative energy densities can arise for stationary states too, so the energy density can remain negative for arbitrary amounts of time. However, Bekenstein has recently argued that for the Casimir effect the total energy, including that of the plates, is non-negative \cite{Bekenstein:2013ztp}. This suggests that negative energy densities can only persist for long times when the system is not isolated or the spacetime is not asymptotically flat. It would be interesting to see if this idea can be given a precise mathematical formulation and a proof. One of the challenges for this is the fact that stationary states of quantum systems are global objects, whereas the renormalization of the stress tensor is a local procedure. This complicates the computation of local energy densities in stationary states in general stationary spacetimes. As a first result in this area, one of us (KS) \cite{Sanders:2016bwc} identified general sufficient conditions on the background geometry of a static spacetime to guarantee that the expectation value of the Wick square of a free field is non-negative in all stationary states.

\section{Classical vs.~quantum energy inequalities}
\label{sec:classvsquant}

A quantum field is often expected to satisfy a QEI when the corresponding classical field satisfies a corresponding pointwise energy condition (cf.~the definitions in Sec.~\ref{sub:QEIover}). In the inverse direction, QEIs can be used to derive classical energy conditions by taking a classical limit. In this section we will consider the relationships between classical energy conditions and QEIs in more detail, thereby connecting the discussions of Sections \ref{sec:pointwise} and \ref{sec:QEI}. An important role will be played by averaged energy conditions, which combine aspects of pointwise energy conditions and QEIs.

\subsection{Averaged energy conditions}\label{subs:ANEC}

An averaged energy condition is a condition that requires a contraction of the stress tensor of a theory, averaged over a suitable spacetime region, to be non-negative. The averaging is similar to QEIs, but the strict lower bound $0$ is reminiscent of classical physics and of the pointwise energy conditions. Although one could consider many spacetime regions to average over, our discussion will only consider averages along causal geodesics, which is also the most common choice in the literature (see also the comments in Sec.~2.2 of \cite{Curiel:2017}).

To formulate averaged energy conditions we let $T_{ab}$ denote the stress tensor of the theory, or, in the case of a quantum theory, the expectation value of the renormalized stress tensor in a suitable state, cf.~Section \ref{sec:QEI}. We let $\gamma$ denote a smooth causal geodesic with affine parameter $\tau$ and tangent vector
$\dot{\gamma}^a$.

The \emph{averaged weak energy condition} (AWEC) requires that for all inextendible timelike geodesics
\be 
\int_{-\infty}^{\infty} \mathrm{d}\tau\ T_{ab}(\gamma(\tau))\dot{\gamma}^a(\tau)\dot{\gamma}^b(\tau)\ge0
\ee
when the integral is absolutely convergent. This condition is weaker than the pointwise WEC and a classical example where the WEC fails but the AWEC holds is the non-minimally coupled Klein-Gordon field for certain values of the coupling constant $\xi$
(cf.~\cite{Fewster:2006ti} and the discussion below). The AWEC is also motivated by the ideas that violations of the WEC, caused e.g.~by quantum effects, should be small in magnitude or duration and that on average they should disappear in the long time limit. For suitable theories this can be derived as a limiting case of a QWEI. However, the AWEC is known to fail under some circumstances, such as in the Casimir effect, where the boundary conditions enforce a static negative energy density, cf.~the discussion in Sections \ref{subs:uncertainty} and \ref{subs:AANEC}.

Similarly, the \emph{averaged strong energy condition} (ASEC) in a spacetime of dimension $n>2$ requires that 
\be 
\int_{-\infty}^{\infty} \mathrm{d}\tau\ \left(T_{ab}(\gamma(\tau))-\frac{T}{n-2}g_{ab}\right)\dot{\gamma}^a(\tau)\dot{\gamma}^b(\tau)\ge0
\ee
for all inextendible timelike geodesics when the integral over the effective energy density is absolutely convergent. The ASEC is weaker than the pointwise SEC. It may be recovered as a long time limit of a QSEI \cite{Fewster:2018pey} for free scalar fields only in the massless minimally coupled case (cf.~Sec.~\ref{subs:AANEC}). In view of Table \ref{tab:examples} one would expect that the quantum Maxwell and Proca fields could provide further examples.

For the \emph{averaged null energy condition} (ANEC) we consider null geodesics $\gamma$ with affine parameter $\lambda$. The ANEC then requires that
\be 
\int_{-\infty}^{\infty} \mathrm{d}\lambda\ T_{ab}(\gamma(\lambda))\dot{\gamma}^a(\lambda)\dot{\gamma}^b(\lambda)\ge 0
\ee
for all inextendible null geodesics when the integral is absolutely convergent. An even weaker condition is the \emph{achronal averaged null energy condition} (AANEC), which requires the same inequality only for all achronal inextendible null geodesics. (Achronal means that no two points on the geodesic can be connected by a timelike curve.) The range of validity of the ANEC and the AANEC will be discussed in detail below and in Section \ref{subs:AANEC}.

We have formulated the averaged energy conditions only in the case that the averaging integral is absolutely convergent. It is possible to extend these conditions to integrals that are not absolutely convergent, but such an extension is not unique. One finds several extensions in the literature, which are often motivated by the mathematical methods used in proofs. E.g., 
Verch \cite{Verch:1999nt} uses the condition
\be
\label{eqn:ANECext1}
\liminf_{r_{\pm\to\infty}}\int_{-r_-}^{r_+} \mathrm{d}\tau\ T_{ab}(\gamma(\tau))\dot{\gamma}^a(\tau)\dot{\gamma}^b(\tau)\ge0
\ee
for the AWEC and the ANEC. Borde \cite{Borde:1987b} replaces the ANEC and the ASEC by weaker, but more complicated conditions, which also involve integration over bounded intervals. A different extension for the ANEC was used by Wald and Yurtsever \cite{Wald:1991xn}, namely
\be
\label{eqn:ANECext2}
\liminf_{\lambda_0 \to\infty}\int_{-\infty}^{\infty} \mathrm{d}\lambda\ T_{ab}(\gamma(\lambda))\dot{\gamma}^a(\lambda)\dot{\gamma}^b(\lambda)
c\left(\frac{\lambda}{\lambda_0}\right)^2\ge0
\ee
for all compactly supported real-valued functions $c\in C^1_0(\mathbb{R},\mathbb{R})$ such that $(1+k^2)^{1+\delta}|\hat{c}(k)|$ is bounded for some $\delta>0$. This condition does not imply (\ref{eqn:ANECext1}), but both extend the ANEC for absolutely converging integrals. Similar extensions of the AWEC and ANEC, but with $c\in C^2_0(\mathbb{R},\mathbb{R})$, appeared in the work of Fewster and Osterbrink \cite{Fewster:2006ti}.

The averaged energy conditions are summarized in Table \ref{tab:AECs}, where we focus on absolutely convergent integrals. From now on we will suppress the argument $\tau$ in the integrands as well as the fact that the stress tensor is to be evaluated along $\gamma$.

\begin{table}[h!]
	\begin{center}
		\begin{tabular}{|c|c|c|c|} \hline
			\textbf{Condition} &\textbf{Inequality} &\textbf{Inextendible Geodesics} \\ \hline
			\makecell{AWEC} & $\int \mathrm{d}\tau\ T_{ab} \dot{\gamma}^a\dot{\gamma}^b\ge0$
			&timelike\\ \hline
			\makecell{ASEC} & $\int \mathrm{d}\tau\ (T_{ab} -\frac{T}{n-2}g_{ab})\dot{\gamma}^a\dot{\gamma}^b\geq 0$ 
			&timelike\\ \hline
			\makecell{ANEC} & $\int \mathrm{d}\lambda\ T_{ab} \dot{\gamma}^a\dot{\gamma}^b\ge0$
			&null\\ \hline
			\makecell{AANEC} & $\int \mathrm{d}\lambda\ T_{ab} \dot{\gamma}^a\dot{\gamma}^b\ge0$
			&achronal null\\ \hline
		\end{tabular}
		\caption{The main averaged energy conditions summarized. Here $\tau$ is an affine parameter along the geodesic $\gamma$ of the stated class and the integrands are evaluated on $\gamma$.}
		\label{tab:AECs}
	\end{center}
\end{table}

In the remainder of this section we will consider the validity of averaged energy conditions for classical non-minimally coupled free scalar fields. We have seen in Section \ref{sub:examples} that these fields can violate all the main pointwise energy conditions in suitable on-shell configurations. However, the averaged energy conditions are harder to violate. In the analysis of \cite{Barcelo:2000zf} for massless scalar fields, e.g., the ANEC can only be violated for $\xi\in(0,\frac14)$ and only when the field takes trans-Planckian values somewhere. We saw in Section \ref{sub:examples} that such values may be considered unphysical and that they are potentially problematic for the initial value formulation.

For $\xi\in[0,\frac14]$, \cite{Fewster:2006ti} considers test field configurations\footnote{Note that \cite{Fewster:2006ti} calls a configuration on-shell when it is a test field configuration in our sense, cf.~Section \ref{subs:onshell}.}
and investigates lower bounds for averaged energy densities along causal geodesics and over regions of spacetime. E.g., for a causal geodesic $\gamma$ with affine parameter $\tau$ and for any $f\in C_0^2(\mathbb{R},\mathbb{R})$ they show that (in our sign conventions)
\be
\int_{-\infty}^{\infty}\ \mathrm{d}\tau\, T_{ab}\dot{\gamma}^a\dot{\gamma}^bf^2 \ge
-2\xi\int_{-\infty}^{\infty} \mathrm{d}\tau \bigg( (f')^2\phi^2-\frac12R_{ab}\dot{\gamma}^a\dot{\gamma}^bf^2\phi^2 -\frac14(1-4\xi) R
\dot{\gamma}_a\dot{\gamma}^af^2\phi^2\bigg) \,.
\ee
Although the lower bound may be negative, it no longer depends on the derivatives of $\phi$ and it can be estimated in terms of $R_{ab}$, $\phi$ and $f$. In particular, if the geodesic is complete and $R_{ab}=0$, then \cite{Fewster:2006ti} show that the ANEC and AWEC hold as long as the integral $\int T_{ab}\dot{\gamma}^a\dot{\gamma}^bf^2\mathrm{d}\tau$ exists. Similar conclusions hold if the null or timelike convergence condition $R_{ab}\dot{\gamma}^a\dot{\gamma}^b\ge0$ is satisfied.

Note that the null, resp.~timelike, convergence condition for the metric is equivalent to the NEC, resp.~the SEC, for the background matter that generates the metric. Thus we see that the non-minimally coupled scalar field with $\xi\in[0,\frac14]$ can only violate the ANEC when the background matter violates the NEC and it can only violate the AWEC when the background matter violates the SEC. Arguably, the violation of the ANEC or AWEC by the non-minimally coupled field is then of less interest than the violation of the NEC or the SEC by the background matter. This provides some (incomplete) justification to consider averaged energy conditions especially in curved spacetimes satisfying, e.g., the NEC, or the ANEC (see also Sec.~\ref{subs:AANEC}).

\medskip

To avoid the need to put strong restrictions on the background spacetime, such as $R_{ab}=0$, one can consider on-shell configurations. For suitable parameters $m$ and $\xi$, analogous results to those of \cite{Fewster:2006ti} can be obtained, where the curvature of the manifold is now dynamically coupled to the field $\phi$. Taking the trace of Einstein's equation and substituting the Klein-Gordon equation we find that
\be
\label{eqn:eetrace}
\frac{1}{8\pi}\left(1-8\pi\xi\left(1-\frac{\xi}{\xi_c}\right)\phi^2\right)R=\left(1-\frac{\xi}{\xi_c}\right)(\nabla\phi)^2 \left(\frac{n}{n-2}-\frac{\xi}{\xi_c}\right)m^2\phi^2\,.
\ee
Assuming $8\pi\xi\phi^2<1$ this allows us to express $R$ entirely in terms of $\phi$. We can then use the Klein-Gordon equation and the expression for $R$ to rewrite the effective stress tensor (\ref{eqn:nonminimal2}) as
\bea
T^{\mathrm{eff}}_{ab}&=&(1-2\xi)\nabla_a\phi\nabla_b\phi-2\xi\phi\nabla_a\nabla_b\phi\nonumber\\
&&-\frac12g_{ab}\frac{1}{1-8\pi\xi\left(1-\frac{\xi}{\xi_c}\right)\phi^2}
\left[ \left(1-4\xi-8\pi\xi\left(1-\frac{\xi}{\xi_c}\right)\phi^2\right)(\nabla\phi)^2\right.\nonumber\\
&&\quad \left.+\left(1-4\xi-8\pi\xi\left(1-\frac{(n-3)\xi}{(n-1)\xi_c}\right)\phi^2\right)m^2\phi^2\right]\,.\label{eqn:Teffonshell}
\eea
If $\gamma$ is a null geodesic and $\lambda$ an affine parameter, then we find
\be
T_{ab}\dot{\gamma}^a\dot{\gamma}^b=(1-8\pi\xi\phi^2)^{-1}\left((1-2\xi)(\partial_{\lambda}\phi)^2
-2\xi\phi\partial_{\lambda}^2\phi\right)\,,\nonumber
\ee
where we recall that $T_{ab}=(1-8\pi\xi\phi^2)^{-1}T^{\mathrm{eff}}_{ab}$. When $\xi\in[0,\frac12]$ then the first term is
non-negative. If $\lambda$ ranges over an open interval $I\subset\mathbb{R}$ and $f\in C_0^1(I)$, then we can use an integration by parts and
$0\le (1-8\pi\xi\phi^2)^{-2}16\pi\xi f^2\phi^2(\partial_{\lambda}\phi)^2+(1-8\pi\xi\phi^2)^{-1}(\partial_{\lambda}(f\phi))^2$ to estimate
\bea
\int_I \mathrm{d}\lambda \, f^2 T_{ab}\dot{\gamma}^a\dot{\gamma}^b  &\ge& 
-2\xi \int_I \mathrm{d}\lambda\, f^2 (1-8\pi\xi\phi^2)^{-1}\phi\partial_{\lambda}^2\phi\ \nonumber\\
&\ge&-2\xi \int_I \mathrm{d}\lambda \, (1-8\pi\xi\phi^2)^{-1}\phi^2(\partial_{\lambda}f)^2 \,.\label{eqn:ANECpi}
\eea
By rescaling the function $f$ to $f_{\lambda_0}$ as in (\ref{eqn:fscale}) and taking the limit $\lambda_0 \to\infty$ we can deduce as in \cite{Fewster:2006ti} that the ANEC holds on any complete null geodesic on which $8\pi\xi\phi^2<1$ and $(1-8\pi\xi\phi^2)^{-1}$ remains bounded.

If $\gamma$ is a timelike geodesic, parametrized by proper time $\tau$, then we find for $m=0$ and $\xi\in[0,\xi_c]$
\bea
T_{ab}\dot{\gamma}^a\dot{\gamma}^b&\ge&-2\xi(1-8\pi\xi\phi^2)^{-1}\phi(\dot{\gamma}^a\nabla_a)^2\phi\,,\nonumber
\eea
where we added the first term in (\ref{eqn:Teffonshell}) to the $(\nabla\phi)^2$ term to find a non-negative contribution.
Using a computation as in (\ref{eqn:ANECpi}) we can estimate 
\be
\int_I  \mathrm{d}\tau \, f^2 T_{ab}\dot{\gamma}^a\dot{\gamma}^b
\ge-2\xi \int_I \mathrm{d}\tau \, (1-8\pi\xi\phi^2)^{-1}\phi^2(\partial_\tau f)^2 \nonumber
\ee
when $\tau$ ranges over an open interval $I\subset\mathbb{R}$ and $f\in C_0^1(I)$. Rescaling the function $f$ to $f_{\tau_0}$ as in (\ref{eqn:fscale}) and taking the limit $\tau_0 \to\infty$ now yields the AWEC on any complete timelike geodesic on which $8\pi\xi\phi^2<1$ and $(1-8\pi\xi\phi^2)^{-1}\phi^2$ remains bounded.

Similarly \cite{Brown:2018hym} considered lower bounds on the averaged weighted EED. For $\xi \in [0,2\xi_c]$ and averaging a test field configuration over a timelike geodesic $\gamma$ they showed that
\be
\label{eqn:cline}
\int \mathrm{d}\tau \left(  T_{ab}\dot{\gamma}^a \dot{\gamma}^b-\frac{1}{n-2} T  \right)f^2 \geq - \int \mathrm{d}\tau \bigg\{ \frac{1-2\xi}{n-2} m^2 f^2+\xi \bigg(2( \partial_\tau f )^2
+R_{ab} \dot{\gamma}^a \dot{\gamma}^b f^2 - \frac{2\xi}{n-2} Rf^2 \bigg) \bigg\}\phi^2  \,.
\ee
When $\xi \in [0,\xi_c]$ and $8\pi\xi\phi^2<1$ they then showed that for on-shell configurations
\be
\label{eqn:cline2}
\int \mathrm{d}\tau \left(  T_{ab}\dot{\gamma}^a \dot{\gamma}^b-\frac{1}{n-2} T  \right)f^2 \geq 
- \int \mathrm{d}\tau \bigg\{ \frac{1-2\xi}{n-2} \frac{m^2 f^2}{1-8\pi\xi\phi^2}+2\xi \bigg( \partial_\tau \frac{f}{\sqrt{1-8\pi\xi\phi^2}} \bigg)^2 \bigg\}\phi^2 \,.
\ee
Writing $\phi_{\max}:=\sup_\gamma |\phi|$ for the maximum field amplitude of the field along $\gamma$ and assuming $8\pi\xi\phi_{\max}^2<1$ we see that the first term on the right-hand side of (\ref{eqn:cline2}) is bounded from below by
\be
\label{expr}
- \phi_{\max}^2 \frac{1-2\xi}{n-2} \frac{m^2}{1-8\pi\xi\phi_{\max}^2}  \int \mathrm{d}\tau f^2\,.
\ee
Rescaling the function $f$ to $f_{\tau_0}$ as in (\ref{eqn:fscale}) we note that $\int \mathrm{d}\tau f_{\tau_0}^2$ remains constant and positive, so in the limit $\tau_0\to \infty$ we find that a positive multiple of the prefactor in (\ref{expr}) contributes to the lower bound of the ASEC integral. In the massive case this means that we do not obtain the ASEC, not even in the minimally coupled case. However, for $m=0$ the ASEC is recovered \cite{Brown:2018hym}.

Although the derivation of the ASEC fails in the massive case, one can derive an analogous inequality by averaging over Minkowski space for bounded test field configurations. In terms of an inertial time coordinate $t$, a compactly supported test function $f$ and the rescaled functions $f_r(x):=r^{-n/2}f(x/r)$ with $r>0$, \cite{Brown:2018hym} have shown that
\be
\liminf_{r \to \infty} \int \mathrm{d}vol \,   \left(  T_{tt}-\frac{1}{n-2} T  \right)  \, f_r^2 \geq 0 \,.
\ee
(Note that equation (50) loc.cit.~is missing absolute values under the integral, but this does not affect the subsequent derivation.)

\subsection{Quantization and classical limits}
\label{subs:claslimits}

In this section we will consider the general relationship between QEIs and the corresponding pointwise energy conditions. As a guiding principle we may use the analogy with the relationship between the motion of a classical particle and a quantum particle. The path of a classical particle can be found by minimizing an action over all possible paths. Following Feynman one may ask how the particle knows which path will give the minimum, since it can only follow one path. His resolution was to view the classical particle as an approximate description of a quantum particle, which takes all possible paths simultaneously, but with different probability amplitudes. In our setting this viewpoint provides a motivation to study classical energy conditions also in off-shell configurations.

Let us consider any classical field theory which has an associated quantum field theory. Amongst the states of the quantum theory we wish to distinguish off-shell, test field and on-shell states, in analogy to the classical theory. By an \emph{off-shell} state for the quantum theory we will mean a set of $n$-point distributions,
\begin{eqnarray}
\omega_n(x_1,\ldots,x_n)&:=&\langle \Phi(x_1)\cdots\Phi(x_n)\rangle_{\omega}\,,\nonumber
\end{eqnarray}
which need not satisfy an equation of motion, but which do satisfy suitable commutation relations, singularity restrictions (e.g.~the microlocal spectrum condition \cite{brunetti1996}) and the necessary positivity conditions to give a Hilbert space representation. For \emph{test field} configurations, we require that the $n$-point distributions also satisfy the equation of motion of the quantum field in every variable. For free fields this is not problematic, but for non-linear equations the formulation of such equations of motion is a non-trivial matter, which may involve perturbation theory, regularization and renormalization. Similarly, for \emph{on-shell} states we will also impose the semiclassical Einstein equation
\be
\label{eqn:SEE}
G_{ab}=8\pi G \langle T^{ren}_{ab} \rangle_\omega \,.
\ee
A solution of (\ref{eqn:SEE}) includes both a state $\omega$ and a metric that must simultaneously satisfy it. This equation involves similar, or perhaps even more severe, renormalization difficulties as test field configurations and only a few solutions are known, all for spacetimes with a high degree of symmetry. However, let us suppose for the moment, for the sake of argument, that the theory we consider admits reasonable definitions of test field and on-shell states. What relations should we then expect between the pointwise or averaged energy conditions and the QEIs?

Let us consider a QEI for a quantum field and the corresponding pointwise energy condition for the corresponding classical field. We can try to derive this poinwise energy condition by taking classical limits on both sides of the QEI. In order to avoid problems with the lower bound we will assume that the QEI is state-independent. We would then expect, quite naively, that a state-independent off-shell, resp.~test field, resp.~on-shell QEI implies the corresponding off-shell, resp.~test field, resp.~on-shell pointwise energy condition for the classical field. 

In the converse direction, however, we appeal to our guiding principle and note that every quantum state probes all configurations of the classical system, even off-shell ones. For this reason we cannot expect to derive a test field or on-shelll QEI from a pointwise 
energy condition which only holds for classical test field or on-shell configurations. Nevertheless, if the pointwise energy condition holds for all off-shell classical configurations, then we would expect the QEI to hold as well for all off-shell quantum states and then, a fortiori, also for test-field and on-shell states.

The expectations we formulated in both directions warrant a few comments. Note that classical field theory in itself does not seem to provide any particularly good reason to consider the pointwise energy conditions off-shell, because off-shell configurations are classically impossible. However, the transition to QEIs does provide a motivation to study classical off-shell configurations. On the other hand, off-shell QEIs seem to be of little physical interest and they have never been considered in the literature. Nevertheless, for minimally coupled free scalar fields they could in principle be derived using the same argument as for test field QEIs, cf.~Section \ref{sub:examplesQEI}, because the key ingredients (positivity, commutation relations and singularity condition) are all in place.

It is also interesting to note that for free classical fields the distinction in the dynamics is not of much importance, due to Propositions 
\ref{Prop_ECoffshell1} and \ref{Prop_ECoffshell2}. This means e.g.~that for scalar fields we expect a state-independent QEI to hold for all test field quantum states if and only if the classical field satisfies the corresponding pointwise energy condition for all test-field configurations.

\medskip

The expected relationship between QEIs and pointwise energy conditions as formulated above is not mathematically precise. Any attempt to make it rigorous will have to rely on the details of the quantization method in one direction and of the classical limit in the other. For free fields in off-shell or test field configurations, these details are readily available. Nevertheless, we are not aware of any general results that derive QEIs from the classical theory. The reason seems to be that the guiding principle that we used to motivate our expectations is perhaps most conveniently expressed in a path-integral formalism, which, however, is not so well-suited to a generally covariant setting. For this reason the existing derivations of QEIs do not explicitly invoke the corresponding classical energy conditions, but they proceed in a direct way from first principles and the definition of the theory. Nevertheless, the results obtained in the literature are in line with the expectations formulated above, as can be seen by comparing Table \ref{tab:examples} in Section \ref{sub:examples} and Table \ref{tab:examples2} in Section \ref{sub:examplesQEI}.

In the opposite direction, for fields in Minkowski space derivations of pointwise energy conditions from QEIs can essentially be recovered from the literature \cite{Fewster:2007ec,Fewster:2018pey}, see also the discussion in Section \ref{sub:QEIover}. For free scalar fields in general spacetimes we will now prove the following rather elementary result. (We expect a similar result for any free bosonic field, but our argument and the argument in \cite{Fewster:2007ec,Fewster:2018pey} do not immediately generalize to fermi fields.)
\begin{theorem}\label{thm:classlimit}
	Let $\phi$ be a linear scalar quantum field on a globally hyperbolic spacetime $M$ which satisfies a state-independent QEI of the following form: For every timelike geodesic $\gamma:\mathbb{R}\to M$ and every test function
	$f\in C_0^{\infty}(\mathbb{R},\mathbb{R})$ there is a number $Q(f)\in\mathbb{R}$ such that for all Hadamard two-point distributions $\omega_2$ we have
	\be
	\int_{-\infty}^{\infty} \mathrm{d}\tau \, \omega_2(T^{\mathrm{ren}}_{ab})\dot{\gamma}^a\dot{\gamma}^bf^2\  \ge -Q(f)\ .
	\ee
	Then the corresponding classical field satisfies the pointwise WEC.
\end{theorem}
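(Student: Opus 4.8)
The plan is to derive the classical WEC from the quantum QEI by choosing a family of coherent (quasi-free) states that are sharply peaked around a prescribed classical solution, and then extracting the pointwise inequality by a scaling argument that exploits the state-independence of the lower bound $Q(f)$. First I would fix an arbitrary point $p\in M$, an arbitrary future-pointing timelike vector $t^a$ at $p$, and an arbitrary classical on-shell (indeed, by Proposition~\ref{Prop_ECoffshell2}, off-shell suffices) configuration $\phi_{\mathrm{cl}}$ realizing a given initial value for the stress tensor at $p$; it is enough to show $T_{ab}^{\mathrm{cl}}(p)\,t^at^b\ge0$. Choose a timelike geodesic $\gamma$ through $p$ with $\dot\gamma(0)=t^a$ (after normalizing $t^a$), and for a parameter $s>0$ let $\omega^{(s)}$ be the coherent state built on a Hadamard vacuum $\omega_0$ whose one-point function is $s\,\phi_{\mathrm{cl}}$. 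Such states are again Hadamard, so the hypothesis applies to each of them with the \emph{same} bound $Q(f)$.

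The key computation is that for a coherent state with one-point function $\varphi:=\langle\phi\rangle_{\omega^{(s)}}=s\,\phi_{\mathrm{cl}}$, the smeared renormalized energy density splits as
\be
\int \mathrm{d}\tau\,\omega^{(s)}_2(T^{\mathrm{ren}}_{ab})\dot\gamma^a\dot\gamma^b f^2
= \int \mathrm{d}\tau\, T_{ab}[\varphi]\,\dot\gamma^a\dot\gamma^b f^2 \;+\; \int \mathrm{d}\tau\,\omega_0\!\big(\nord{T_{ab}}_{\omega_0}\big)\dot\gamma^a\dot\gamma^b f^2 ,
\ee
i.e.\ a purely classical piece quadratic in $\varphi$ (hence scaling like $s^2$) plus an $s$-independent vacuum contribution, because the cross terms linear in the smeared field have vanishing vacuum expectation and the quadratic quantum fluctuation term is exactly the vacuum value. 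Inserting this into the QEI and dividing by $s^2$ gives
\be
\int \mathrm{d}\tau\, T_{ab}[\phi_{\mathrm{cl}}]\,\dot\gamma^a\dot\gamma^b f^2
\;\ge\; -\frac{1}{s^2}\Big(Q(f) + \int \mathrm{d}\tau\,\omega_0\!\big(\nord{T_{ab}}_{\omega_0}\big)\dot\gamma^a\dot\gamma^b f^2\Big).
\ee
Letting $s\to\infty$, the right-hand side tends to $0$, so $\int \mathrm{d}\tau\, T_{ab}[\phi_{\mathrm{cl}}]\dot\gamma^a\dot\gamma^b f^2\ge0$ for every $f\in C_0^\infty(\mathbb{R},\mathbb{R})$. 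Finally, replacing $f$ by a sequence of bump functions shrinking to a delta at $\tau=0$ (or invoking continuity of the integrand) yields $T_{ab}[\phi_{\mathrm{cl}}](p)\,t^at^b\ge0$; since $p$, $t^a$ and the configuration were arbitrary, the classical WEC holds.

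The main obstacle I anticipate is the clean verification of the split displayed above at the level of the \emph{renormalized} point-split operator $T_{ab}^{\mathrm{split}}$ and the Hadamard subtraction $H_2$: one must check that under $\phi\mapsto\phi+\varphi\mathbb{1}$ the coherent-state two-point function shifts by $\varphi\otimes\varphi$ in such a way that, after applying $T_{ab}^{\mathrm{split}}$ and restricting to the diagonal, the genuinely quantum remainder is exactly the vacuum energy density $\omega_0(\nord{T_{ab}}_{\omega_0})$ and in particular is \emph{independent of $s$} — the finite renormalization ambiguity $C_{ab}$ and the term $-g_{ab}Q$ in (\ref{def:Tren}) are state-independent and local, so they drop out of the $s^2$-rescaling, but this needs to be said. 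A secondary technical point is confirming that the coherent states $\omega^{(s)}$ are Hadamard (which follows since $\varphi$ solves the classical field equation, or, off-shell, since adding a smooth function to a Hadamard two-point function preserves the Hadamard property) so that the hypothesis of the theorem genuinely applies with the uniform bound $Q(f)$.
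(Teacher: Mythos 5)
Your proposal is correct and essentially the paper's own argument: the coherent-state two-point function $\omega_0+s^2\,\phi_{\mathrm{cl}}\otimes\phi_{\mathrm{cl}}$ is exactly the family $\omega_2+r\,\omega_1\otimes\omega_1$ (with $r=s^2$) used in the paper's proof, and the subsequent division by $s^2$, the limit $s\to\infty$, and the shrinking of $f^2$ to a delta function are the same steps. Only cosmetic points differ: the $s$-independent quantum remainder should be written $\omega_0(T^{\mathrm{ren}}_{ab})$ rather than $\omega_0(\nord{T_{ab}}_{\omega_0})$ (the latter vanishes by definition), and the final extension from solutions of the field equation on the fixed background (test-field configurations) to general off-shell configurations is Proposition~\ref{Prop_ECoffshell1}, not Proposition~\ref{Prop_ECoffshell2}.
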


\begin{proof*}
	Fix a Hadamard two-point distribution $\omega_2$ and let $\omega_1$ by any one-point distribution, i.e.~$\omega_1$ is a smooth and real-valued solution to the (Klein-Gordon) field equation. For every $r>0$ the distribution
	\be
	\omega^{(r)}_2 :=\omega_2+r\omega_1\otimes\omega_1
	\ee
	is also a Hadamard two-point distribution (it is a bi-solution of positive type and has the correct antisymmetric part and singularities).
	This means that we have for all $r>0$
	\bea
	-Q(f)&\le&\int_{-\infty}^{\infty} \mathrm{d}\tau \, \omega^{(r)}_2(T^{\mathrm{ren}}_{ab})\dot{\gamma}^a\dot{\gamma}^bf^2\ \nonumber\\
	&=&\int_{-\infty}^{\infty} \mathrm{d}\tau \, \omega_2(T^{\mathrm{ren}}_{ab})\dot{\gamma}^a\dot{\gamma}^bf^2\ 
	+r\int_{-\infty}^{\infty}  \mathrm{d}\tau \, T_{ab}[\omega_1]\dot{\gamma}^a\dot{\gamma}^bf^2\ 
	\eea
	where $T_{ab}[\omega_1]$ is the classical stress tensor of the corresponding classical field theory in the field configuration $\omega_1$. Dividing both sides by $r>0$ and taking the limit $r\to\infty$ we find
	\bea
	0&\le&\int_{-\infty}^{\infty}\mathrm{d}\tau \, T_{ab}[\omega_1]\dot{\gamma}^a\dot{\gamma}^bf^2\ \,.
	\eea
	We can choose $f\in C_0^{\infty}(\mathbb{R},\mathbb{R})$ with $\int f^2(\tau)\ \mathrm{d}\tau=1$ and define
	$f_{\tau_0}(\tau)$ as in (\ref{eqn:fscale}) for all $\tau_0>0$. This means that $f_{\tau_0}^2$ approximates a $\delta$-distribution. Substituting $f_{\tau_0}$ for $f$ in the estimate above and taking the limit $\tau_0\to 0$ we find
	\be
	0\le T_{ab}[\omega_1](\gamma(0)))\dot{\gamma}^a(0)\dot{\gamma}^b(0)\ .
	\ee
	Because we can choose the geodesic $\gamma$ to start at any given point $x\in M$ with any future-pointing timelike vector $\dot{\gamma}^a$ we find that the classical stress tensor $T_{ab}[\omega_1]$ satisfies the WEC. This proves the WEC for test field configurations $\omega_1$. Because $M$ is globally hyperbolic we see from Proposition
	\ref{Prop_ECoffshell1} that the classical theory also satisfies the WEC off-shell.
\end{proof*}

Note that the proof of Theorem \ref{thm:classlimit} works for difference QEIs as well as absolute QEIs, as long as the lower bound is state-independent. Moreover, analogous arguments would work to derive the SEC and NEC from a QSEI and a QNEI (along a timelike curve), respectively. The formulation and proof of an analogous result for averaged energy conditions is also possible, as long as divergent integrals are handled carefully, as we will discuss in the next section. Using a rather different kind of argument, Parikh and van der Schaar \cite{Parikh:2014mja} have derived the NEC along the worldsheet of a bosonic string to leading order in $\alpha'$ (the string tension).

We have seen in Section \ref{sub:examples} that the classical non-minimally coupled scalar field violates all the main pointwise energy conditions. In line with Theorem \ref{thm:classlimit} the corresponding quantum field does not admit any of the corresponding state-independent QEIs. Nevertheless, it does admit state-dependent QEIs, as we have seen in Section \ref{sub:examplesQEI}. Moreover, we have seen in Section \ref{subs:ANEC} that the classical field can satisfy averaged energy conditions like the AWEC and ANEC if we disallow trans-Planckian values and if we consider on-shell configurations or if we restrict the background metric. Whether there is a pattern in these results that can be generalized to other non-minimally coupled fields is not yet clear.

\subsection{Derivation of averaged energy conditions from quantum field theory}
\label{subs:AANEC}

Averaged energy conditions can sometimes be derived from the corresponding QEIs, even if they are state-dependent. Let us first consider the simplest case of a minimally coupled scalar quantum field in Minkowski space, where we can start from the QEI (\ref{eqn:QEIminflat}) of Section \ref{sub:examplesQEI}. Considering the scaling of the test function as in (\ref{eqn:QEIminflatscale}), rearranging and taking a limit we then obtain the AWEC in the form
\be
\label{eqn:limAWEC}
\liminf_{t_0 \to\infty}\int_{-\infty}^{\infty}\mathrm{d}t \, \langle \nord{\rho} (\gamma(t)) \rangle_{\omega}\  |f(t/t_0)|^2 \,  \geq 0\,.
\ee
For non-minimally coupled scalar fields there is no state-independent QWEI, but for $\xi\in[0,\frac14]$ the same scaling argument also works for the state-dependent QWEI in Minkowski space \cite{Fewster:2007ec}, so the AWEC (\ref{eqn:limAWEC}) still holds. Similarly, the state-dependent QSEI for non-minimally coupled scalar fields in Minkowski space gives rise to the ASEC \cite{Fewster:2018pey}, but only in the massless case.

\medskip

Although QEIs also hold in spacetimes which are curved or which have non-trivial topology or boundaries, cf.~(\ref{eqn:QEIgendiff}), the scaling argument may no longer work and indeed the corresponding averaged energy condition may not hold. For example, the AWEC fails when the Cauchy surface of Minkowski space is compactified to a torus \cite{Banach_1979} and even when only one of its dimensions is compactified, cf.~\cite{PhysRevD.58.024007} Section IIB. A particularly interesting example is the Casimir effect. In 1948 Casimir \cite{Casimir:1948dh} showed that two neutral conducting parallel plates in the vacuum attract each other, cf.~\cite{Plunien:1986ca} and \cite{Bordag:2001qi} for reviews. The effect is explained by a change in the zero-point energy of the electromagnetic field, caused by the presence of the plates. To see how this arises we consider the stress tensor of the electromagnetic field (\ref{eqn:tensorproca}) ($m=0$) along with the vanishing trace and divergence properties. If we orient the coordinate frame so that one plate is at $z=0$ while the other is at $z=L$, then the stress tensor takes the form \cite{Brown:1969na, Fewster2017QEIs}
\be
T^{\mu \nu}=\frac{C(z)}{L^4} ( \eta^{\mu \nu}-4 \hat{k}^\mu \hat{k}^\nu ) \,,
\ee
where $\hat{k}=(0,0,0,1)$ points in the $z$-direction and $C(z)$ is dimensionless and $C'(z)=0$ except possibly at the plates. Now it is easy to see that the energy density between the plates (in a finite distance) is $-C/L^4$ and the constant can be computed exactly as $C=\pi^2/720$. Even though the amount of negative energy is small compared with the positive energy required to create and measure it, it nevertheless violates all the pointwise energy conditions. More notably it also violates the AWEC, since the negative energy is constant in time (cf.~the discussion in Sec.~\ref{subs:uncertainty}). However it has be shown to obey the ANEC \cite{Graham:2005cq,Fewster:2006uf}. 

Even when the AWEC fails, one may show that in stationary spacetimes the lowest average energy along a stationary curve is obtained by the ground state, which is a kind of ``difference AWEC'' \cite{PhysRevD.51.4277,Fewster2017QEIs}. It appears to be unknown under what conditions the ground state on a stationary or static spacetime will have a non-negative energy density, reproducing the AWEC at least for stationary observers.

\medskip

Whereas the AWEC and ASEC can often be derived from a QEI along timelike curves using the limit of long time smearing, a similar procedure does  not work for the ANEC, unfortunately. The problem is that there is no known QEI along null curves in dimensions more than 2 (see Sec.~\ref{sub:examplesQEI}). However, it is possible to prove ANEC using a QNEI averaged over timelike curves. Such a proof was first presented by Fewster, Olum and Pfenning~\cite{Fewster:2006uf} for a flat spacetime with boundaries, for geodesics which stay a finite distance away from the boundary. The starting point is the QNEI for flat spacetimes derived by Fewster and Roman~\cite{Fewster:2002ne} (see also the discussion in Sec. \ref{sub:examplesQEI})
\be
\label{eqn:QNEI}
\int_{-\infty}^\infty \mathrm{d}t \langle \nord{T_{ab}}k^a k^b\rangle_\omega f_{t_0}^2
\geq - \frac{(\dot{\gamma}_a k^a)}{12\pi^2t_0^2} C \,,
\ee
where $f\in C_0^{\infty}(\mathbb{R},\mathbb{R})$, $f_{t_0}$ is defined in (\ref{eqn:fscale}) and $C=\int_{-\infty}^\infty \mathrm{d}t f''(t/t_0)^2$. Assuming uniform continuity of $\langle \nord{T_{ab}}k^a k^b\rangle_\omega$
near the null curve $\zeta$ their construction is simple: because $f$ is compactly supported, the desired integrand has compact support on a segment of the null geodesic. This segment can be approximated by a timelike geodesic segment where the QNEI (\ref{eqn:QNEI}) applies. Using a careful limiting argument, in which $t_0\to\infty$ in $f_{t_0}$ and the timelike geodesics are boosted towards the null geodesic, they derive the ANEC. A variation of this argument also applies in the absence of uniform continuity. Kontou and Olum~\cite{Kontou:2015yha, Kontou:2015bta} used a similar argument to prove AANEC in spacetimes with small curvature. 

\medskip

Of all the energy conditions the AANEC has the best chance of being valid under all physically reasonable circumstances. It is weaker than all pointwise energy conditions and the ANEC and it has no known physically reasonable counter-examples. For classical non-minimally coupled scalar fields, the ANEC can only be violated when the effective Newton's constant changes sign, cf.~Section \ref{subs:ANEC}. Even for quantum fields, where other QEIs have a negative lower bound, there are good reasons to believe that the AANEC will hold with a zero lower bound. It is not violated by the Casimir effect, unlike the AWEC \cite{Graham:2005cq,Fewster:2006uf}, and known violations of the AANEC involve Planck length distances
\cite{Flanagan:1996gw,PhysRevLett.78.2050,PhysRevD.65.084028,Garattini_2005}, where the validity of QFT in curved spacetimes and/or semiclassical quantum gravity is doubtful.

In the remainder of this section we will review the evidence that supports the idea that the AANEC should hold under all physically reasonable circumstances. More precisely, we will consider two formulations of this idea. The first formulation states
\begin{itemize}
	\item[(i)] In Minkowski space the ANEC holds for any reasonable QFT with a stress tensor which (locally) generates the translations of the theory.
\end{itemize}
Note that all null geodesics in Minkowski space are automatically achronal, so the AANEC is the same as the ANEC. Condition (i) can include interacting theories for which the stress tensor is usually introduced axiomatically. The second formulation, which applies to curved spacetimes, makes use of the Lagrangian formulation of the stress tensor in its renormalized form. It states
\begin{itemize}
	\item[(ii)] On-shell configurations can violate the AANEC only over Planck scale distances in directions transversal to the null geodesic. \end{itemize}
The slightly stronger conjecture that the AANEC holds for any self-consistent solution in semiclassical gravity has been advanced by Graham and Olum \cite{Graham:2007va}, see also \cite{penrose1993positive}. Note that for condition (ii) a detailed specification of the physically reasonable circumstances is considerably complicated by the intricacies of the semiclassical Einstein equation, which on-shell configurations are required to satisfy. Violations at Planck scale are allowed, because they probably lie outside the range of validity of the semiclassical Einstein equation. Because any Planck scale phenomena should really be treated in the context of a full theory of quantum gravity, condition (ii) suggests that the AANEC might be a consequence of a fundamental property that should hold in full quantum gravity.

\medskip

Early results on the ANEC mostly pertained to free scalar quantum fields. On the positive side, Klinkhammer showed in 1990 that the ANEC holds for a large subset of the Hadamard states in Minkowski space \cite{Klinkhammer:1991ki}. For conformally coupled scalar fields this result was extended to all asymptotically flat two-dimensional spacetimes by Yurtsever \cite{Yurtsever:1990gx}, who was the first to suggest that the ANEC could hold generally, at least in asymptotically flat spacetimes. Wald and Yurtsever \cite{Wald:1991xn} proved that the AANEC holds for all Hadamard states of a massless scalar field in any globally hyperbolic, two-dimensional spacetime along any complete, achronal null geodesic. Results for massive fields and in four-dimensional Minkowski space were also obtained in \cite{Wald:1991xn}, but for a smaller class of states. For the physically more interesting case of electromagnetism, Folacci  \cite{PhysRevD.46.2726} extended the analysis of \cite{Klinkhammer:1991ki} to the vector potential $A_a$ along null geodesics in Minkowski space.

Alongside these positive results, a number of counter-examples were discovered. Klinkhammer \cite{Klinkhammer:1991ki} also showed that negative results can be obtained by integrating along curves which are not geodesics, or geodesics which are not achronal in a flat spacetime with a compactified dimension. In four dimensions, Wald and Yurtsever \cite{Wald:1991xn} argued that the ANEC cannot hold for all states of a massless field in all spacetimes. These counter-examples helped to shape the current formulation of the AANEC, insisting on complete achronal null geodesics.

For conformally coupled quantum fields, violations of the AANEC can be generated rather easily and elegantly using conformal rescalings of the metric. Visser \cite{VISSER1995443} has used global scale transformations to argue that the AANEC is violated in conformal quantum states with a non-zero scale-anomaly for suitable choices of the renormalization scale. More recently Urban and Olum \cite{Urban:2009yt} used conformal transformations of the Minkowski metric to generate AANEC violations of a slightly different nature for the conformally coupled massless scalar field. These violations do not exploit the scale anomaly, nor do they rely on a choice of the renormalization scale. It should be noted, however, that all these violations are only known for test fields, so the semiclassical Einstein equation is not imposed.

Perhaps the best support to date for the general validity of the AANEC, in the sense of formulation (ii) above, comes from Flanagan and Wald \cite{Flanagan:1996gw}. They studied perturbations of a metric and a massless scalar quantum field around Minkowski space and the vacuum state. After a detailed discussion of the range of validity of the semiclassical Einstein equation, they considered second order perturbations of the system, assuming that at first order the metric perturbation is not dominated by incoming gravitational radiation, and for the quantum field they considered a long wave length limit (compared to the Planck length). Under these circumstances they showed that the integral of the energy density along a complete achronal null geodesic, combined with a transversal averaging over several Planck areas, is non-negative. From this result they concluded that macroscopic violations of the AANEC (above the Planck scale in transversal directions) are ruled out. Violations of the ANEC may occur if no averaging in transversal directions is included, but \cite{Graham:2007va} have pointed out that it is unclear if these violations of the ANEC also violate the AANEC, i.e.~if the null geodesics are achronal for the perturbed metric.

Around the same time, Ford and Roman \cite{Ford:1995gb} studied the ANEC-integral over half complete geodesics in Schwarzschild spacetimes of dimensions two and four, focusing on the context of black hole evaporation (cf.~Section~\ref{sub:blackholes}). Although negative results can occur, they found that the magnitude and duration of such violations is bounded by a kind of quantum inequality. Based on this independent evidence they too argued that macroscopic effects of ANEC violation are ruled out.

\medskip

Let us now turn to formulation (i) of the general validity of AANEC, which pertains to Minkowski space. For general quantum fields in two-dimensional Minkowski space which have a mass gap and satisfy a few other reasonable minimal assumptions, the ANEC was proved by Verch in 1999 \cite{Verch:1999nt}. For conformal field theories in two-dimensional Minkowski space the AWEC and ANEC were established by Fewster and Hollands \cite{FewsterHollands2005}. These results suggest that the validity of the ANEC goes beyond free fields, even though the arguments cannot easily be generalized to higher dimensions.

Recent attempts to prove the ANEC for general QFTs in higher dimensional Minkowski space have drawn on various ideas and methods related to black hole thermodynamics, quantum gravity and the AdS/CFT correspondence. \cite{Kelly:2014mra} proves the ANEC for CFTs in Minkowski space, assuming that the AdS/CFT correspondence provides a consistent dual theory that includes gravity and that has good causal behaviour. The argument is elegant and fairly elementary, but unfortunately the class of theories to which it applies is difficult to determine exactly. \cite{Ishibashi:2019nby} considers the AdS/CFT correspondence with curved boundaries and finds metrics that violate the AANEC using similar methods as \cite{Urban:2009yt}.

Wall \cite{Wall:2009wi} gives a rather different line of argument, which aims to derive the ANEC from the generalized second law of thermodynamics. This is the law that states that there can be no decrease in the generalized entropy, which is the sum of the von Neumann entropy of the matter in a region $O$ plus a contribution proportional to the area of the boundary of $O$. Although the argument provides a physically interesting insight, the von Neumann entropy is an ill-defined quantity in QFT and requires regularization. Following Casini \cite{Casini_2008} one can regularize the entropy by using the relative entropy and choosing a suitable reference state. However, whether this regularization scheme, or any other scheme, can make the argument of \cite{Wall:2009wi} rigorous is not fully settled.

Considerations of entropy in black hole thermodynamics and quantum gravity have actually given rise to a new type of energy condition which is closely related to the ANEC. It is the quantum null energy condition, or QNEC, which requires that
\be
\langle T_{ab}(x) \rangle_{\omega} k^a k^b \geq S''(\omega) \,,
\ee
where $T_{ab}(x)$ generates the translations of the theory, where $S(\omega)$ is the von Neumann entropy of the state
$\omega$ in a region whose boundary contains the point $x$ and $S''(\omega)$ is a second functional derivative with respect to deformations of that region near $x$ in the null direction $k^a$. Although the naive formulation of this condition suffers from the fact that the von Neumann entropy is ill-defined, it has nevertheless been argued that under suitable circumstances the QNEC implies the ANEC \cite{Bousso:2015wca} and, conversely, that the ANEC implies this QNEC \cite{Ceyhan:2018zfg}. Using somewhat similar arguments, Faulkner et al.~have argued that the ANEC can be derived directly using modular Hamiltonians \cite{Faulkner:2016mzt}. The physical arguments that were brought forward in these papers suggest the existence of interesting relationships between the null components of the stress tensor, entanglement and the modular operators that arise naturally as a general mathematical tool in QFT. The clarification of these connections and the conditions under which they hold is of great interest and we refer to Longo \cite{Longo2020} for recent results on this.

\section{Applications}
\label{sec:applications}

Energy conditions were first introduced as assumptions in theorems in classical general relativity and this is still the main reason for their importance. The range of validity for these theorems in the context of semiclassical gravity is not fully clear yet, but significant progress has been made in the past few decades. In this Section we review the main applications of the energy conditions, with an emphasis on recent results that make use of averaged energy conditions and QEIs. 

\subsection{Singularity theorems}
\label{sub:singularity}

The notion of spacetime \textit{singularities} appeared early in the history of general relativity, but there are several ways to try and capture this idea mathematically. One of the first solutions of the Einstein equation, the Schwarzschild solution, has metric components which diverge at coordinate values $r=2M$ and $r=0$. The singularity at $r=2M$ is a coordinate singularity, i.e.~it can easily be removed by a change of coordinates. Coordinate singularities are mathematical artefacts of little physical interest, but the singularity at $r=0$ is of a different nature. To see this we note that
\be
R^{abcd} R_{abcd}=\frac{48 M^2}{r^2} \,,
\ee
diverges as $r \to 0$, where the left-hand side is a coordinate invariant function. Following this example, physical singularities were initially characterized as places where such a curvature scalar diverges.

An important question in general relativity during the first half of the 20th century was whether these singularities were due to the high level of symmetry of the models studied. For example, can an inhomogeneous universe avoid the primordial singularity predicted in Friedmann-Robertson-Walker models? Singularity theorems answer this question, but to do this they need to use a notion of singularity that also works in less symmetric spacetimes and the characterization in terms of curvature scalars is not suitable for that purpose.

Indeed, the previous characterization is both ill-defined and incomplete. In general it is impossible to talk about a ``place" when the spacetime metric is not defined. In some highly symmetric cases the singular part of spacetime can be represented as a boundary, but this is not true in more general models. Additionally, there are spacetime singularities that cannot be described by the divergence of curvature scalars. A simple example \cite{Wald:1983ky} is that of a Minkowski space where we have removed the points with angular coordinate $0<\phi<\phi_0$ and then connected the points with $\phi=0$ to those with $\phi=\phi_0$, see Figure \ref{fig:cone}. All of the points with $\phi=0$ and $r>0$ can be defined in the new metric, but not the point with $\phi=0$ and $r=0$ (conical singularity). However, because the Riemann tensor is zero everywhere else, there is no curvature divergence.

\begin{figure}[h!]
	\center
	\includegraphics[scale=0.5]{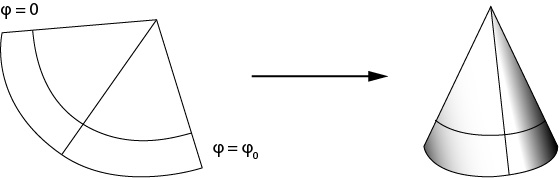}
	\caption{``Wedge" removal of Minkowski space.}
	\label{fig:cone}
\end{figure}

Then, how can we define singularities? A better way is utilising the fact that geodesics cannot be extended in the singular parts of spacetimes, so they have a finite ``length" (proper time in the case of timelike geodesics, affine parameter in the null case). Formally,

\begin{definition}
	A spacetime is singular if it possesses at least one incomplete and inextendible geodesic.  
\end{definition}

This definition is by far the most general one, but it does not give us information about the nature of the singularity and whether curvature scalars diverge in its neighbourhood. Nevertheless, it allowed the proof of the first general results that predicted the existence of singularities as features of generic spacetimes, subject only to energy conditions and topological assumptions. These are the seminal \textit{singularity theorems} of Hawking and Penrose.  

\medskip

Before we discuss particular results we present the general structure of most singularity theorems, following the extended review by Senovilla \cite{Senovilla:2014}. The common pattern helps to organize the conditions for the existence of singularities in general spacetimes. It can be stated as the following template:

\begin{theoremtemplate}
	If the spacetime satisfies 
	\begin{enumerate}
		\item 
		a boundary or initial condition
		\item 
		an energy condition
		\item 
		a causality condition
	\end{enumerate}
	then it contains at least one incomplete causal geodesic. 
\end{theoremtemplate}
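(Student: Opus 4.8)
The plan is to prove the template by contradiction: assume the spacetime is causally geodesically complete and derive a contradiction with the three hypotheses. The argument separates into a \emph{local focusing} part, driven by the energy condition, and a \emph{global maximization} part, driven by the causality condition, glued together by the boundary/initial condition.

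First I would pass the energy condition (hypothesis 2) to its geometric form via Einstein's equation~(\ref{eqn:EE}), obtaining the timelike convergence condition $R_{ab}t^at^b\ge0$ or the null convergence condition $R_{ab}k^ak^b\ge0$. Substituting this into the Raychaudhuri equation~(\ref{eqn:raytime}) (resp.~(\ref{eqn:raynull})) for an irrotational congruence and discarding the manifestly non-positive shear term $-\sigma_{ab}\sigma^{ab}$ yields the Riccati inequality $\frac{\mathrm{d}\theta}{\mathrm{d}\tau}\le-\frac{\theta^2}{n-1}$, with $n-1$ replaced by $n-2$ in the null case. Integrating this inequality shows that once the congruence is converging with initial expansion $\theta_0<0$ — which is exactly the data that hypothesis 1 supplies, whether it comes from a trapped surface, a Cauchy slice with expansion bounded away from zero, or a point — the expansion diverges to $-\infty$ within bounded affine parameter $\le(n-1)/|\theta_0|$ (resp.~$(n-2)/|\theta_0|$). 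By the standard comparison argument this forces a conjugate (or focal) point to the initial set along every geodesic of the congruence, \emph{provided the geodesic extends that far}, which it does by the assumed completeness.

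Second I would invoke the causality condition (hypothesis 3) — global hyperbolicity, or the existence of a Cauchy surface, or strong causality together with the relevant completeness — to run a limit-curve / Arzel\`a--Ascoli compactness argument producing, between the initial set and a suitable point (or out to future causal infinity), a causal geodesic that is \emph{maximizing}: one that locally maximizes proper time in the timelike case, or that lies on the boundary $\dot J^+$ of a causal future in the null case. The classical fact to quote here is that a maximizing causal geodesic segment cannot contain in its interior a point conjugate to the initial set. The contradiction is then immediate: the maximizing geodesic furnished by hypothesis 3 must be conjugate-point-free, but the focusing argument built from hypotheses 1 and 2 says it acquires a conjugate point within bounded affine length, and completeness guarantees it is long enough to reach it. Hence the spacetime cannot be causally geodesically complete, i.e.~it contains an incomplete inextendible causal geodesic.

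The main obstacle — and the reason the statement is genuinely only a \emph{template} — is that the three hypotheses are placeholders whose precise forms must be tuned to each other: the boundary/initial condition must produce exactly the congruence (normal to a Cauchy surface, to a trapped surface, or emanating from a point) whose focusing the energy condition controls, and along the null or timelike directions for which that energy condition is actually assumed; and the causality condition must be strong enough for the compactness step to apply to that same congruence. The delicate points in filling the template are therefore (a) matching the causal character and directions of the energy condition to the congruence in hypothesis 1, (b) handling non-compact initial sets, where one localizes using achronality and properties of Cauchy horizons, and (c) ensuring the limit curve is genuinely maximizing rather than escaping to infinity, which is precisely where ``inextendible'' must be carefully distinguished from ``complete''.
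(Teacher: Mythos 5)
Your proposal follows essentially the same route the paper takes when it fleshes out the template via the sketch of Hawking's theorem: the energy condition (in geometric form) feeds the Raychaudhuri equation to force focusing within bounded affine parameter from the initial contraction supplied by hypothesis 1, while the causality/topological hypothesis yields an inextendible maximizing geodesic free of focal points, and the contradiction with completeness gives incompleteness. Your closing remarks on the schematic nature of the template and the need to tune the three hypotheses to one another also match the paper's presentation, so the proposal is correct and in line with the paper's own argument.
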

The initial or boundary condition is the existence of a trapped bound. That means there is a region of spacetime which has its future (or its past) initially contained within a compact spatial region of decreasing size. The energy condition, loosely interpreted as the attractiveness of gravity, then establishes a focussing effect for geodesics. To avoid closed timelike curves, a causality condition is necessary. 

Raychaudhuri \cite{Raychaudhuri:1953yv} and Komar \cite{komar1956necessity} were the first to establish singularity theorems, but their results require a stress tensor of dust or perfect-fluid form. The first singularity theorem for a general spacetime was proved by Penrose in 1965 \cite{Penrose_prl:1965} and introduced the important concept of a trapped surface. 

\begin{definition}
	A trapped surface is a co-dimension two spacelike submanifold which has two null normals with negative expansions. 
\end{definition}

For a more detailed analysis of trapped surfaces, their types and generalizations see \cite{Senovilla:2013haf}. To understand the concept of a trapped surface it is useful to consider the case of a Schwarzschild black hole. In the simple picture of spherical stellar collapse leading to the formation of a black hole, an event horizon forms before the trapped surface (often called apparent horizon) and asymptotically the two become the same surface. Inside the trapped surface we then have the strange effect of both inward-pointing and outward-pointing light rays moving inwards, see Figure \ref{fig:light}.

\begin{figure}[h!]
	\center
	\includegraphics[scale=0.5]{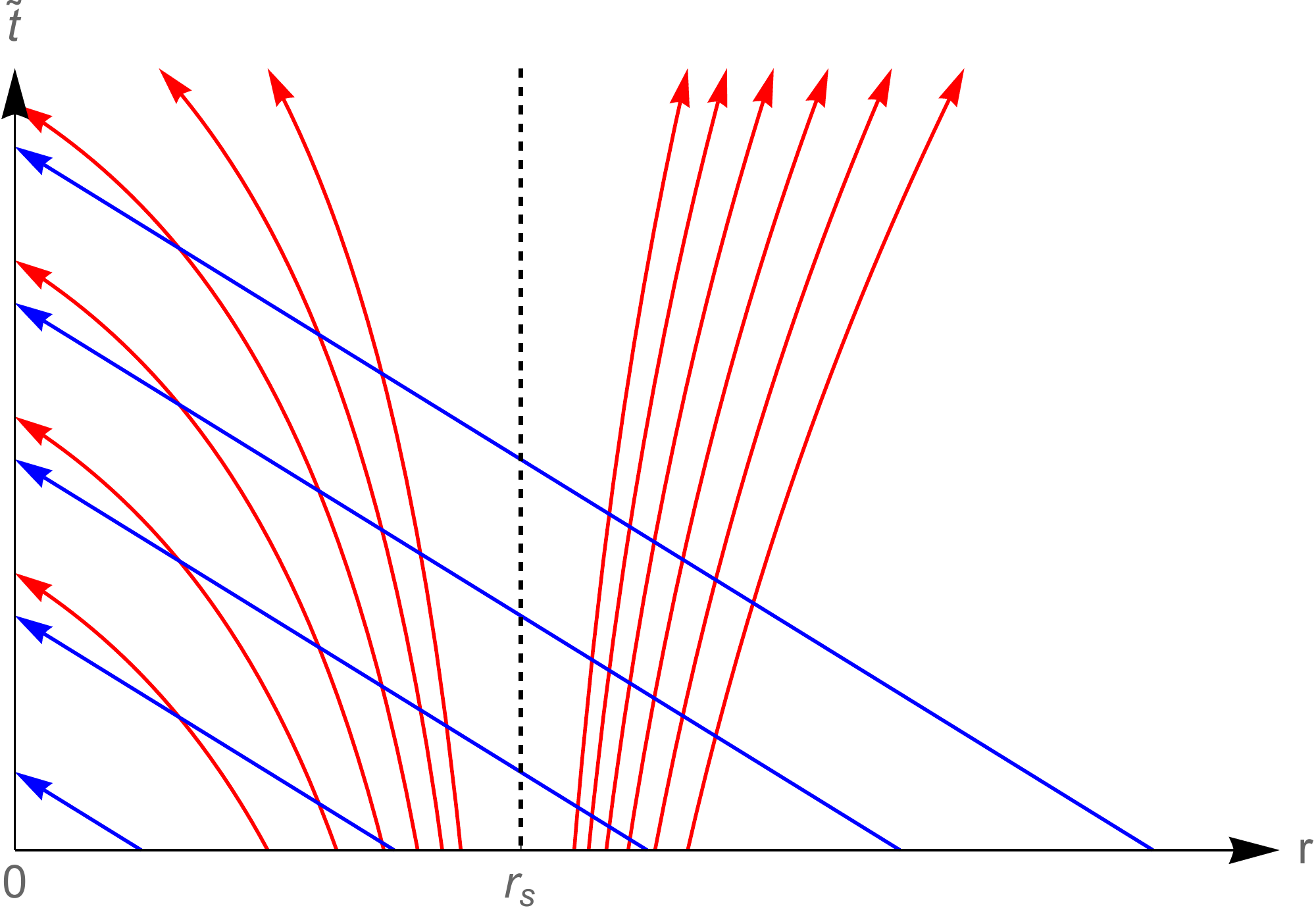}
	\caption[]{Spacetime diagram of light rays in Schwarzschild spacetime (in color online). We use Eddington-Finkelstein coordinates with 
		
		\begin{minipage}{\linewidth}
			\begin{equation*}
			-\left( 1-\frac{2M}{r} \right) \mathrm{d}v^2+2\mathrm{d}v \mathrm{d}r=0 \,,
			\end{equation*}
		\end{minipage}
		for radial null geodesics and we set $\tilde{t}=v-r$. For $v=$const. we have the ingoing and for $v \neq$ const. the outgoing light rays. However, inside the horizon $r<r_s \equiv 2M$ both ingoing (blue) and outgoing (red) radial null geodesics are directed towards $r=0$.}
	\label{fig:light}
\end{figure}

After establishing the concept of a trapped surface Penrose proceeded to prove the following theorem:

\begin{theorem}
	Suppose that:
	\begin{enumerate}
		\item 
		there is a compact achronal smooth trapped surface
		\item 
		the null converge condition (\ref{eqn:nullconvergencecondition}) holds
		\item 
		$M$ is globally hyperbolic with non-compact Cauchy hypersurfaces
	\end{enumerate}
	then $M$ is future null geodesically incomplete.
\end{theorem}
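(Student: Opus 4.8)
\emph{Proof strategy.}\ The plan is to run Penrose's original argument, showing that the focussing predicted by Raychaudhuri's equation is incompatible with global hyperbolicity together with non-compact Cauchy surfaces. Assume, for contradiction, that $M$ is future null geodesically complete. Let $\mathcal{T}$ be the trapped surface and consider the two congruences of future-directed null geodesics leaving $\mathcal{T}$ orthogonally, with affinely parametrised tangents $k^a$. Each such congruence is hypersurface-orthogonal, hence twist-free, so (\ref{eqn:raynull}) together with $\sigma_{ab}\sigma^{ab}\ge0$ and the null convergence condition (\ref{eqn:nullconvergencecondition}) gives $\frac{\mathrm{d}\theta}{\mathrm{d}\lambda}\le-\frac{\theta^2}{n-2}$. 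Since $\mathcal{T}$ is compact and smooth, the expansion of each congruence on $\mathcal{T}$ is bounded above by some $\theta_0<0$; integrating $\frac{\mathrm{d}(1/\theta)}{\mathrm{d}\lambda}\ge\frac{1}{n-2}$ shows that $\theta\to-\infty$ within affine parameter $(n-2)/|\theta_0|$, i.e.~each orthogonal null geodesic reaches a focal point of $\mathcal{T}$ no later than that affine distance. Future null geodesic completeness is used precisely to guarantee that the geodesics extend far enough to realise these focal points.

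Next I would invoke the standard results on causal structure (see e.g.~\cite{HawkingEllis:1973,wald1984general,ONeill}). Because $\mathcal{T}$ is compact and achronal and $M$ is globally hyperbolic, $J^+(\mathcal{T})$ is closed and its boundary $\dot{J}^+(\mathcal{T})$ is generated by null geodesics normal to $\mathcal{T}$, each with past endpoint on $\mathcal{T}$; moreover a generator enters $I^+(\mathcal{T})$, and hence leaves $\dot{J}^+(\mathcal{T})$, once it passes a focal point of $\mathcal{T}$. Combined with the affine bound of the previous step, $\dot{J}^+(\mathcal{T})$ is contained in the image of the compact set $\{(p,v):p\in\mathcal{T},\ v\ \text{a future null normal at}\ p,\ 0\le\lambda\le(n-2)/|\theta_0|\}$ under the (continuous) normal exponential map, so $\dot{J}^+(\mathcal{T})$, being closed, is compact. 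Being the boundary of a future set, $\dot{J}^+(\mathcal{T})$ is also an achronal, closed, embedded $C^0$ hypersurface with empty edge, hence a non-empty topological $(n-1)$-manifold without boundary (it contains $\mathcal{T}$).

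Finally I would derive the contradiction with hypothesis~3. Fix a smooth spacelike Cauchy surface $\Sigma$ and a smooth future-directed timelike vector field on $M$; following its maximally extended flow lines defines a map $\varphi:\dot{J}^+(\mathcal{T})\to\Sigma$, which is well defined because each such flow line meets $\Sigma$ exactly once, continuous by smooth dependence on the flow, and injective because the flow lines are timelike while $\dot{J}^+(\mathcal{T})$ is achronal. As $\dot{J}^+(\mathcal{T})$ and $\Sigma$ are both $(n-1)$-dimensional topological manifolds without boundary, invariance of domain makes $\varphi$ an open map, so $\varphi(\dot{J}^+(\mathcal{T}))$ is open in $\Sigma$; as the continuous image of a compact set it is also closed and non-empty. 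Since $\Sigma$ is connected, $\varphi(\dot{J}^+(\mathcal{T}))=\Sigma$, whence $\Sigma$ is homeomorphic to the compact set $\dot{J}^+(\mathcal{T})$ and is itself compact, contradicting the non-compactness of the Cauchy hypersurfaces. Therefore $M$ is future null geodesically incomplete.

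I expect the principal obstacle to lie in the causal-structure step of the second paragraph: carefully establishing that $\dot{J}^+(\mathcal{T})$ is generated precisely by the null geodesics normal to $\mathcal{T}$ with past endpoints on $\mathcal{T}$ (where global hyperbolicity is genuinely used, to ensure $J^+(\mathcal{T})$ is closed), that crossing a focal point ejects a generator from $\dot{J}^+(\mathcal{T})$ into $I^+(\mathcal{T})$, and that $\dot{J}^+(\mathcal{T})$ is a $C^0$ hypersurface without boundary so that invariance of domain applies. By comparison, the Raychaudhuri focussing estimate and the concluding point-set-topology argument are essentially routine.
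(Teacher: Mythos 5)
Your proposal is correct and is precisely the standard Penrose argument (Raychaudhuri focussing from the trapped surface, compactness of $\dot{J}^+(\mathcal{T})$, and the contradiction with a non-compact Cauchy surface via a timelike flow and invariance of domain), which is the route the paper itself alludes to: the paper only sketches Hawking's theorem, states that Penrose's proof ``follows a similar logic'', and defers the rigorous details to \cite{HawkingEllis:1973,Senovilla:2014gza}. The causal-structure lemmas you flag as the main technical burden are exactly the standard results covered in those references, so no genuine gap remains.
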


Note that the null convergence condition is the geometric form of the NEC. Just a year later, Hawking  \cite{Hawking:1966sx} proved a similar singularity theorem for timelike geodesics, using the SEC, or rather its geometric form, the timelike convergence condition:

\begin{theorem}
	Suppose that:
	\begin{enumerate}
		\item 
		$\Sigma$ is a Cauchy surface in $M$ and the timelike geodesic congruence emanating orthogonal to $\Sigma$ has a negative initial expansion 
		\item 
		the timelike converge condition (\ref{eqn:timelikeconvergencecondition}) holds
		\item 
		$\Sigma$ is compact and the strong causality condition holds
	\end{enumerate}
	then $M$ is future timelike geodesically incomplete.
\end{theorem}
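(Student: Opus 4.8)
The plan is to combine a local focusing argument based on the Raychaudhuri equation with a global variational argument exploiting global hyperbolicity, which is implied by the hypothesis that $\Sigma$ is a Cauchy surface.

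First I would carry out the local step. The timelike geodesic congruence emanating orthogonally from $\Sigma$, parametrised by proper time $\tau$, is hypersurface-orthogonal and hence irrotational, so $\omega_{ab}=0$, and the Raychaudhuri equation (\ref{eqn:raytime}), together with the timelike convergence condition (\ref{eqn:timelikeconvergencecondition}) and $\sigma_{ab}\sigma^{ab}\ge 0$, yields the Riccati inequality $\mathrm{d}\theta/\mathrm{d}\tau\le-\theta^2/(n-1)$. Since $\Sigma$ is compact and $\theta|_\Sigma$ is a smooth function (minus the trace of the second fundamental form) which is negative everywhere, it attains a maximum $-C<0$. Integrating the Riccati inequality from any point of $\Sigma$, where $\theta\le -C$, shows that $\theta\to-\infty$ at some proper time $\tau_0\le(n-1)/C$, provided the geodesic extends that far; at such a $\tau_0$ there is a focal point of $\Sigma$ along the geodesic.

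Next comes the variational input, which I would quote rather than reprove (see e.g.~\cite{wald1984general,HawkingEllis:1973,ONeill}): a causal geodesic from $\Sigma$ to a point $p$ which contains a focal point of $\Sigma$ strictly in its interior is not of maximal Lorentzian length among causal curves from $\Sigma$ to $p$, so any causal curve from $\Sigma$ to $p$ of maximal length must be a geodesic meeting $\Sigma$ orthogonally with no focal point of $\Sigma$ before $p$. For the global step, suppose $M$ were future timelike geodesically complete. Then following any future-directed unit-speed timelike geodesic orthogonal to $\Sigma$ out to proper time $2(n-1)/C$ produces a point $p\in J^+(\Sigma)$ with Lorentzian distance from $\Sigma$ satisfying $\tau_\Sigma(p)\ge 2(n-1)/C$. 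Because $\Sigma$ is a Cauchy surface, $M$ is globally hyperbolic and $J^+(\Sigma)$ is its future Cauchy development, so $\tau_\Sigma(p)$ is finite and is realised by a maximal causal curve from $\Sigma$ to $p$; by the variational input this curve is a timelike geodesic $\gamma$ of length $\tau_\Sigma(p)$ meeting $\Sigma$ orthogonally with no focal point of $\Sigma$ before $p$. But $\gamma$ is defined on $[0,\tau_\Sigma(p)]$ with $\tau_\Sigma(p)>(n-1)/C$, so the local step forces a focal point at parameter value $\le(n-1)/C<\tau_\Sigma(p)$, i.e.~strictly before $p$ --- contradicting maximality. Hence $M$ is future timelike geodesically incomplete.

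The hard part will be the global/variational step rather than the Raychaudhuri estimate: one must set up the Lorentzian distance $\tau_\Sigma$, establish its finiteness and the attainment of the supremum by a maximising geodesic on $J^+(\Sigma)$ via the compactness of causal diamonds in globally hyperbolic spacetimes, and invoke the index-form argument relating focal points to loss of maximality. These are classical but technical, so in a review one cites them; one should also flag where the compactness of $\Sigma$ is essential --- it is precisely what upgrades ``$\theta<0$ everywhere on $\Sigma$'' to the uniform bound $\theta\le-C$ --- and note that completeness is used only to produce a point $p$ with $\tau_\Sigma(p)$ large, while degenerate cases such as the constant curve at a point of $\Sigma$ are harmless.
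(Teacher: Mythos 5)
Your proof is correct, and its core --- the Raychaudhuri/Riccati estimate giving focusing within proper time $(n-1)/C$, combined with the fact that maximizing curves from $\Sigma$ are orthogonal geodesics free of focal points --- is exactly the paper's. The only real difference is in how the global step is packaged: the paper (following Galloway) quotes the existence of a future inextendible timelike geodesic emanating normally from the compact Cauchy surface $\Sigma$ that maximizes length to each of its points, and then argues this single geodesic cannot be future complete; you instead assume future completeness, run a normal geodesic out to proper time $2(n-1)/C$ to get a point $p$ with $\tau_\Sigma(p)$ large, and apply the Avez--Seifert-type attainment of the Lorentzian distance in a globally hyperbolic spacetime to obtain a maximizing geodesic that must contain a focal point before $p$, a contradiction --- this is the classical Wald/Hawking--Ellis route. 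The trade-off is minor: the paper's version needs the (limit-curve) lemma producing a $\Sigma$-ray but then concludes directly which geodesic is incomplete, while yours needs finiteness and attainment of $\tau_\Sigma(p)$ (using compactness of $J^-(p)\cap\Sigma$ and of the relevant causal diamonds) and proceeds by contradiction. You also correctly make explicit two points the paper leaves implicit: that compactness of $\Sigma$ is what upgrades the pointwise condition $\theta|_\Sigma<0$ to the uniform bound $\theta\le-C$ needed for a uniform focusing time, and that neither argument actually uses the strong causality hypothesis, global hyperbolicity from the Cauchy surface sufficing.
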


It was later shown by Hawking \cite{hawking1967occurrence} that the theorem can be proven without the strong causality condition assumption. Since Hawking's theorem was intended for cosmological applications it is worth mentioning that it works similarly for positive initial expansion and past geodesic incompleteness (primordial singularity). In 1969 Hawking and Penrose co-authored another paper \cite{Hawking:1969sw} which included the proof of a theorem for causal geodesic incompleteness, weakening some of their previous causality assumptions.

\medskip

Many of the proofs of singularity theorems use the Raychaudhuri equation (\ref{eqn:raytime},\ref{eqn:raynull}). We can sketch the proof of Hawking's singularity theorem using the Raychaudhuri equation for the (irrotational) congruence emanating normally from $\Sigma$
\be
\frac{\mathrm{d} \theta}{\mathrm{d}\tau} =-R_{ab} t^a t^b-\sigma_{ab} \sigma^{ab}-\frac{\theta^2}{3}  \,,
\ee
where we have also assumed we are in a four-dimensional spacetime. If the timelike convergence condition holds 
\be
\frac{\mathrm{d}\theta}{\mathrm{d}\tau}+\frac{\theta^2}{3} \leq 0 \,,
\ee
so $\theta^{-1} \geq (\theta^{-1}|_{\Sigma}+\tau/3)$ where $\theta|_{\Sigma}$ is the initial expansion on $\Sigma$. For negative initial expansion, $\theta$ diverges after finite time no larger than $3/|\theta|_{\Sigma}|$. 

Now we need the well-known property of timelike geodesics as maximizing the proper time (cf.~\cite{ONeill} for details on the following concepts). A focal point $\gamma(s)$ on a geodesic $\gamma$ in a geodesic congruence that emanates normally from a surface $\Sigma$, is a point beyond which length is no longer extremized. This means that $\lim_{\tau \to s^-}\theta=-\infty$, or, if $\gamma(s)$ lies to the past of $\Sigma$,
$\lim_{\tau\to s^+}\theta=\infty$. Some versions of singularity theorems, such as the Hawking-Penrose ones, have used pairs of conjugate points instead. We may think of these as points $\gamma(a)$ and $\gamma(b)$ on a geodesic $\gamma$ where length is no longer extremized beyond each conjugate point.

Continuing the argument for Hawking's singularity theorem, according to the Raychaudhuri equation, for every timelike geodesic family emanating normally from $\Sigma$, $\theta \to -\infty$ after a finite time if the geodesics extend far enough. In particular, every future complete $\gamma$ must develop a focal point.

At this stage we need the topological assumptions and in particular the existence of a compact Cauchy surface. This implies (see e.g.~\cite{galloway_1986} for a rigorous proof) the existence of a future inextendible timelike geodesic emanating (necessarily normally) from $\Sigma$ and which is length-maximizing between $\Sigma$ and each of its points. Now suppose that the geodesic is future complete. Then it would be a future complete timelike geodesic containing no focal points to $\Sigma$, which leads to a contradiction with the Raychaudhuri equation. The contradiction is avoided by concluding that the geodesic is future incomplete.

The proof of Penrose's singularity theorem follows a similar logic. For rigorous proofs of the theorems, apart from the original manuscripts, see \cite{HawkingEllis:1973} and \cite{Senovilla:2014gza}. 

\medskip

O'Neill \cite{ONeill} presents somewhat different proofs, avoiding the Raychaudhuri equation but instead making use of index form methods. Note that the following discussion is for timelike geodesics and that the analysis for null geodesics is similar, using the action integral instead of the length functional. The \emph{index form} $I[V]$ is the second derivative of the length functional w.r.t.~a
variation of the geodesic $\gamma$. If $\gamma_{\alpha}$ is a one-parameter family of piecewise smooth constant speed curves joining a fixed spacelike hypersurface $\Sigma$ to a fixed point on the geodesic $\gamma$ and if $\gamma_0=\gamma$ and
$V:=\frac{\mathrm{d}}{\mathrm{d}\alpha}\gamma_{\alpha}|_{\alpha=0}$, then 
\be \label{eq:indexform}
I[V]:=\left.\frac{\mathrm{d}^2}{\mathrm{d}\alpha^2}L[\gamma_{\alpha}]\right|_{\alpha=0}=\int_0^{\tau_0} \mathrm{d}\tau \, \left( -\frac{\mathrm{D}V^a}{\mathrm{d}\tau}\frac{\mathrm{D}V_a}{\mathrm{d}\tau} + R_{cdab} t^c V^d t^a V^b \right)\,- K_{ab}V^a V^b|_{\gamma(0)} \,,
\ee
where $K_{ab}$ is the extrinsic curvature tensor of $\Sigma$ (see \cite{ONeill} for a derivation). The idea is to use the second derivative test to investigate if the geodesic $\gamma$ maximizes length locally. The second derivative of the length is the index form, so there exists a focal point $\gamma(s)$ for some $s \in (0,\tau_0]$ if and only if $I[V]\geq 0$ for some $V$.

Now any piecewise smooth function $f$ obeying $f(0)=1$, $f(\tau_0)=0$ determines a continuous piecewise smooth variation field
\begin{equation}
V^a = f v^a.
\end{equation} 
where $v^a$ is a unit spacelike vector tangent to $\Sigma$ at $\gamma(0)$ and parallel transported. Then we have
\begin{equation}
I[V] = \int_0^{\tau_0} \mathrm{d}\tau \, \left( -\dot{f}^2 + f^2 R_{cdab} t^c v^d t^a v^b \right)\,  - K_{ab}v^a v^b|_{\gamma(0)}. 
\end{equation}
Considering an orthonormal basis $e_{\alpha}$ ($\alpha=0,\ldots,n-1$) for the tangent space to $\Sigma$ in which $e_0^a=t^a$ and summing yields
\begin{equation}
\label{eqn:index2}
\sum_{i=1}^{n-1} I[f e_i] =  \int_0^{\tau_0}\mathrm{d}\tau  \, \left( -(n-1)\dot{f}^2 + f^2 R_{ab} t^a t^b  \right)\, - \theta|_{\gamma(0)} \,,
\end{equation}
where the expansion $\theta \equiv K= g^{ab} K_{ab}$ . 
If the right-hand side is non-negative for some such $f$, then the same must be true of at least one of the terms on the left, and it follows that there is a focal point in $(0,\tau_0]$. 

Now we can see an alternative way of proving singularity theorems. If $\theta|_{\gamma(0)}<0$, $\tau_0 \ge (n-1)/|\theta|_{\gamma(0)}|$, and the Ricci tensor obeys the timelike convergence condition (\ref{eqn:timelikeconvergencecondition}), then there is a focal point to $\Sigma$ along $\gamma$. This result is immediate from (\ref{eqn:index2}) using the function $f(\tau)=1-\tau/\tau_0$.

\medskip

In both lines of argument, the use of energy conditions is central in proving the singularity theorems and the applicability of the theorem is determined to a large extent by the range of validity of the energy condition. As we have seen in Sections \ref{sec:pointwise} and \ref{sec:QEI}, pointwise energy conditions, such as those used in the original singularity theorems, are easily violated by some classical and all quantum fields. Ford \cite{Ford:2003qt} in his 2003 review discussed the possibility of circumventing the singularity theorems with quantum fields. One early example is the model considered by Fulling and Parker \cite{parker1973quantized} in 1973. They constructed a quantum state which violates the SEC and the closed universe can bounce at a finite curvature scale. Thus the need to prove singularity theorems with weaker energy conditions was evident almost at the same time the first singularity theorems were proven. (There are also multiple versions of singularity theorems with weakened causality assumptions but this is not the focus of this review).

In 1977, Tipler \cite{Tipler:1978zz} showed that instead of the easily violated SEC the Hawking singularity theorem can be proven assuming the WEC and the ASEC. During the same year \cite{tipler1978general} he analysed a general form of the Ricatti inequality 
\be
\frac{\mathrm{d} \theta}{\mathrm{d}\tau} \leq -R_{ab} \xi^a \xi^b-\frac{\theta^2}{n-r}  \,,
\ee
where $\xi^a$ is a casual vector and $r=2$, resp.~$r=1$, for timelike, resp.~null, congruences and $\tau$ is the proper time, resp.~an affine parameter. He proved that focal and conjugate points are inevitable, assuming the integral of the contracted Ricci Tensor is non-negative,
\be
\liminf_{\tau_1\to -\infty, \tau_2\to \infty} \int_{\tau_1}^{\tau_2} \mathrm{d}\tau\, R_{ab} \xi^a \xi^b  \geq 0 \,,
\ee
similar to averaged energy conditions, cf.~(\ref{eqn:ANECext1}). 

Roman and Borde \cite{Roman:1988vv, Roman:1986tp, Borde:1987a, Borde:1987b} improved and extended this method, proving theorems for averaged energy conditions holding on either half-complete or complete geodesics. A few years before Roman and Borde's work, Chicone and Ehrlich \cite{chicone1980line} used instead index form methods and the ASEC or ANEC (or an analogous Riemannian condition) to prove the existence of conjugate points in Riemannian and Lorentzian manifolds. 

\medskip

The averaged energy conditions used in the works cited above are not derived from a QFT in the ways discussed in Section \ref{sec:classvsquant}. In order to prove singularity theorems which are obeyed (semiclassically) by quantum fields, the energy restriction of the theorem should be a QEI or a condition derived from a QEI. This possibility was first indirectly addressed by Fewster and Galloway \cite{Fewster:2010gm}, who introduced the following assumption inspired by QEIs (for simplicity we have set the smooth function $r_0(\tau)=0$ here):
\be 
\label{eqn:fewgal}
\int_{-\infty}^\infty \mathrm{d}\tau \, R_{ab} t^a t^b f^2 \geq -||| f |||^2 \,,
\ee
where $f \in C^\infty_0 (\mathbb{R},\mathbb{R})$ is a fixed test function and 
\be
|||f|||^2=\sum_{\ell=0}^L Q_\ell ||f^{(\ell)}||^2 \,,
\ee
is the Sobolev norm, where the $Q_\ell$'s are positive constants and $|| \cdot||$ the usual $L^2$ norm. They proved singularity theorems assuming (\ref{eqn:fewgal}) and that there exist $c > 0$, $\tau_0 > 0$ and $h \in C^\infty (\mathbb{R},\mathbb{R})$ obeying $\supp \, h \in [-\tau_0,\infty)$ and $h(\tau) = e^{-c\tau/s}$ on $[0,\infty)$, for which
\be
\theta\big|_{\Sigma} \leq -\int_{-\tau_0}^0 \mathrm{d}\tau\, R_{ab} t^a t^b h^2  -|||h|||^2-\frac{c}{2} \,.
\ee
It is interesting to note that the dependence on $R_{ab} t^a t^b$  is only through its values before the initial contraction is measured. It might seem strange at first that positive values require larger initial contraction, but combining this equation with the averaged energy condition (\ref{eqn:fewgal}) we see that large positive energies in the present allow large negative values in the future, an effect known as quantum interest, cf.~Section \ref{subs:uncertainty}.

Using the same method Brown, Fewster and Kontou \cite{Brown:2018hym} managed to proved a singularity theorem for timelike geodesic incompleteness for the classical massive free scalar field, even though this field violates the SEC. Recently Fewster and Kontou \cite{Fewster:2019bjg} showed that singularity theorems with energy conditions of the form (\ref{eqn:fewgal}) can also be proven using the index form method described earlier. The method applies immediately to singularity theorems with weakened energy conditions just by making a judicious choice for the function $f$ in (\ref{eqn:index2}). For example, using an exponential function one gets the Fewster-Galloway theorem \cite{Fewster:2010gm}. Additionally, they made quantitative estimates of the initial contraction required in order to guarantee timelike or null geodesic incompleteness. 

So far there hasn't been proven a singularity theorem with an energy condition derived directly from proven QEIs. In the timelike case the relevant QEI is the QSEI. No QSEIs were derived until the recent work of Fewster and Kontou \cite{Fewster2017QEIs} on the non-minimally coupled scalar field. The corresponding singularity theorem is currently work in progress \cite{Fewster:unpublished}.

\medskip

The null case presents greater challenges since, as we discussed in Section \ref{sec:QEI}, there are no known QEIs along null geodesics in four-dimensions. Even if the AANEC on complete geodesics is not violated, one may wonder whether violations on half geodesics suffice to circumvent Penrose's singularity theorem. This issue was investigated in detail for Schwarzschild black holes by Ford and Roman \cite{Ford:1995gb}. They showed that the magnitude and duration of any violations of the ANEC or AWEC on a half geodesic must satisfy a certain quantum inequality. The results also suggest that this quantum inequality prevents the quantum theory from circumventing Penrose's singularity theorem, so despite the evaporation process taking place at the horizon, a singularity should still form.

Finally, when using any QEI as an assumption to a singularity theorem it is necessary to use the semiclassical Einstein equation
(\ref{eqn:SEE}), which connects the classical curvature with the renormalized stress tensor (see Sec.~\ref{subs:AANEC}). We are not aware of any singularity theorems for solutions of the semiclassical Einstein equation in the literature.

\subsection{Black holes}
\label{sub:blackholes}

When an asymptotically flat spacetime $M$ has a future singularity, i.e.~a future incomplete causal geodesic, then it is generally expected that the singularity is hidden behind an event horizon, so it is inside a black hole. More precisely, if $\mathscr{I}^+$ denotes future null infinity (see e.g.~\cite{wald1984general} for an explanation of this terminology), then the singularity should not be visible near $\mathscr{I}^+$, so it should not arise in $J^-(\mathscr{I}^+)$. This expectation is called the \emph{Cosmic Censorship Conjecture}. Each connected component of the region $M\setminus J^-(\mathscr{I}^+)$ is called a \emph{black hole}. The boundary $H:=\partial J^-(\mathscr{I}^+)\cap M$ of this region is the \emph{event horizon}, cf.~Figure \ref{fig:asymptoticflatness}. In this section we will see what information the energy conditions can provide about black holes and their event horizons. For an application of QWEIs to the region outside a static compact object see \cite{Marecki:2005zf}.

If $\Sigma$ is a Cauchy surface for $M$, then the spacelike surface $\Sigma\cap H$ divides $\Sigma$ into regions inside and  outside the black hole(s), so it must be orientable. For a four-dimensional stationary black hole, if $\Sigma\cap H$ is compact and if the DEC holds, Hawking has shown that each connected component of $\Sigma\cap H$ must be homeomorphic to a sphere \cite{hawking1972}. A generalization to higher dimensions was established by Galloway and Schoen \cite{Galloway:2005mf} and improved by Galloway \cite{Galloway2008}. Assuming the DEC they showed that each connected component of $\Sigma\cap H$ (and, more generally, any outermost outer trapped surface, see loc.cit.~for definitions), endowed with the induced metric, must be of positive Yamabe type, i.e.~it can be endowed with a Riemannian metric of constant positive scalar curvature. In four dimensions this result can be combined with the Gauss-Bonnet Theorem to reproduce Hawking's result of a spherical horizon.

\begin{figure}[h!]
	\centering
	\includegraphics[width=.8\textwidth,viewport=0cm 22.6cm 7.1cm 28cm,clip]{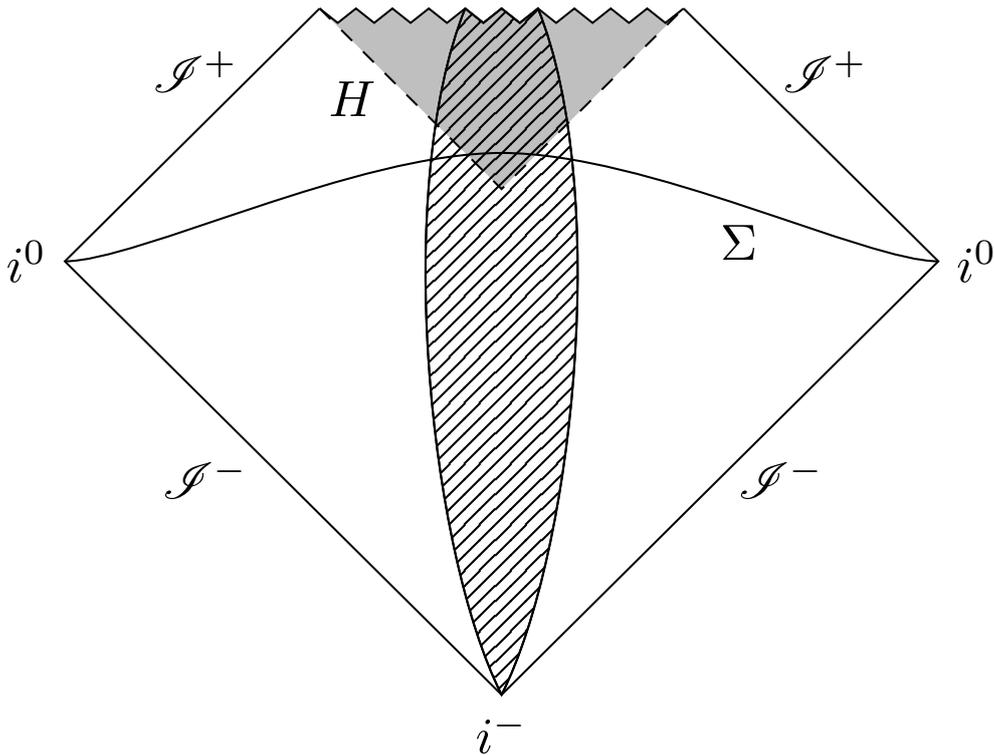}
	\caption{Penrose diagram of an asymptotically flat spacetime with a black hole. Here $\mathscr{I}^{\pm}$ denote future and past null infinity, $i^0$ is spacelike infinity and $i^-$ is past infinity. The black hole region is indicated in gray with event horizon $H$. $\Sigma$ is a Cauchy surface and the shaded region indicates matter collapsing to a black hole.}
	\label{fig:asymptoticflatness}
\end{figure}

\medskip

Much weaker energy conditions lead to a result on the area of $\Sigma\cap H$ in the induced metric. Assuming the NEC, Hawking 
\cite{hawking1972} showed that the area of a black hole cannot decrease in time, a result known as the Area Theorem, or the second law of black hole thermodynamics \cite{bardeen1973,waldQFT}. More precisely, if $\Sigma'$ is another Cauchy surface for $M$ with $\Sigma'\subset I^+(\Sigma)$, then $\Sigma'\cap H$ lies to the future of $\Sigma\cap H$. When the NEC holds, the area of
of $\Sigma'\cap H$ must be larger than that of $\Sigma\cap H$. Because black holes cannot bifurcate this theorem also holds separately for each connected component of the black hole region. The Area Theorem can be strengthened by weakening the assumption of the NEC to an averaged energy condition. Lesourd \cite{Lesourd:2017} has recently shown that it suffices to assume
a damped ANEC, which imposes a restriction on a weighted average over half null geodesics whereby the contribution near the horizon is emphasized.

\medskip

We now turn to the topology of the interior of a black hole. If the ANEC holds, the topology of the interior cannot be probed by an observer outside the black hole. This follows from the \emph{Topological Censorship Theorem} of Friedman, Schleich and Witt 
\cite{Friedman:1993ty}. In this theorem we may think of the timelike curve $\gamma_0$ as a the worldline of an observer and the causal curve $\gamma$ as the worldline of a probe or signal that is sent out and received by the observer.

\begin{theorem}\label{thm:topcensorship}
	Let $M$ be a globally hyperbolic spacetime which is asymptotically flat and satisfies the ANEC. Let $\gamma$ and $\gamma_0$ be causal curves from $\mathscr{I}^-$ to $\mathscr{I}^+$, such that $\gamma_0$ is timelike and contained in a simply connected neighbourhood of $\mathscr{I}$. Then $\gamma$ can be continuously deformed to $\gamma_0$ while keeping its endpoints on $\mathscr{I}$. (Here $\mathscr{I}^-$ is past null infinity and $\mathscr{I}=\mathscr{I}^-\cup\mathscr{I}^+$.)
\end{theorem}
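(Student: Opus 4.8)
The plan is to argue by contradiction, in the manner of Friedman, Schleich and Witt, and to manufacture from a hypothetical non-deformable causal curve a complete achronal null geodesic along which the focusing forced by the ANEC is incompatible with achronality. First I would work inside the domain of outer communications $D:=I^-(\mathscr{I}^+)\cap I^+(\mathscr{I}^-)$, which by global hyperbolicity and asymptotic flatness is open, globally hyperbolic, and meets both $\mathscr{I}^-$ and $\mathscr{I}^+$ in its conformal completion; the curves of the statement may be taken to lie there. Assume for contradiction that $\gamma$ cannot be deformed to $\gamma_0$ keeping endpoints on $\mathscr{I}$. Pass to the universal covering spacetime $\widehat{D}\to D$ with the pulled-back conformal metric. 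Since $\gamma_0$ lies in a simply connected neighbourhood $U$ of $\mathscr{I}$, the preimage of $U$ is a disjoint union of homeomorphic copies, so the preimage of $\mathscr{I}$ splits into disjoint copies $\widehat{\mathscr{I}}^{\pm}_{(i)}$. Lift $\gamma_0$ to a causal curve $\widehat{\gamma}_0$ from $\widehat{\mathscr{I}}^-_{(0)}$ to $\widehat{\mathscr{I}}^+_{(0)}$ and lift $\gamma$ to a causal curve $\widehat{\gamma}$ with the same past endpoint on $\widehat{\mathscr{I}}^-_{(0)}$; the non-deformability hypothesis is precisely the statement that the future endpoint of $\widehat{\gamma}$ lands on a copy $\widehat{\mathscr{I}}^+_{(\ast)}\neq\widehat{\mathscr{I}}^+_{(0)}$.

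Next I would consider the achronal boundary $\mathcal{H}:=\partial I^+\big(\widehat{\mathscr{I}}^-_{(0)}\big)$ in $\widehat D$. On one hand $\widehat{\mathscr{I}}^+_{(\ast)}\subset I^+\big(\widehat{\mathscr{I}}^-_{(0)}\big)$, since $\widehat{\gamma}$ is a causal curve from $\widehat{\mathscr{I}}^-_{(0)}$; on the other hand $\widehat{\mathscr{I}}^+_{(0)}\not\subset I^+\big(\widehat{\mathscr{I}}^-_{(0)}\big)$, for a causal curve joining $\widehat{\mathscr{I}}^-_{(0)}$ to $\widehat{\mathscr{I}}^+_{(0)}$ would project to a curve in $D$ deformable to $\gamma_0$. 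Hence $\mathcal{H}\neq\emptyset$ and, by standard causal theory, $\mathcal{H}$ is an achronal, edgeless topological hypersurface ruled by null geodesic generators. Using the asymptotic structure one shows that such a generator $\eta$ has a past endpoint on a cut of $\widehat{\mathscr{I}}^-$ and runs forward to $\widehat{\mathscr{I}}^+$, so that $\eta$ is a \emph{complete achronal} null geodesic in $\widehat D$.

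Finally, along $\eta$ I would apply the Raychaudhuri equation (\ref{eqn:raynull}) to the (irrotational) congruence of generators of $\mathcal{H}$. The ANEC, in its geometric form $\int_{-\infty}^{\infty}R_{ab}\dot\eta^a\dot\eta^b\,\mathrm{d}\lambda\ge0$ (cf.\ (\ref{eqn:nullconvergencecondition}) and Einstein's equation), together with completeness of $\eta$ and the definite focusing that a generator of $\partial I^+$ inherits from emanating from $\widehat{\mathscr{I}}^-$, forces --- via a Riccati/Borde--Tipler estimate of exactly the type used in the singularity theorems of Sec.~\ref{sub:singularity} --- a pair of conjugate points along $\eta$. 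But a generator of the achronal boundary $\partial I^+(S)$ can contain no such conjugate point: beyond it the geodesic enters $I^+(S)$, contradicting $\eta\subset\mathcal{H}=\partial I^+\big(\widehat{\mathscr{I}}^-_{(0)}\big)$. This contradiction shows that $\gamma$ is deformable to $\gamma_0$. The step I expect to be the crux is precisely this focusing argument: the averaged condition with a vanishing integral and vanishing expansion would not by itself produce conjugate points, so one must extract genuine quantitative focusing from $\eta$ being a complete generator of the boundary of the future of $\mathscr{I}^-$, and one must secure completeness of $\eta$ from the asymptotic flatness; everything else --- the covering-space bookkeeping, the identification of non-deformability with a separating achronal boundary, and the basic properties of $\mathcal{H}$ --- is routine once the asymptotic structure is set up with care.
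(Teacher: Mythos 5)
The paper itself does not prove this theorem; it only states it and cites Friedman, Schleich and Witt, so your sketch can only be judged against the original argument, whose overall architecture (universal cover, copies of $\mathscr{I}$, an achronal null geodesic from $\mathscr{I}^-$ to the ``wrong'' copy of $\mathscr{I}^+$, ANEC focusing versus achronality) you have correctly reproduced. However, two steps are genuinely broken or missing. First, your argument that $\mathcal{H}=\partial I^+\big(\widehat{\mathscr{I}}^-_{(0)}\big)$ is nonempty rests on the claim $\widehat{\mathscr{I}}^+_{(0)}\not\subset I^+\big(\widehat{\mathscr{I}}^-_{(0)}\big)$, justified by saying that a causal curve joining these two copies ``would project to a curve deformable to $\gamma_0$.'' That is not a contradiction --- such curves exist (the lift of $\gamma_0$ itself is one), so the claim is false as stated and the justification is a non sequitur. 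Worse, even granting $\mathcal{H}\neq\emptyset$, the assertion that a generator $\eta$ of $\mathcal{H}$ has past endpoint on $\widehat{\mathscr{I}}^-$ and runs to $\widehat{\mathscr{I}}^+$ is exactly where the non-deformability hypothesis has to do work, and you give no argument: a priori a generator of an achronal boundary may be past- or future-inextendible in the bulk, or incomplete (the theorem assumes no null completeness), and nothing ties it to the wrong copy $\widehat{\mathscr{I}}^+_{(\ast)}$. The standard route is to use the causal curve $\widehat{\gamma}$ to show $J^+$ of a suitable point or cut of $\widehat{\mathscr{I}}^-_{(0)}$ meets $\widehat{\mathscr{I}}^+_{(\ast)}$, and then extract a ``fastest'' null geodesic realizing that connection via a limit-curve argument; completeness then follows because both endpoints lie on $\mathscr{I}^\pm$, at infinite affine parameter. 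This construction is absent from your proposal.

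Second, you yourself flag --- and do not close --- the crux: the ANEC with a possibly vanishing integral, together with generators whose expansion tends to zero at both asymptotic ends, does not by itself produce a pair of conjugate points (Minkowski null hyperplanes are the obvious borderline case). Invoking ``a Riccati/Borde--Tipler estimate'' is not enough here; the original argument needs the asymptotic behaviour of the expansion of the relevant congruence near $\mathscr{I}$ (strict positivity coming from the vertex/cut structure at past null infinity, or strict falloff conditions near $\mathscr{I}^+$) fed into the Raychaudhuri equation (\ref{eqn:raynull}) to rule out the degenerate case $\theta\equiv0$, $\sigma_{ab}\equiv0$, $R_{ab}\dot\eta^a\dot\eta^b\equiv0$; later treatments by Galloway and collaborators were developed precisely to handle this point cleanly. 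Until both the construction of the scri-to-scri achronal generator and this quantitative focusing step are supplied, the contradiction is not established and the proposal remains a plausible outline rather than a proof.
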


The Topological Censorship Theorem does allow observers to probe the topology using signals that originate from, e.g., a past singularity rather than $\mathscr{I}^-$. However, in view of the cosmic censorship conjecture this scenario does not apply to black holes.

\medskip

In the presence of quantum fields black holes will lose energy through Hawking radiation and they are generally expected to evaporate \cite{hawking1975,waldQFT}. This process involves a flux of negative energy across the event horizon, which will then shrink in size. In this process the pointwise energy conditions are certainly violated (c.f.~\cite{PhysRevD.100.124025} for NEC violation at spherically symmetric apparent horizons). The expectation that black holes evaporate also entails the violation of the Area Theorem in semiclassical gravity. The damped ANEC on half geodesics used in \cite{Lesourd:2017} should then also be violated, but to our knowledge this has not yet been verified in detail.

\subsection{Restrictions on exotic spacetimes}
\label{sub:wormholes}

So-called \textit{exotic spacetimes}, such as those with closed timelike curves or superluminal travel, can be easily constructed in the context of general relativity. The basic idea is to solve the Einstein equation in reverse: construct any geometry of spacetime and then find the matter responsible for that geometry. Closed timelike curves and ``effective" superluminal speeds (``effective" as the speed of light cannot be surpassed locally) can be achieved by creating \textit{wormholes}, bridges that connect different parts of spacetime, or other kinds of warp drive spacetimes. (We refer to \cite{lobo2017wormholes,visser1996lorentzian} for scientific reviews of wormhole physics and to \cite{everett2012time,thorne1995black} for generally accessible accounts incorporating historical information.

Wormhole geometry ideas can be traced back to as early as 1916 with the work of Flamm \cite{flamm1916beitrage, flamm2015republication}, but the first wormhole solution was constructed by Einstein and Rosen in 1935 \cite{Einstein:1935tc}. Their motivation was not to create a wormhole, but rather to construct a singularity-free particle model. They considered the isometric embedding of the equatorial section of the Schwarzschild solution to a 3-dimensional Euclidean space with two flat sheets and the ``particle'' was represented as a bridge between these sheets. The Einstein-Rosen wormhole is not what we now call a ``traversable" wormhole: an observer attempting to cross the ``bridge'' will end up on the Schwarzschild curvature singularity. 

Twenty years passed until Wheeler \cite{Wheeler:1955zz} presented the next contribution to the field. His original motivation was not spacetime bridges; his ``geons'' are unstable solutions to the coupled Einstein-Maxwell field equations. However, in his work appears a notion of what we today call a wormhole: ``\dots a metric which on the whole is nearly flat except in two widely separated regions, where a double-connectedness comes into evidence \dots'' \cite{Wheeler:1955zz}, followed by a figure (Figure \ref{fig:wormhole}). 

\begin{figure}[h]
	\center
	\includegraphics[scale=0.35]{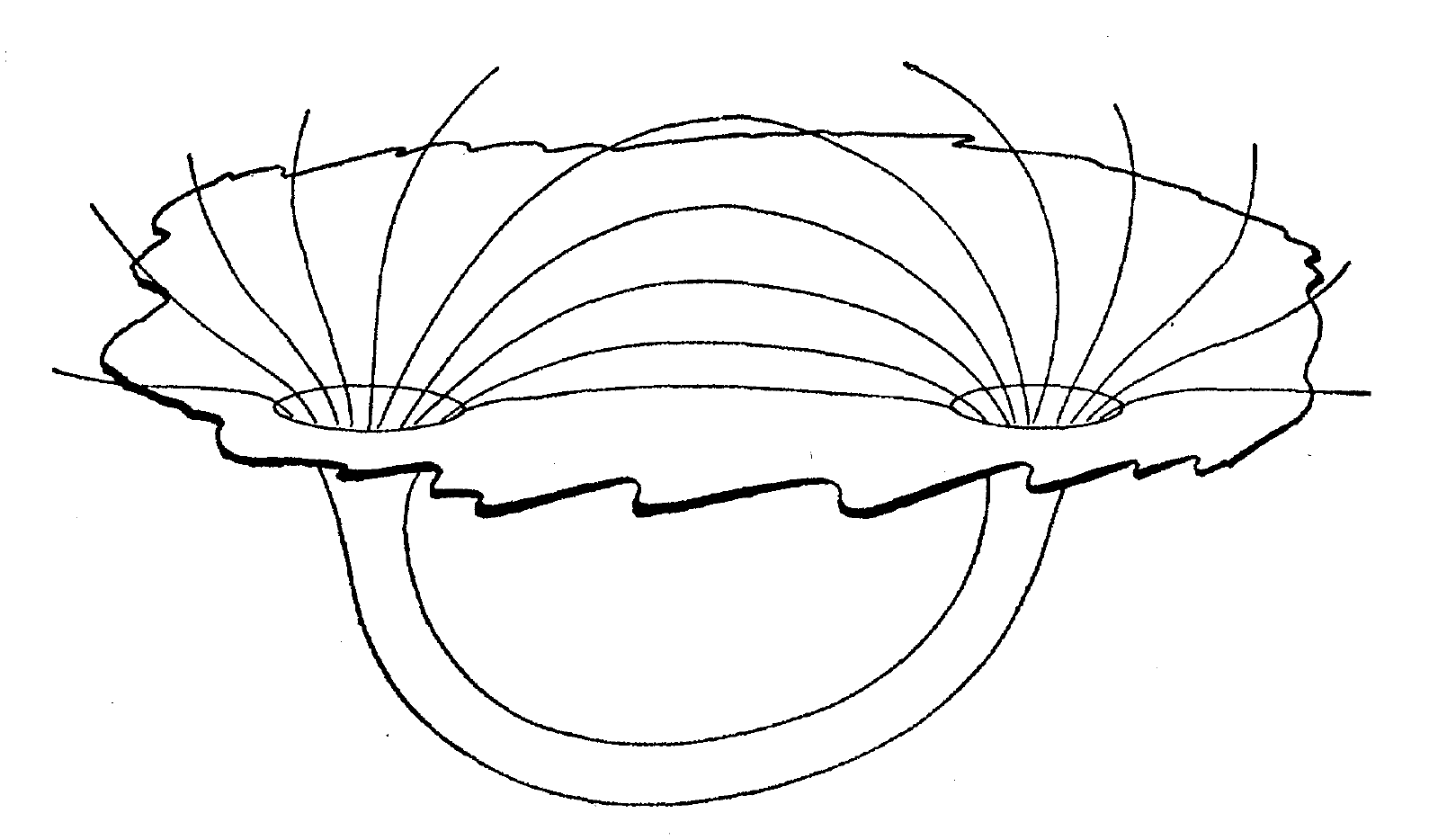}
	\caption{The first schematic representation of a wormhole by Wheeler \cite{Wheeler:1955zz}. (Reproduced with kind permission of the American Physical Society.)}
	\label{fig:wormhole}
\end{figure}

To create a traversable wormhole, a topological ``shortcut'' in spacetime, with finite curvature everywhere, closed timelike curves or a discontinuous choice of the future light cone is inevitable as proven by Geroch \cite{geroch1967topology} and slightly modified by Hawking \cite{hawking1992chronology}. Formally

\begin{theorem}
	If there is a timelike tube connecting surfaces $S$ and $S'$ of different topology, then the region  contains closed timelike curves.
\end{theorem}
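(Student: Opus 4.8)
The plan is to prove the contrapositive: assuming the tube contains no closed timelike curve, we show that $S$ and $S'$ are diffeomorphic, hence of the same topology. This is Geroch's topology-change theorem (as refined by Hawking), and I would follow his strategy. Write $N$ for the open tube and $M=\overline{N}$ for its closure: by hypothesis $M$ is a compact manifold with boundary $\partial M=S\sqcup S'$, equipped with a Lorentzian metric for which $S$ and $S'$ are spacelike, and the tube structure provides a time orientation in which future-directed causal curves run from the $S$-component towards the $S'$-component. (If $M$ fails to be time-orientable one works on the orientable double cover, but in the tube setting the orientation is part of the given data.)

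First I would pick a smooth future-directed timelike vector field $\xi$ on $M$, which exists because $M$ is time-oriented; since $S$ and $S'$ are spacelike, $\xi$ is transverse to $\partial M$, pointing inward along $S$ and outward along $S'$. Integral curves of $\xi$ are timelike, so by the no-closed-timelike-curve hypothesis none is periodic and each is an embedded arc. The central claim is that every maximal integral curve of $\xi$ has its past endpoint on $S$ and its future endpoint on $S'$. Granting this, the curves foliate $M$, and since $M$ is compact with $\xi$ transverse to the boundary the past and future exit parameters depend smoothly on the starting point; rescaling gives a diffeomorphism from $S\times[0,1]$ onto $M$ taking $S\times\{0\}$ to $S$ and $S\times\{1\}$ to $S'$, whence $S\cong S'$. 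This proves the contrapositive, and therefore the stated theorem.

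The main obstacle is the central claim, and specifically the step that uses chronology: ruling out an integral curve of $\xi$ that stays in $\operatorname{int}M$ for all future parameter values. Such a curve would be future-complete (as $M$ is compact) and would never reach $S'$. To exclude it I would argue with $\omega$-limit sets: the $\omega$-limit set $L\subset\operatorname{int}M$ of such a curve is nonempty, compact and flow-invariant, and taking a minimal invariant subset $K\subseteq L$, every $\xi$-orbit in $K$ is dense in $K$, so the curve returns arbitrarily near one of its earlier points along the flow. Working in a convex normal neighbourhood of that point, where the causal structure is locally that of Minkowski space, one can then splice a long future-directed timelike segment of the curve to a short timelike arc and obtain a closed timelike curve --- a contradiction. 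The subtle part here, which is exactly what Geroch treats carefully, is verifying that the near-return point genuinely lies in the local chronological past of the earlier point so that the splicing is possible; an alternative is to construct a continuous global time function $t\colon M\to[0,1]$ with $t^{-1}(0)=S$, $t^{-1}(1)=S'$ that is strictly increasing along all future-directed timelike curves (modelled on Geroch's volume-function construction for a finite measure on $M$, which is where chronology enters), and to observe that $t$ is strictly increasing along integral curves of $\xi$ and hence cannot be constant on any flow-invariant set, forcing every such curve to exit through the boundary. I would fill in this part by invoking or reproducing the relevant analysis from Geroch's paper.
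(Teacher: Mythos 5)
The review itself states this theorem without proof, citing Geroch and Hawking, so your proposal must stand on its own as a reconstruction of Geroch's argument. The skeleton is right: argue the contrapositive, take a future-directed timelike field $\xi$ on the compact tube, inward at $S$ and outward at $S'$, and obtain the product structure once every maximal orbit runs from $S$ to $S'$. But the step carrying all the weight --- excluding an orbit that remains in the interior for all future parameter values using only the absence of closed timelike curves --- is precisely the step you leave open (``I would fill in this part by invoking or reproducing the relevant analysis from Geroch's paper''), and neither sketch you offer closes it as stated. The $\omega$-limit/splicing version fails at the point you flag: two near-returns $\gamma(s_1),\gamma(s_2)$ close to the limit point $p$ lie essentially on a small spacelike section and are in general not chronologically related inside a convex neighbourhood, so no short timelike arc closes the curve; accumulation of a timelike curve at a point contradicts strong causality, not chronology, and chronological spacetimes exhibiting exactly this behaviour exist. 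The repair uses the flow-box structure of $\xi$, not bare causal structure: shrink to a flow box around $p$ in which the coordinate speed of $\xi$ is uniformly smaller than the minimal coordinate light speed; then every passage of the orbit that comes spatially close to $p$ traverses the box from its past face to its future face, so a point near the past face of a \emph{later} passage lies in the local chronological past of a point near the future face of an \emph{earlier} passage, and splicing those two points (short timelike arc in the box plus the long orbit segment between them) yields a closed timelike curve.

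Your alternative route also has a gap. Geroch's volume function $t(p)=\mu(I^-(p))$ is strictly increasing under $\ll$ assuming only chronology (for $\mu$ of full support), but it is merely lower semicontinuous; continuity requires reflectivity, which does not follow from the absence of closed timelike curves, and without continuity the inference ``$t$ is strictly increasing along the flow, hence cannot be constant on a flow-invariant set, hence every orbit exits'' is not valid --- a bounded strictly increasing function along an imprisoned orbit is no contradiction by itself, and arranging $t^{-1}(0)=S$, $t^{-1}(1)=S'$ is likewise not automatic. A correct version of this route does exist: let $t^*$ be the limit of $t$ along the imprisoned orbit and $q$ any $\omega$-limit point; openness of $I^{\pm}$ and strict monotonicity under $\ll$ give $t(\phi_{-\epsilon}(q)) < t^* \le t(\phi_{+\epsilon}(q))$ for every $\epsilon>0$, and applying this both at $q$ and at $\phi_{2\epsilon}(q)$ (also an $\omega$-limit point, by invariance) gives $t(\phi_{+\epsilon}(q)) < t^* \le t(\phi_{+\epsilon}(q))$, a contradiction. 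Either completion would finish the proof; as written, your proposal identifies the right strategy but does not establish the theorem.
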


Morris, Thorne, and Yurtsever \cite{Morris:1988tu} were the first to provide a concrete mechanism for the creation of a time machine using a wormhole. Their idea is to accelerate one of the mouths of the wormhole so that it is ``younger'' than the other, a variation of the twin effect. Then an observer can travel from that mouth to the other one by conventional means, jump into the wormhole, and arrive before they left! 

It was the late 70's when researchers realized that the kind of matter required to create and support such geometries is one that violates energy conditions. Tipler \cite{Tipler:1977eb} proved that the creation of a wormhole required the violation of the WEC. He provided a variation of Geroch's theorem where he replaced causality conditions by the WEC and Einstein's equation. In particular Tipler showed that topology changes cannot happen in a compact region if the WEC holds; and if topology changes happen in a finite (but not compact) region, the change must be accompanied with singularities. Keeping a wormhole ``open'' means looking through the wormhole at a two-sphere surrounding one mouth and far from it there is an outer trapped surface. Since there is no event horizon, the wormhole's stress tensor must also violate the WEC (see \cite{HawkingEllis:1973} Prop.~2.9.8). 

\medskip

Warp drive spacetimes are similar to wormhole geometries in that they allow effective superluminal travel and causality violations, but they do not require the change of topology. The first warp drive was introduced by Alcubierre \cite{Alcubierre:1994tu}. It describes a spaceship in a warp bubble characterized by a local expansion of space behind the bubble, and an opposite contraction ahead of it. The idea is similar to that of an inflationary universe: the relative speed of separation of two co-moving observers defined as the rate of change of proper spatial distance over proper time, can be much greater than the speed of light. However, the principle of relativity is not violated as the observers are always moving inside their local light cones. Similarly the spaceship in the warp bubble is pushed away from one destination and pulled towards another by the distortion of spacetime. The absence of wormhole topologies in the warp drive geometry does not mean it is free of energy condition violations. In fact, in the same work Alcubierre \cite{Alcubierre:1994tu} showed that his warp drive violates the WEC. The Alcubierre bubble also necessarily involves causality violations as was shown by Everett \cite{Everett:1995nn}.

The pilot of the spaceship in an Alcubierre bubble cannot control it in any way as the outside front of the bubble is always spacelike separated from its centre. Inspired by this problem Krasnikov suggested a different way to shorten travel between distant stars \cite{Krasnikov:1995ad}. He suggested that a spaceship could first travel to a destination at subluminal speed while modifying the spacetime in a tubular neighbourhood around it in the process. The modification consists of ``stretching'' the past light cones, a causal process. Then the trip back through the tube could take an arbitrarily short time, making the round trip more efficient.  Soon after Krasnikov's original work, Everett and Roman \cite{Everett:1997hb} showed that even though a ``Krasnikov tube'' does not include closed timelike curves, one can construct a time machine using two of these tubes. Additionally they showed that, as with the Alcubierre bubble, its construction requires negative energy. 

A question is then whether it is possible to have a warp drive without violation of the WEC. Olum \cite{Olum:1998mu} proved that it is not (see also \cite{penrose1993positive} for related ideas). First he defined superluminal travel as a situation where null geodesics are emitted perpendicularly to a flat 2-surface but one of them arrives at a destination flat 2-surface earlier (in an externally defined time) than all other geodesics. That geodesic would be for example the path in a Krasnikov tube. Formally
\begin{definition}
	\label{def:super}
	A causal path $P$ is superluminal from $A$ to $B$ only if there exist 2-surfaces $\Sigma_A$ around $A$ and $\Sigma_B$ around $B$ such that
	\begin{enumerate}
		\item 
		if $p \in \Sigma_A$ then a spacelike geodesic lying in $\Sigma_A$ connects $A$ to $p$, and similarly for
		$\Sigma_B$, and
		\item
		if $p\in \Sigma_A $ and $q \in \Sigma_B$ then $q$ is in the causal future of $p$ only if $p=A$ and $q=B$.
	\end{enumerate}
\end{definition}
The proof that warp drive spacetimes violate the WEC uses the Raychaudhuri equation (\ref{eqn:raynull}) for null geodesics. For a spacetime that obeys the WEC and the generic condition, which excludes some special metrics where the curvature vanishes in particular directions (see comments on Prop.~4.4.5 \cite{HawkingEllis:1973}), we have $\theta<0$. But following Definition \ref{def:super} of superluminal travel the null congruence at $B$ diverges thus $\theta \geq 0$. This contradiction proves the WEC violation. 

\medskip

We have seen that wormhole spacetimes as well as those with closed timelike curves and warp drives all require violations of the classical pointwise energy conditions. But we have also seen (Sec.~\ref{sec:QEI}) that such conditions are violated by quantum fields. Could it be possible that exotic spacetimes exist in a semiclassical context? The work by Morris, Thorne, and Yurtsever \cite{Morris:1988tu} is considered seminal in wormhole physics for describing such a possibility. It was the first to provide a wormhole model with a source for negative energy, in this case a Casimir plate system (see Sec.~\ref{subs:AANEC}) that violates the WEC and thus provides the necessary negative energy to keep the wormhole open. However, even before that work there were arguments that wormhole spacetimes also require violation of averaged energy conditions and especially the AWEC \cite{Deutsch:1978sc}. Later, Hawking \cite{hawking1992chronology}, proved more generally that if the causality violation developed from a non-compact initial surface, the AWEC must be violated on the Cauchy horizon. But even those theorems allow the possibility of constructing exotic spacetimes using AWEC violating quantum fields. 

Ford and Roman~\cite{Ford:1995wg} used QEIs to provide restrictions to wormhole spacetimes. They showed that static wormholes are possible semiclassically only if their throat is close to Planck size or the negative energy must be concentrated in bands many orders of magnitude smaller than the throat size. Work by Everett and Roman~\cite{Everett:1997hb} and also Ford and Pfenning~\cite{Pfenning:1997wh} showed similar results for Krasnikov tubes and Alcubierre bubbles respectively. For a Krasnikov tube, the negative energy needs to be concentrated in a region no thicker than a few thousand Planck lengths. Additionally, the total negative energy required for a tube of length and diameter of $1$m is of order $10^{16}$ galactic masses! Similarly, for an Alcubierre bubble the thickness is constrained to be of the order of only a few hundred Planck lengths, while the negative energy required for a macroscopic bubble is greater than the total mass of the entire visible universe.

Even though these results significantly constrain exotic spacetimes they leave open possibilities, for example the creation of atomic size warp bubbles \cite{Pfenning:1997wh} or acquiring the necessary amount of negative energy by multiple quantum fields \cite{Ford:1995wg}.  Visser, Kar, and Dadhich  \cite{Visser:2003yf, Kar:2004hc} quantified the minimum required violations of averaged energy conditions and discussed the possibility of creating such violations using quantum fields. Since there are counter-examples even to ANEC in semiclassical gravity (see Sec.~\ref{subs:ANEC}) certain possibilities remained open. 

But the creation of causality violating and faster-than travel spacetimes was proven to be even more difficult. Work by Graham and Olum \cite{Graham:2007va} generalized Hawking's theorem \cite{hawking1992chronology} and showed that such spacetimes require the violation not only of ANEC but also the self-consistent AANEC.  

\begin{theorem}\label{thm:nowormholes}
	Let $M$ be an asymptotically flat, globally hyperbolic spacetime satisfying on-shell AANEC and the generic condition, with a partial Cauchy surface $S$. Then $M$ cannot have a compactly generated Cauchy horizon $H_+(S)$.
\end{theorem}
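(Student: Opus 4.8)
\emph{Proof proposal.} The plan is to argue by contradiction, adapting Hawking's chronology protection argument \cite{hawking1992chronology} so that the pointwise energy condition used there is replaced by the on-shell AANEC together with the generic condition. Suppose that $H_+(S)$ is non-empty and compactly generated. The first task is to extract from this hypothesis a suitable null geodesic. Compact generation means that the past-directed null generators of $H_+(S)$, extended maximally, eventually enter and thereafter remain within a fixed compact set $K$. A limit-curve analysis then produces a generator $\eta$ of $H_+(S)$ that is totally imprisoned in $K$ (a ``fountain'' in Hawking's terminology). Since $\eta$ lies in the achronal closed set $H_+(S)$ it is achronal, and since it is imprisoned with no past endpoint one argues that it is past-complete; global hyperbolicity and asymptotic flatness of $M$ are then used to arrange that the averaging can be carried out over a \emph{complete} achronal null geodesic. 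Establishing completeness and achronality of this geodesic --- in particular ruling out imprisoned incompleteness and controlling the future-directed extension of the fountain --- is the delicate geometric core of the proof, and I expect it to be the main obstacle; it is essentially the part of \cite{Graham:2007va} that goes beyond Hawking's original analysis.

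Granting such a complete achronal null geodesic $\eta$ with affine tangent $k^a$, the second step is to feed in the matter hypotheses. Because the configuration is on-shell, Einstein's equation (\ref{eqn:EE}) holds; contracting it twice with the null vector $k^a$ annihilates the trace term, so that $R_{ab}k^ak^b = 8\pi T_{ab}k^ak^b$ along $\eta$. Applying the on-shell AANEC to $\eta$ --- using, if needed, one of the extensions of the averaged condition to integrals that are not absolutely convergent --- then yields
\be
\int_{-\infty}^{\infty} \mathrm{d}\lambda\ R_{ab}k^ak^b \ge 0 \,.
\ee

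The third step is a focusing argument of Borde--Roman type \cite{Borde:1987b}. For an irrotational null geodesic congruence containing $\eta$, with expansion $\theta$ obeying Raychaudhuri's equation (\ref{eqn:raynull}), so that $\mathrm{d}\theta/\mathrm{d}\lambda \le -R_{ab}k^ak^b-\theta^2/(n-2)$, the displayed averaged bound together with the generic condition (which guarantees that $\sigma_{ab}\sigma^{ab}+R_{ab}k^ak^b$ does not vanish identically along $\eta$) forces $\eta$ to contain a pair of conjugate points: were there none, $\theta$ would stay finite along all of $\eta$ and the standard integration of the Raychaudhuri inequality would make the averaged curvature bound and the generic condition incompatible. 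But between a pair of conjugate points on a null geodesic there is a timelike curve, so $\eta$ fails to be achronal, contradicting the conclusion of the first step. Hence $H_+(S)$ cannot be compactly generated, which is the assertion. Finally, in the genuinely semiclassical setting ``on-shell'' is to be read as the semiclassical Einstein equation (\ref{eqn:SEE}) with the AANEC imposed on the expectation value of the renormalised stress tensor, and the argument above then applies verbatim.
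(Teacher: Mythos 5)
Your proposal cannot be checked against a proof in the paper, because the paper does not prove Theorem \ref{thm:nowormholes}: it quotes the result from Graham and Olum \cite{Graham:2007va}, which generalizes Hawking's chronology-protection argument \cite{hawking1992chronology}. Measured against that cited argument, your outline does follow the intended strategy: extract Hawking's ``base set'' of horizon generators imprisoned in the compact set, use achronality of $H_+(S)$, convert the on-shell AANEC into $\int \mathrm{d}\lambda\, R_{ab}k^ak^b\ge 0$ via Einstein's equation, and then use a Borde-type focusing theorem \cite{Borde:1987b} together with the generic condition to produce a pair of conjugate points, which is incompatible with achronality. Your second and third steps are sound.

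The genuine gap is exactly the step you flag and then wave through, and the one argument you do offer there is invalid. The AANEC applies to \emph{complete} achronal null geodesics, and completeness does not follow from the generator being inextendible, past-endpoint-free and totally imprisoned in a compact set: imprisoned inextendible null geodesics can be incomplete, the generator of the Misner-space Cauchy horizon (an almost-closed null geodesic whose affine parameter converges as it winds) being the standard example. That example is of course excluded from the theorem by the generic condition, but it refutes the inference ``imprisoned with no past endpoint, hence past-complete'', and appealing to global hyperbolicity and asymptotic flatness ``to arrange'' a complete achronal geodesic is not an argument. Similarly, achronality must hold for the \emph{full inextendible} geodesic, which requires Hawking's stronger property of the base set --- that its generators remain in the (achronal) horizon when followed to the future as well as to the past --- not merely that a past-directed generator segment lies in $H_+(S)$. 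Supplying completeness in both directions and achronality of the full extension is precisely the content of \cite{Graham:2007va} beyond Hawking, so as it stands your text is a correct plan whose central geometric step is missing rather than a proof. A minor further point: the null generic condition is the statement that $k_{[a}R_{b]cd[e}k_{f]}k^ck^d$ does not vanish identically along the geodesic, which drives focusing through the geodesic deviation equation; it is not the statement that $\sigma_{ab}\sigma^{ab}+R_{ab}k^ak^b$ is somewhere nonzero.
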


By Olum's earlier proof \cite{Olum:1998mu}, this theorem also prohibits warp drive spacetimes. There is evidence (see Sec.~\ref{subs:AANEC}) that AANEC is generally obeyed in semiclassical gravity making this theorem a powerful argument against the existence of exotic spacetimes. 

So what possibilities remain? The main proposals for time machines that do not contradict Graham and Olum's theorem involve spacetimes that are not asymptotically flat. Perhaps the simplest case is a cylindrical universe or the little more complicated G\"odel universe \cite{godel1949example}. Of course such possibilities are easily ruled out by observations and there is no hope of an advanced civilization ever constructing them. A similar idea is the Mallett time machine \cite{mallett2003gravitational} which utilizes an infinite dust cylinder. As long as the cylinder is infinite the spacetime is not asymptotically flat and thus it evades Hawking's or Graham and Olum's theorems. But such a cylinder requires infinite energy and so it is impossible to construct. Mallett's proposal for a finite cylinder contradicts existing theorems as was explicitly shown by Everett and Olum \cite{Olum:2004kz}. To evade the problem of constructing an infinite cylinder, Gott \cite{Gott:1990zr} suggested the existence of a time machine made of infinite cosmic strings, very thin, but massive objects produced by some theories of early universe phase transitions. So far there are only observational constraints on the existence of cosmic strings. We should note that even if infinite cosmic strings exist, they cannot be manipulated so Gott's time machine can only be discovered.\footnote{We thank Ken Olum for this observation.}

Ori investigated a different possibility \cite{Ori:2007kk}. He suggested a time machine in an asymptotically flat spacetime without the presence of negative energy. It evades the previous theorems, because the horizon in his model is not compactly generated. Criticism of the model \cite{Eveson:2007ns} involves the unavoidable presence of naked singularities or the enlargement of an initially finite region to form what is called an ``internal infinity''.

Hawking in his influential 1992 work suggested that all the constructions involving causality violations are impossible \cite{hawking1992chronology}. His proposal is known as the ``Chronology Protection Conjecture'':
\begin{conjecture}
	\label{conj:chronologyprotection}
	The laws of physics do not allow the appearance of closed timelike curves.
\end{conjecture}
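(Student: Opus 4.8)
The plan is not to prove the conjecture outright --- it is open --- but to lay out the research programme by which one would reduce it to the achronal averaged null energy condition (AANEC) and then identify the genuine obstructions. The first and most secure step is already in hand as Theorem~\ref{thm:nowormholes}: in an asymptotically flat, globally hyperbolic spacetime satisfying the on-shell AANEC and the generic condition, no compactly generated Cauchy horizon can develop from a partial Cauchy surface. By Olum's superluminal-travel argument (Definition~\ref{def:super} and the Raychaudhuri estimate following it) the same input also excludes warp-drive geometries of Alcubierre and Krasnikov type, while Tipler's strengthening of the Geroch--Hawking topology-change theorem (the unnumbered theorem preceding Conjecture~\ref{conj:chronologyprotection}) shows that topology change in a compact region forces closed timelike curves or singularities once the WEC, or more robustly the AANEC along the relevant generators, is assumed. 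Thus a single ingredient --- the AANEC, valid for all ``physically reasonable'' on-shell matter --- would settle the conjecture for the asymptotically flat sector and for all the standard wormhole, warp-drive and topology-change mechanisms.

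The next step is therefore to establish the AANEC, and I would attack it in layers. For free quantum fields in Minkowski space one builds on Verch's theorem in $n=2$ and, in higher dimensions, on the entropy- and modular-theoretic arguments (Wall; Faulkner et al.; Ceyhan--Hartman) together with the AdS/CFT route of Kelly--Wall; for curved backgrounds one starts from a worldline QEI for the null-contracted stress tensor, takes the limit along timelike geodesics boosted towards the null geodesic as in Fewster--Olum--Pfenning, and controls the curvature corrections as in Kontou--Olum, pushing those estimates beyond the small-curvature regime by inserting the semiclassical Einstein equation~(\ref{eqn:SEE}) so that $R_{ab}\dot\gamma^a\dot\gamma^b$ is itself bounded through $\langle T^{\mathrm{ren}}_{ab}\rangle_\omega\dot\gamma^a\dot\gamma^b$. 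The final requirement is to show that the only surviving violations are the Planck-scale transversal ones of Flanagan--Wald type, which by hypothesis fall outside the domain of validity of semiclassical gravity and must be deferred to a full quantum-gravity treatment --- so that the ``laws of physics'' invoked in the conjecture are ultimately those of quantum gravity.

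The hard part is threefold. First, no worldline QEI along null geodesics is known in dimension $n\ge 3$, so the timelike-to-null limiting argument is delicate and is not obviously uniform in strong curvature; the argument against null QEIs (the Fewster--Roman counter-example, with its possible resolution via a UV cutoff \`a la Freivogel--Krommydas) shows the subtlety involved. Second, interacting quantum fields lack a canonical renormalized stress tensor in a general curved spacetime, so the class of ``physically reasonable matter'' is not yet mathematically precise; without a sharp definition one cannot claim a universal AANEC, and the present evidence (Verch in $n=2$, Fewster--Hollands for $2$d CFTs, the AdS/CFT heuristics) stops short of a general theorem. Third, the entropy-based derivations are themselves paradoxical in that they use a time-orientation-dependent input (the generalized second law) to derive a time-orientation-independent conclusion, so their logical status must be clarified before they can be used in a proof.

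The step for which I have no clear strategy, and which in my view is the principal obstruction, is the non-asymptotically-flat sector: the G\"odel universe, the infinite Mallett cylinder, and Gott's infinite-cosmic-string configuration all evade both Theorem~\ref{thm:nowormholes} and the Geroch--Hawking theorem precisely because they are not asymptotically flat and their Cauchy horizons are not compactly generated, yet their exclusion is currently argued only on the grounds of requiring infinite total energy or being observationally absent. To turn the conjecture into a theorem one would have to (a) prove a general ``no construction of closed timelike curves from regular finite-energy initial data'' result --- showing that every such configuration demands either infinite energy or an internal-infinity/naked-singularity pathology, in the spirit of the Eveson-type criticism of Ori's model --- and (b) reformulate the conjecture accordingly, to assert that the laws of physics forbid closed timelike curves in the causal future of a partial Cauchy surface on which none are present. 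Item (a) appears to require a strong form of cosmic censorship, and it is the absence of such a result, rather than any gap in the wormhole/warp-drive analysis, that keeps the Chronology Protection Conjecture a conjecture.
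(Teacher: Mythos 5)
The statement is Hawking's Chronology Protection Conjecture, which the paper presents as an open conjecture and explicitly says is unlikely to be settled before a full theory of quantum gravity, so there is no proof in the paper against which your argument can be checked. Your proposal correctly treats it as open and assembles the same body of evidence the paper does --- Theorem~\ref{thm:nowormholes} plus Olum's superluminal-travel argument for the asymptotically flat, compactly generated sector, the AANEC as the key unproven input, and the non-asymptotically-flat and Planck-scale loopholes as the remaining obstructions --- so it is consistent with the paper's treatment.
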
 
While evidence is so far in favour of the conjecture, due to its generality, there is little hope it will be proven in the near future or at least until we have a complete theory of quantum gravity. 

Finally we should mention that the so-called ``long'' wormholes, lie outside of the scope of the Graham-Olum theorem. They are called ``long'' because it takes longer to go through the wormhole than through the ambient space and so no causality violations occur. Such a wormhole was recently proposed by Maldacena, Milekhin and Popov \cite{Maldacena:2018gjk}. Their model is a solution of an Einstein-Maxwell theory with charged massless fermions which give rise to negative energy. Since the wormhole is ``long'' the null geodesics passing through it are chronal thus not requiring the violation of AANEC.

\subsection{Cosmological applications}
\label{sub:cosmology}

In this section we investigate some cosmological consequences of the energy conditions. We have already seen Hawking's singularity theorem in Section \ref{sub:singularity}, which is relevant for the big bang, but there are also results that concern the future of the universe.

The most basic cosmological models assume that the universe is homogeneous and isotropic, leading to a
Friedmann-Lema\^{i}tre-Robertson-Walker (FLRW) spacetime with a metric of the form
\be
g=-\mathrm{d}\tau^2+a(\tau)^2h\,,\nonumber
\ee
where $h$ is a homogeneous and isotropic spatial metric. The scale factor $a(\tau)$ can vary with time and is related to the matter content by Einstein's equation (\ref{eqn:EE}). For the purposes of this section we will study Einstein's equation with a cosmological constant $\Lambda>0$, i.e.
\be
\label{eqn:EELambda}
G_{ab}+\Lambda g_{ab}=8\pi T_{ab}\,.
\ee
The cosmological constant is a simple and convenient explanation for the accelerated expansion of the universe, which is astronomically observed. Indeed, in the absence of matter, but with $\Lambda>0$, we find the de Sitter universe, which is a FLRW spacetime with
\be
a(\tau)=e^{H\tau}\nonumber
\ee
for the constant Hubble expansion rate
\be
\label{def:Hubble}
H=\sqrt{\frac13\Lambda}\,.
\ee

Although the FLRW model gives a fairly accurate description of the large scale structure of our universe, it does not explain how the universe became so homogeneous and isotropic. This homogeneity and isotropy even extend across regions of the universe which are causally disconnected. This surprising observational fact is called the \emph{horizon problem} and it is generally accepted that it requires an explanation. It is unsatisfactory to merely assume that our universe was described at very early times by very special homogeneous and isotropic initial conditions. A much more promising explanation, which is generally accepted, is that any initial inhomogeneities and anisotropies were diluted by inflation. The expectation that a generic universe with $\Lambda>0$ and reasonable matter will locally approach the de Sitter universe at late times is called the \emph{Cosmic No-Hair Conjecture}
(cf.~\cite{Andreasson2016} Conjecture 11 for a mathematically precise formulation). The characterization of reasonable matter in this statement typically involves the energy conditions.

In a short and very elegant paper, Wald \cite{Wald:1983ky} proved the Cosmic No-Hair Conjecture under the assumptions that the SEC and the DEC hold and that the universe is homogeneous with a positive initial expansion and a negative spatial scalar curvature (which excludes the Bianchi type IX models). The argument focuses on $n=4$ and notes that the Raychaudhuri equation (\ref{eqn:raytime}) for the congruence of comoving observers can be written as
\be
\frac{\mathrm{d} \theta}{\mathrm{d}\tau} =\Lambda - 8\pi\left(T_{ab}-\frac{T}{2}g_{ab} \right)t^at^b-\sigma_{ab} \sigma^{ab} -\frac13\theta^2
\,,\nonumber
\ee
where we noted that $\omega_{ab}=0$ and we eliminated $R_{ab}$ using Einstein's equation (\ref{eqn:EELambda}). Using the SEC and $\sigma_{ab} \sigma^{ab}\ge0$ this leads to the estimate
\be
\label{ineq:ray}
\frac{\mathrm{d} \theta}{\mathrm{d}\tau} \le \Lambda - \frac13\theta^2\,.
\ee
The initial value constraint $G_{ab}t^at^b-\Lambda=8\pi T_{ab}t^at^b$ can be written as
\be
\label{eq:constraint}
\theta^2=3\Lambda+\frac32\sigma_{ab}\sigma^{ab}-\frac32\ ^{(3)}R+24\pi T_{ab}t^at^b\,.
\ee
When the WEC holds and since the spatial curvature is $^{(3)}R\le0$ this leads to $\theta^2\ge3\Lambda$. If $0<\theta<\sqrt{3\Lambda}$ at some initial time, then one can integrate the differential inequality (\ref{ineq:ray}) to find
\be
\theta(\tau)\le \frac{\sqrt{3\Lambda}}{\tanh(H(\tau-\tau_0))}\nonumber
\ee
for some integration constant $\tau_0$. This shows that $\theta$ converges exponentially to the constant value $\sqrt{3\Lambda}$.
Similarly we see from (\ref{eq:constraint}) that $0\le T_{ab}t^at^b\le \theta^2-3\Lambda$, so $T_{ab}t^at^b\to 0$ exponentially fast. The exponential convergence $T_{ab}\to 0$ of the other components of the stress tensor then follows from the DEC.

The result of \cite{Wald:1983ky} is sometimes known as the \emph{Cosmic No-Hair Theorem}. For a discussion of more recent work in this direction we refer to \cite{Andreasson2016}. Although the elegant argument of \cite{Wald:1983ky} explains how initial anisotropy can be diluted, it does not explain where homogeneity comes from. Moreover, under the assumptions of \cite{Wald:1983ky} the universe cannot exit a period of inflation, which is not in line with current cosmological models. For more realistic inflation models the assumptions of SEC and DEC (and of spatial homogeneity) would have to be relaxed. In a nice investigation, Maleknejad and Sheikh-Jabbari \cite{PhysRevD.85.123508} considered inflationary models satisfying the SEC and the WEC and with a time-dependent positive cosmological ``constant'' $\Lambda$. Under these circumstances they found that the anisotropy of the universe can increase, but it must remain bounded.

\medskip

To conclude we mention a rather different result on the future fate of a spacetime, namely the \emph{Lorentzian Splitting Theorem} \cite{galloway1989lorentzian,eschenburg1988splitting}. This mathematical result provides an interesting characterization of standard ultrastatic spacetimes (cf.~\cite{SANCHEZ2005e455}). These spacetimes can be split into space, given by a Riemannian manifold $(\Sigma,h)$, and time, given by the real line $(\mathbb{R},-\mathrm{d}\tau^2)$, which combine to a spacetime
$M=(\mathbb{R}\times\Sigma,g)$ with metric
\be
g=-\mathrm{d}\tau^2+h\,.\nonumber
\ee
Note that $h$ is independent of the time $\tau$ and that $\chi:=\partial_{\tau}$ is a complete timelike Killing vector field which is irrotational. $M$ is globally hyperbolic if and only if $(\Sigma,h)$ is complete.

In a standard ultrastatic spacetime the energy conditions take a particularly simple form, because the Ricci curvature $R_{ab}$ is purely spatial, i.e.~$R_{ab}\chi^b=0$. The WEC, SEC, DEC and NEC all reduce to $R_{ab}x^ax^b\ge0$ for all vectors $x^a$ tangent to $\Sigma$. For any fixed $x\in\Sigma$ the \emph{static curve} $\gamma_x(\tau):=(\tau,x)$ is a complete timelike geodesic. Moreover, each $\gamma_x$ maximizes the length between every pair of points on this geodesic. Such a geodesic is called a \emph{complete timelike line}. The existence of such complete timelike lines is surprisingly special, as the following Lorentzian Spliting Theorem shows \cite{galloway1989lorentzian,eschenburg1988splitting}:

\begin{theorem}
	If $M$ is a connected globally hyperbolic spacetime which satisfies the SEC and which contains a complete timelike line $\gamma$, then $M$ is isometric to a standard ultrastatic spacetime in which $\gamma$ is a static curve.
\end{theorem}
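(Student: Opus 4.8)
\emph{Proof strategy.} The plan is to run the Lorentzian analogue of the Cheeger--Gromoll splitting theorem, in the form developed by Eschenburg, Galloway and Newman, with the curvature input supplied entirely by the timelike convergence condition (\ref{eqn:timelikeconvergencecondition}), which is the geometric form of the SEC. Parametrise the complete timelike line $\gamma$ by arc length and work inside the open connected region $U:=I^{+}(\gamma)\cap I^{-}(\gamma)$ of points lying both to the future of some $\gamma(-s)$ and to the past of some $\gamma(s)$; global hyperbolicity of $M$ guarantees that $U\supset\gamma$ carries a finite, continuous Lorentzian distance function $d$ and that length-maximising causal geodesics between causally related points exist. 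On $U$ I would introduce the forward and backward Busemann functions
\[
b^{+}(q):=\lim_{s\to\infty}\bigl(s-d(q,\gamma(s))\bigr),\qquad
b^{-}(q):=\lim_{s\to\infty}\bigl(s-d(\gamma(-s),q)\bigr).
\]
Because $\gamma$ is maximising one has $d(\gamma(-s),\gamma(s))=2s$; the reverse triangle inequality then makes $s\mapsto s-d(q,\gamma(s))$ non-increasing and bounded below, so the limits exist and define continuous (indeed locally Lipschitz) functions, and applying the same inequality to $\gamma(-s)\le q\le\gamma(s)$ gives $b^{+}+b^{-}\ge0$ on $U$, with equality along $\gamma$, where $b^{+}(\gamma(t))=t$ and $b^{-}(\gamma(t))=-t$.

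The next step is the regularity/maximum-principle argument, which is the technical heart of the proof. For fixed $s$ the function $q\mapsto d(q,\gamma(s))$ is smooth away from the cut locus, and substituting the congruence of maximising geodesics to $\gamma(s)$ into the Raychaudhuri equation (\ref{eqn:raytime}) and using $R_{ab}t^{a}t^{b}\ge0$ bounds the mean curvature of its level sets, hence the d'Alembertian of $b^{+}_{s}:=s-d(\cdot,\gamma(s))$, from above by a quantity that tends to $0$ as $s\to\infty$. In the limit this says, in the barrier (support-function) sense, that $b^{+}$ and $-b^{-}$ are respectively a super- and a subsolution of $\Box u=0$, with $-b^{-}\le b^{+}$ on $U$ and equality along $\gamma$. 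The Lorentzian strong maximum principle (a Calabi-type argument) then forces $b^{+}=-b^{-}$ on an open neighbourhood of $\gamma$; this coincidence set is also closed by continuity of $b^{\pm}$, so by connectedness of $U$ one gets $b:=b^{+}=-b^{-}$ throughout $U$. Having matching upper and lower barriers touching to second order, $b$ is smooth, $g(\nabla b,\nabla b)=-1$, its integral curves are timelike geodesics forming a smooth unit normal congruence to the level sets $\Sigma_{c}:=\{b=c\}$, and the two one-sided bounds collapse to $\Box b\equiv0$.

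It then remains to exploit (\ref{eqn:raytime}) once more. Since the expansion of the $\Sigma_{c}$ is $\theta=\Box b\equiv0$, the Raychaudhuri equation and $R_{ab}\nabla^{a}b\,\nabla^{b}b\ge0$ force the shear to vanish and $R_{ab}\nabla^{a}b\,\nabla^{b}b=0$; hence the second fundamental form of every $\Sigma_{c}$ vanishes, which together with $\nabla_{\nabla b}\nabla b=0$ gives $\mathrm{Hess}\,b\equiv0$, i.e.~$\nabla b$ is a complete, parallel (in particular irrotational Killing) unit timelike vector field. Its flow identifies $U$ isometrically with $(\mathbb{R}\times\Sigma,-\mathrm d\tau^{2}+h)$, where $h$ is the $\tau$-independent induced metric on $\Sigma:=\Sigma_{0}$ and $\gamma$ becomes a static curve. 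Finally one checks that the splitting region is all of $M$: the spacelike hypersurface $\Sigma_{0}$ is achronal, and using that $U$ is a globally hyperbolic neighbourhood of the complete line $\gamma$ carrying this product structure one shows that $\Sigma_{0}$ is a Cauchy surface of $M$, whence $M=D(\Sigma_{0})\subset U$ and $(\Sigma,h)$ is complete. The main obstacle is the second paragraph --- the rigorous handling of the Busemann functions through the cut locus, the barrier formulation of the d'Alembertian comparison, and the Lorentzian strong maximum principle and smoothness bootstrap --- together with the secondary (and itself non-trivial) point of verifying that $U$ exhausts $M$.
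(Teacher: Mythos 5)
The paper does not actually prove this statement: it is the Lorentzian Splitting Theorem, quoted from Eschenburg and Galloway \cite{eschenburg1988splitting,galloway1989lorentzian}, so there is no in-paper argument to compare against. Your outline reproduces the standard route of those references --- Busemann functions $b^{\pm}$ of the two rays of the line, the reverse triangle inequality giving $b^{+}+b^{-}\ge 0$ with equality on $\gamma$, d'Alembertian comparison via Raychaudhuri and the timelike convergence condition in the support-function sense, a Calabi-type strong maximum principle forcing $b^{+}=-b^{-}$, smoothness and vanishing Hessian of the resulting function, and the flow of its gradient producing the metric product --- so as a strategy it is sound and is essentially the proof in the cited literature. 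It is, however, a strategy rather than a proof: the steps you yourself flag (local Lipschitz continuity and co-ray behaviour of Lorentzian Busemann functions near $\gamma$, where the Lorentzian distance function is far worse behaved than its Riemannian counterpart; the barrier formulation of the $\Box$-comparison; the maximum principle for merely continuous functions; the smoothness bootstrap) are precisely where the published proofs spend their effort, and none of it is carried out here.

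One further caution concerns the globalization. In the literature the identity $b^{+}=-b^{-}$ and the splitting are first obtained only on a tubular neighbourhood of $\gamma$, and the extension to all of $M$ proceeds by showing that every static curve of the local splitting is again a complete timelike line and then running an open-and-closed continuation argument on the set of points admitting such a splitting. Your alternative --- establishing $b^{+}=-b^{-}$ on all of $U=I^{+}(\gamma)\cap I^{-}(\gamma)$ by an open/closed argument and then showing $\Sigma_{0}$ is Cauchy so that $M=D(\Sigma_{0})$ --- is not obviously easier: openness of the coincidence set away from $\gamma$ requires the co-ray and regularity control to hold there (which is exactly what is only known near the line at that stage), and the Cauchy property of $\Sigma_{0}$ is essentially equivalent to the global splitting itself, so this step should not be treated as routine.
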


\section{Outlook}
\label{sec:outlook}

Since their invention, the energy conditions of general relativity have undergone various developments, which were often driven by the tension between their different purposes. The desire for strong consequences seems incompatible with the desire for as general validity as possible. Challenges from QFT, where all pointwise energy conditions are necessarily violated, have led to a variety of QEIs and averaged energy conditions, which complicate the picture. Because even the averaged energy conditions seem to have relatively innocent looking counter-examples, some authors have already sounded the death knell for the energy conditions and criticized them as ad hoc assumptions to prove theorems \cite{Barcelo:2002bv}. Moreover, the averaged conditions have been criticized for their lack of a clear interpretation \cite{Curiel:2017}, whereas the lower bounds in QEIs seem to be theory-dependent, so they do not characterize normal quantum matter with a single general condition.

Based on our review we would like to draw a slightly more optimistic balance, but we do see a need for further research and clarification at several points.

\medskip

Let us start with QEIs along timelike geodesics. Recall from Section \ref{subs:uncertainty} that QWEIs in static spacetimes have been connected to global stability properties and the thermodynamics of quantum fields \cite{FewsterVerch2003}, which gives us a conceptually clear reason to consider them. Even though QEIs have not yet led to a clearly stated general condition, this does not mean that such a condition cannot be given. If an appropriate (renormalized) stress tensor is given, one can certainly postulate the existence of a lower bound for suitable averages. Although the lower bound may be theory-dependent (e.g.~for $n\ge1$ free scalar fields the bound scales with $n$), some of its characteristics, e.g.~the behaviour under scaling of the averaging function, might be formulated quite generally. In this sense, QEIs could well be formulated as a general condition, even for interacting fields. Once an appropriate formulation along these lines has been achieved one is of course still left with the tension between the range of validity and the desire for strong geometric results.

The violations of the pointwise and averaged energy conditions show that their range of validity is limited. This means in particular that some of their consequences, such as the singularity theorems (cf.~Sec.~\ref{sub:singularity}), are also restricted in their validity. However, this does not immediately make the energy conditions useless. Indeed, many energy conditions do cover physically interesting examples and for any given theory it is often possible to find out whether it satisfies a given condition (certainly for pointwise conditions). In this way the limited range of validity can often be taken into account. Moreover, we have seen strong indications that the range of validity of the energy conditions is not random, but it is related to the validity of suitable QEIs, cf.~Section \ref{subs:claslimits}. At least for free fields the validity of a state-independent QEI implies the validity of the corresponding pointwise energy condition, even off-shell, cf.~Theorem \ref{thm:classlimit}. We also expect the converse to be true. Of course there are fields, like non-minimally coupled scalar fields, which do not satisfy pointwise energy conditions (cf.~Sec.~\ref{sub:examples}). However, for these fields one can still often establish state-dependent QEIs (cf.~the comments in Sec.~\ref{sub:examplesQEI}). In suitable situations these can then be used to derive averaged energy conditions for the quantized or classical field (cf.~Sec.~\ref{subs:AANEC} and Sec.~\ref{subs:ANEC} for the classical non-minimally coupled scalar field).

Determining the range of validity of the various energy conditions and QEIs is still an active research topic. One of the main open challenges is to consider interacting quantum fields, which should be formulated in general spacetimes in order to derive the most interesting results. The rigorous formulation of such theories and the axiomatic characterization of their stress tensor still poses many challenging questions, whereas a perturbative approach introduces renormalization ambiguities. However, also free fields give rise to interesting open questions, e.g.~to what extent the range of validity of the AWEC characterizes isolated systems (in a suitable sense of the word), cf.~the discussion in Section \ref{subs:uncertainty}. Because the exact computation of the renormalized stress tensor is often cumbersome, numerical methods may be of great help to investigate their properties and we refer the reader to \cite{Levi:2016quh,PhysRevLett.118.141102,PhysRevD.99.061502} for some recent results in black hole spacetimes.

The above considerations do not immediately apply to the null energy conditions, because the appropriate starting point would be a QEI along null geodesics, which often fails, cf.~\cite{Fewster:2002ne} and the discussion in Section \ref{sub:examplesQEI}. This is curious, because the AANEC is the only condition which may well be valid in full generality on-shell. Starting with classical scalar fields we have seen in Section \ref{subs:ANEC} that the ANEC holds on-shell for any coupling $\xi$, if we exclude trans-Planckian values. Trans-Planckian values should be viewed with scepticism for several reasons, relating to the possible ill-posedness of the initial value formulation or to the change of sign in the effective Newton's constant. Even for quantum fields, in the semiclassical approach and using a transversal smearing over Planck scale distances, there are no known counter-examples to the on-shell AANEC (cf.~Sec.~\ref{subs:AANEC}). The condition has also been shown to hold for test fields in Minkowski space in many cases, including in particular various interacting quantum fields in two dimensions (including massive theories and CFTs) and there has been recent work relating it to developments in black hole thermodynamics and quantum gravity. However, its general validity, and the underlying ideas and techniques, still require further investigation.

If we believe that the AANEC is generally valid in curved spacetimes, then we may hope that there are deeper reasons for it, which lie in the realm of quantum gravity. Note, however, that this is not the only way in which the AANEC can teach us something about quantum gravity. Several of the applications discussed in Section \ref{sec:applications} only assume the AANEC, so they can be expected to remain valid in quantum gravity (although their geometric formulation possibly needs to be adapted). This is true in particular for the Topological Censorship Theorem \ref{thm:topcensorship} of Section \ref{sub:blackholes} and for the strong restrictions on wormholes and warp drives (Thm.~\ref{thm:nowormholes} in Sec.~\ref{sub:wormholes}). It has also been conjectured that causality violations cannot appear in physics, cf.~Conjecture \ref{conj:chronologyprotection}, and in semiclassical gravity this likely follows from the AANEC. The singularity theorems of Section \ref{sub:singularity}, however, typically require stronger conditions than the AANEC, so they may be circumvented in quantum gravity and singularities might not occur. 
This is particularly interesting in the context of the Cosmic Censorship Conjecture and black hole evaporation.

\medskip

The connections between the original energy conditions for classical matter and the recent developments in QEIs and black hole thermodynamics clearly illustrate the depth of this topic and we expect that the energy conditions will remain a very active area of research. The deep question whether the AANEC is a truly fundamental energy condition that remains valid in quantum gravity and the search for a deeper explanation will pose an important challenge for the future. The investigation of the consequences of the AANEC may very well turn out to be an important path towards a better understanding of quantum gravity.

\bigskip

	\noindent{\bf Acknowledgements}\\
	It is our pleasure to thank Chris Fewster, Erik Curiel and Ken Olum for stimulating discussions and helpful suggestions. We thank 
	an anonymous referee for carefully checking the manuscript and for suggesting references on the geometric interpretation of the WEC. KS also thanks the GR group at Dublin City University for a useful discussion. EAK's contribution to this work is part of a project that has received funding from the European Union's Horizon 2020 research and innovation programme under the Marie Sk\l odowska-Curie grant agreement No. 744037 ``QuEST''.

\bibliographystyle{unsrt}
\bibliography{review}

\end{document}